\documentclass[acmsmall, nonacm]{acmart}
\AtBeginDocument{%
  }

\setcopyright{none}

\makeatletter
\AtEndPreamble{%
  \def\@acmdefinitionindent{\z@}%
}
\makeatother

\usepackage{booktabs}   
\usepackage{multirow}   
\usepackage{subcaption} 
\usepackage{listings}
\usepackage{parcolumns}
\usepackage[utf8]{inputenc}
\usepackage{mathpartir}

\usepackage[font=small,skip=0pt]{caption}
\usepackage{tikz-cd}
\usepackage{graphicx}

\usepackage{chor}
\usepackage{symbols}

\newcommand{\mech}{Mech\xspace}

\definecolor{kirby}{RGB}{215,72,148}
\definecolor{kirby-blue}{RGB}{68,102,191}
\definecolor{kirby-green}{RGB}{35,135,37}

\usepackage[normalem]{ulem}

\newcommand{\hlb}[1]{{\color{kirby-blue}#1}}

\usepackage{cleveref}
\crefname{example}{example}{examples}
\Crefname{example}{Example}{Examples}
\crefname{proposition}{proposition}{propositions}
\Crefname{proposition}{Proposition}{Propositions}

\usepackage{enumitem}

\begin{document}

\title{\mech: Mechanised Choreographic Programming}

\author{Xueying Qin}
\email{xyqin@imada.sdu.dk}
\orcid{0000-0003-4825-2023}
\affiliation{%
  \institution{University of Southern Denmark}
  \city{Odense}
  \country{Denmark}
}

\author{Marco Peressotti}
\email{peressotti@imada.sdu.dk}
\orcid{0000-0002-0243-0480}
\affiliation{%
  \institution{University of Southern Denmark}
  \city{Odense}
  \country{Denmark}
}

\author{Fabrizio Montesi}
\email{fmontesi@imada.sdu.dk}
\orcid{0000-0003-4666-901X}
\affiliation{%
  \institution{University of Southern Denmark}
  \city{Odense}
  \country{Denmark}
}


\begin{abstract}
\emph{Choreographic programming (CP)} is a programming paradigm for the correct-by-construction development of concurrent and distributed systems: programmers write the intended overall behaviour of a system from a global perspective in a \emph{choreography}, which is then automatically compiled into communicating endpoint programs by a procedure known as \emph{endpoint projection (EPP)}. 
The central promise is that the projected endpoint programs, when executed together, are behaviourally equivalent to the source choreography.

Fulfilling this promise becomes delicate for expressive CP languages. Existing mechanisations of CP treat only restricted fragments, while textbook and general purpose language implementations with rich features leave crucial interactions informal. 
In particular, general branching in knowledge of choice, general recursion, and nondeterministic choice in choreographies have not yet been integrated in a machine-checked theory.

We present \mech, a new mechanisation of CP in Lean 4 that captures these features. 
There are two central technical challenges in our development of \mech.
First, the sketched semantics from the literature does not correctly capture how nondeterministic choice interacts with concurrency.
We therefore formulate new semantics that align nondeterministic choreographic executions with the behaviours of projected endpoint programs.
Second, managing all these features in proofs is complex. 
We address this by uncovering new algebraic laws for choreographies, the operators used in their semantics, EPP, and their combinations.
Using our development, we prove completeness and soundness of EPP and derive communication safety and deadlock-freedom for projected networks, yielding the most extensive mechanised theory of CP to date.

\end{abstract}

\begin{CCSXML}
<ccs2012>
   <concept>
       <concept_id>10010147.10011777.10011014</concept_id>
       <concept_desc>Computing methodologies~Concurrent programming languages</concept_desc>
       <concept_significance>500</concept_significance>
       </concept>
   <concept>
       <concept_id>10010147.10010919.10010177</concept_id>
       <concept_desc>Computing methodologies~Distributed programming languages</concept_desc>
       <concept_significance>500</concept_significance>
       </concept>
   <concept>
       <concept_id>10003752.10010124.10010131.10010134</concept_id>
       <concept_desc>Theory of computation~Operational semantics</concept_desc>
       <concept_significance>500</concept_significance>
       </concept>
   <concept>
       <concept_id>10003752.10010124.10010138.10010142</concept_id>
       <concept_desc>Theory of computation~Program verification</concept_desc>
       <concept_significance>500</concept_significance>
       </concept>
 </ccs2012>
\end{CCSXML}

\ccsdesc[500]{Computing methodologies~Concurrent programming languages}
\ccsdesc[500]{Computing methodologies~Distributed programming languages}
\ccsdesc[500]{Theory of computation~Operational semantics}
\ccsdesc[500]{Theory of computation~Program verification}


\keywords{Choreographic Programming, Mechanised Metatheory}


\maketitle
\thispagestyle{plain}
\pagestyle{plain}

\section{Introduction}
\label{sec:introduction}

Concurrent and distributed systems are traditionally programmed from a \emph{local perspective}, where each endpoint is programmed independently and the overall system behaviour emerges from their interactions.
This approach can lead to two issues: 1) these systems are prone to errors like deadlocks and mismatched communications, which are hard to detect and debug~\cite{LLLG16}; 2) the emergent system behaviours are hard to reason about.

\emph{Choreographic programming} (CP) is a programming paradigm addressing these issues by taking a \emph{global perspective}~\citep{M13:phd,M23}.
A program in CP, called a \emph{choreography}, specifies the intended computation and, like a distributed protocol, the interactions among endpoints.
It can then be automatically compiled via \emph{endpoint projection (EPP)} into endpoint programs that follow the interactions prescribed by the choreography.
The correctness of EPP establishes \emph{behavioural equivalence} between choreographies and their projected programs: every behaviour of a choreography is reflected by its projected program, and vice versa. Consequently, the safety and liveness properties specified by a choreography are guaranteed in endpoint programs.
Many mainstream programming languages that implement CP, such as:
Clojure~\cite{LJ24},
Elixir~\cite{WG25},
Haskell~\cite{SKK23}, 
Java~\cite{GMP24},
Racket~\cite{BH25},
Rust and TypeScript~\cite{BKJSKN25}, and
Unison~\cite{C25}.

Despite this progress, the metatheory of CP remains prone to errors~\cite{M23Errata}, as observed for general concurrency theory~\cite{MS15,SY19}. Most importantly, existing formal developments capture only restricted fragments of choreographic languages. Essential expressive features such as \emph{general branching}, \emph{general recursion}, and \emph{nondeterministic choice} have never been formalised together in a machine-checked theory --- in fact, there have been no mechanised accounts of general branching or nondeterministic choice --- while these constructs substantially increase the expressivity of choreographies.
For instance, general branching enables multi-way protocol choreographies without cascades of binary selections, general recursion allows iterative choreographies to be expressed modularly and succinctly, and nondeterministic choice enables coordination patterns such as barriers and first-come-first-served interactions.

This absence turns out not to be accidental as formalising each of these features is challenging on its own. Moreover, their combination further compounds the complexity of the formalisation by introducing subtle interactions between control flow, concurrency, and EPP.
In particular, nondeterministic behaviours and multi-way branching introduce delicate dependencies during EPP to reconcile alternative behaviours.
When we attempted to mechanise expressive features in the Lean~4 theorem prover, we discovered that the semantics informally described in the literature~\cite{M23} \emph{is not deadlock free and does not correctly capture the semantics of compiled endpoint programs}.

In this paper we present \mech, a semantic foundation for CP, mechanised in Lean~4, revising the core theory in the textbook \emph{Introduction to Choreographies}~\cite{M23} with aforementioned expressive features.
Our paper makes the following contributions:
\begin{itemize}[nosep]
\item \textbf{Mechanised CP theory with expressive features.}
We present the first substantial mechanisation of CP, which contains over 40,000 lines of code in Lean~4, supporting general branching, general recursion with continuations, and nondeterministic choice.

\item \textbf{Correct semantics for nondeterministic choreographies.}
Mechanisation has revealed that the existing semantics suffers from deadlocks and does not capture compiled behaviours.
We introduce a new semantics and prove it correctly models nondeterministic behaviours.

\item \textbf{Machine-checked compilation correctness.}
We establish a strong behavioural equivalence between choreographies and their projected endpoint programs by proving the correctness of EPP.
We give formal characterisations of communication safety and deadlock-freedom and prove that they are guaranteed in
the endpoint programs generated by EPP.

\item \textbf{New metatheoretical results.}
Our work demonstrates that mechanisation can act not only as a tool for increasing confidence in programming language theory, but
also as a \emph{method for discovering new results}.
Our development reveals new algebraic laws governing choreographies and their compilation, providing modularity and compositionality to \mech.
\end{itemize}

The foundational literature of CP focuses on synchronous communication, as asynchrony is generally motivated by implementation-level concerns such as execution efficiency, rather than expressive power. We follow this scoping in \mech, and leave the discussion of asynchronous CP to \cref{sec:related-work,sec:conclusion:future-work}. All results in this paper are mechanised Lean 4, we only present main propositions and theorems in the following sections, omitting the details of the proofs and auxiliary lemmas.

\section{Overview}
\label{sec:overview}
Before diving into our formal development, we first give simple examples to informally illustrate three key components of choreographic programming: choreographies, EPP, and endpoint programs.
The following choreography, $\cn{rei}$, defines a simple protocol where two endpoints (we call them \emph{processes}) $\pid p$ and $\pid q$ collaborate to produce an invoice by $\pid p$ for a reimbursement at $\pid q$'s bank account.
\begin{example}[A Reimbursement Choreography]\label{ex:Cinv}
\[
\cn{rei} \defeq \ccom{\pid{q}}{\fname{coords()}}{\pid{p}}{c} \seq \ccom{\pid{p}}{\fname{invoice(c)}}{\pid{q}}{x} \seq \cnil 
\]
\end{example}
The first \emph{choreographic instruction}, $\ccom{\pid{q}}{\fname{coords()}}{\pid{p}}{c}$, is a communication:
the process $\pid q$ communicates the result of evaluating the local function $\fname{coords()}$ --- computing $\pid q$'s bank account coordinates for reimbursement --- to $\pid p$, which stores it in its local variable $\vname{c}$.
Next, $\pid{p}$ evaluates $\fname{invoice(c)}$ to generate an invoice for the reimbursement  at the received coordinates and sends it to $\pid{q}$, which stores it in the local variable $\vname{x}$. Finally, the choreography terminates with the \emph{terminated choreography} $\cnil$.

The EPP of the choreography above, written as $\epp{\cn{rei}}$, produces a collection of endpoint programs for $\pid{p}$ and $\pid{q}$. Such compositions of endpoint programs are called \emph{networks} \cite{M23}.
\begin{example}[EPP of the Reimbursement Choreography]\label{ex:EPPinv}
\[\epp{\cn{rei}} = \atomicn{p}{\precv{q}{c}\seq\psend{q}{\fname{invoice(c)}}\seq\pnil}\parcomp \atomicn{q}{\psend{p}{\fname{coords()}}\seq\precv{p}{\fname{x}}\seq\pnil} \]
\end{example}
This network consists of the local implementations for the processes $\pid p$ and $\pid q$ from the choreography, composed by the \emph{parallel composition operator} ($\!\parcomp\!$).
The notation $\atomicn{p}{P}$, where $\pid p$ is a process name and $P$ a process term, means that the process $\pid{p}$ executes its own program $P$, recalling distributed process calculi and ambients \cite{CG97,H07,M23}.

We will formally define the syntax and semantics for networks and process terms in \cref{sec:proc-net}, and EPP in \cref{sec:epp}. Here, we give an informal explanation of the example.
The first choreographic instruction $\ccom{\pid{q}}{\fname{coords()}}{\pid{p}}{c}$ is projected onto $\pid{q}$ as a \emph{process instruction} that gets its bank account coordinates by evaluating the local function $\fname{coords()}$ and sends the coordinates to $\pid{p}$, which is written as $\psend{p}{\fname{coords()}}$. Dually, its projection onto $\pid{p}$ is a process instruction that receives the coordinates from $\pid{q}$ and stores it in $\vname{c}$, written as $\precv{q}{c}$.
Similarly, the second choreographic instruction $\ccom{\pid{p}}{\fname{invoice(c)}}{\pid{q}}{x}$ is projected onto $\pid{p}$ as a process instruction sending the result of evaluating $\fname{invoice(c)}$ --- namely, the invoice generated with the received coordinates of $\pid{p}$ --- to $\pid{q}$. Its projection onto $\pid{q}$ is a process instruction receiving the invoice from $\pid{p}$ and storing it in $\vname{x}$.
Finally, the terminated choreography $\cnil$ is projected to the terminated process $\pnil$ on both processes.

As introduced in \cref{sec:introduction}, \emph{general branching} is an expressive feature supported by \mech, which allows processes to select among multiple labels instead of being restricted to binary choices. For instance, we can extend $\cn{rei}$ with \emph{conditional} and \emph{selection} instructions to describe a more complex protocol for invoicing, where $\pid{p}$ (a shop) decides  whether $\pid q$ (a customer) should perform a payment, receive a reimbursement, or neither (in which case no invoice is necessary).
\begin{example}[An Invoicing Choreography]\label{ex:CinvC}
\begin{spreadlines}{0pt}
\begin{align*}
\cn{inv} \defeq &\condif \pe{p}{\fname{pay()}} \condthen \csel{p}{q}{pay}\seq\ccom{\pid{p}}{\fname{invoice()}}{\pid{q}}{x} \seq \cnil\\
&\condelse \condif \pe{p}{\fname{reimburse()}} \condthen \csel{p}{q}{reimburse}\seq \ccom{\pid{q}}{\fname{coords()}}{\pid{p}}{c} \seq \ccom{\pid{p}}{\fname{invoice(c)}}{\pid{q}}{x} \seq \cnil\\
&\condelse \csel{p}{q}{skip}\seq\cnil
\end{align*}
\end{spreadlines}
\end{example}
The choreography $\cn{inv}$ has three cases, depending on the evaluations of the \emph{guard expressions} $\fname{pay()}$ and $\fname{reimburse()}$ by the \emph{guard process} $\pid p$.
Note that, depending on the local choice made by $\pid p$, $\pid q$ exhibits different behaviours: to receive an invoice from $\pid p$ (first branch), to send its bank coordinates to $\pid p$ and then receive an invoice from $\pid p$ (second branch), or to do nothing (third branch).
It is therefore essential that $\pid q$ gets to know what branch $\pid p$ has chosen, so that it can follow the choreography correctly.
The propagation of such \emph{knowledge of choice (KoC)} is a fundamental concept in choreographic languages and is typically implemented by means of \emph{selections} \cite{CDP12,M23}.
For example, in the first branch, the selection instruction $\csel{p}{q}{pay}$ means that $\pid p$ communicates the \emph{selection label} $\lbl{pay}$ to $\pid{q}$. Selection labels are essentially tags that the receiver can use to determine what it should subsequently execute.
In $\cn{inv}$, $\pid q$ receives a different label in each branch and can therefore unambiguously determine that $\pid p$ chooses the first branch if it receives $\lbl{pay}$, the second branch if it receives $\lbl{reimburse}$, or the third branch if it receives $\lbl{skip}$.
As shown in the example, selections are only needed when the guard process makes a choice that affects the behaviour of other processes. Note that KoC can be propagated transitively, for example, $\pid p$ could communicate the choice to another process $\pid r$, which could then inform $\pid q$.

The invoicing choreography $\cn{inv}$ is projected to the networks consisting of $\pid{p}$ and $\pid{q}$, shown below.
\begin{example}[EPP of the Invoicing Choreography]\label{ex:EPPinvC}
\begin{spreadlines}{0pt}
\begin{align*}
\left.
\begin{aligned}[c]
\epp{\cn{inv}} = \
\end{aligned}
\begin{aligned}[c]
\atomicn{p}{&\condif\fname{pay()}\condthen \\&\quad\pselg{q}{pay} \seq\psend{q}{\fname{invoice()}}\seq\pnil\\ &\condelse \condif \fname{reimburse()} \condthen \\&\quad\pselg{q}{reimburse}\seq\precv{q}{c}\seq\psend{q}{\fname{invoice(c)}}\seq\pnil \\ &\condelse \pselg{q}{skip}\seq\pnil }
\end{aligned}
\; \middle|\;
\begin{aligned}[c]
\atomicn{q}{\pid{p}\br\{&\lbl{pay}: \precv{p}{x}\seq\pnil\sep\\&\lbl{reimburse}: \psend{p}{\fname{coords()}} \seq \precv{p}{x} \seq \pnil \sep\\&\lbl{skip}: \pnil \}}
\end{aligned}
\right.
\end{align*}
\end{spreadlines}
\end{example}
The projection of the guard process $\pid p$ is simple and homomorphic, since it is the process evaluating guard expressions of the conditional.
If the guard expression $\fname{pay()}$ evaluates to $\vname{true}$, $\pid{p}$ sends the selection label $\lbl{pay}$ to $\pid{q}$, written as $\pselg{q}{pay}$, and then sends an invoice to $\pid{q}$, followed by a termination. Otherwise, if $\fname{reimburse()}$ is evaluated to $\vname{true}$, $\pid p$ sends the label $\lbl{reimburse}$ to $\pid q$, followed by the endpoint program projected from $\cn{rei}$ onto $\pid{p}$, explained in example~\ref{ex:EPPinv}. The third branch is taken if neither of the guard expressions are evaluated to $\vname{true}$, where $\pid p$ sends the label $\lbl{skip}$ to $\pid q$ and then terminates. The selections are crucial for $\pid p$ to give $\pid q$ the knowledge of which branch it has taken, so that $\pid q$ can behave accordingly.
The projection on $\pid q$ is more interesting: the three branches are \emph{merged} into a \emph{branching term} that awaits to receive a selection label from $\pid p$.
There are three selection labels of $\pid q$'s knowledge: $\lbl{pay}$, $\lbl{reimburse}$, and $\lbl{skip}$.
If $\pid q$ receives the label $\lbl{pay}$, then it expects to receive an invoice from $\pid p$ and stores it in the local variable $\vname{x}$, followed by a termination. If $\pid q$ receives $\lbl{reimburse}$, then it proceeds with the endpoint program projected from $\cn{rei}$ onto $\pid{q}$ as shown in example~\ref{ex:EPPinv}. If $\pid q$ receives $\lbl{skip}$, then it simply terminates. 
The symbols $\oplus$ and $\&$ are standard \cite{HYC16,M23}: they serve as mnemonic for sending an \emph{internal choice} ($\oplus$) and receiving an \emph{external choice} ($\&$), borrowing notation from logic \cite{G87}.
We will formally discuss the selection and merge language constructs under the context of EPP in \cref{sec:epp}.

As mentioned in \cref{sec:introduction}, our theory also supports \emph{choreographic choice}, which is projected to nondeterministic choice in endpoint programs, for modelling nondeterministic behaviours of systems. \emph{General recursion} is another important feature in our theory, which allows us to write recursive choreographies polymorphic over processes (with continuations). We will demonstrate these features as well as their interactions with other constructs in the following sections.

\subsection{Formal Preliminaries}
\paragraph{Parameters}
\label{sec:parameters}
Throughout our formal development, we consider processes running concurrently and interacting with each other by message passing. For generality, our models do not fix internal details of local computations at processes.
Specifically, we require the following parameters.
\begin{itemize}
\item A set of process names ($\setof{PName}$) used to identify processes, ranged over by $\pid p$.
\item A set of values ($\setof{Val}$) that processes can compute, ranged over by $v$.
\item A set of variables ($\setof{Var}$) that processes use to store values, ranged over by $x$.
\item A set of function names ($\setof{F}$) used to identify local functions that processes can invoke.
\item A local evaluation relation $\downarrow$ that models local computations at processes. This relation can be nondeterministic. We present it formally in \cref{chor:semantics}.
\item A set of selection labels ($\setof{SLabel}$) used by processes to offer alternative behaviours.
\item A set of procedure names ($\setof{ProcName}$) used to identify recursive procedures in programs.
\end{itemize}%

\paragraph{Function Support and Finite Functions}
\label{sec:finfun}
For any function $f$ whose codomain has a clear `zero' element ($0$), we write $\support(f)$ for its support: $ \support(f) \triangleq \{ a \in A \mid f(a) \neq 0 \}$.
Our development also makes use of functions with finite support. Lean's mathematical library (Mathlib) \cite{mathlib2020} provides a formalisation for such functions (\code{Finsupp}), but some of its accompanying operations are noncomputable on purpose to avoid decidability requirements (e.g., decidability of checking whether an element is equal to zero). Since we care about computing these functions and their transformations, we introduce a new formalisation called \code{FinFun} (for `finite functions'), which technically consists of a function $f : A \to B$ accompanied by a proof that only finitely many elements of $A$ are mapped to non-zero elements in $B$. This invariant is preserved by our APIs for finite functions. Both \code{FinFun} and these APIs have already been included in CSLib, the Lean Computer Science Library~\cite{cslib2026}. In the remainder, we annotate function names with a hat to indicate that they are finite functions, e.g., $\hat f$, and write their type as $\hat f : A \finfun B$.


\section{Choreographies}
\label{sec:chor}
We now move to our formal development, starting with the syntax and semantics of choreographies.

\subsection{Syntax}
\label{chor:syntax}
\begin{figure}
\vspace{-1.0em}
\begin{align*}
    \text{Choreographic Procedure Definitions ($\mathfrak{C}$)}\quad \mathscr{C} \ \metaDeff \ &\{X_i(\pids{p_i}) = C_i\}_{i \in \idx{I}}
\end{align*}
\vspace{-2.0em}
\begin{spreadlines}{0pt}
\begin{alignat*}{3}
    \mathrm{Choreography\;(\setof{Chor})}\ &&
    C \metaDeff&\ I \seq C \cmid \cnil \quad\quad\quad \mathrm{Expression\;(\setof{Expr})}\
    e \metaDeff \ v \cmid x \cmid f(\vec{e})\\
    \mathrm{Choreographic\;Instruction\;(\setof{Instr})}\ &&
    I \metaDeff&\ \pid{p}.x \metaDef e \cmid \gencom \cmid \gensel \\ && \cmid &\ \condif\pid{p}.e\condthen C_1\condelse C_2  \cmid C_1 \choice_\pid{p} C_2 \cmid X(\pids{p}) \cmid \pid{q} : X(\pids{p}).C
\end{alignat*}
\end{spreadlines}
\caption{\label{chor:syntax:fig}Syntax of Choreographies in \mech}
\vspace{-1.0em}
\end{figure}
\Cref{chor:syntax:fig} shows the formal syntax of choreographies in \mech.
Choreographies ($C$) and their instructions ($I$) are mutually defined.
A choreography is either a sequence of instructions, constructed with the typical sequence operator ($;$), or a $\cnil$, representing a terminated choreography.
For readability, we often omit trailing $\cnil$s in examples.
There are three basic instructions.
In a \emph{local assignment} $\assign{p}{x}{e}$, a process $\pid p$ evaluates the expression $e$ and stores the result in its local variable $x$. In a (value) \emph{communication} $\gencom$, $\pid p$ sends the result of evaluating $e$ to $\pid q$, which stores it in a local variable $x$. In a \emph{selection} $\gensel$, $\pid p$ sends a selection label $\albl$ to $\pid q$ (recall selection labels in \cref{sec:overview}).

There are two instructions expressing choices in \mech. One is the standard \emph{conditional} construct for choreographies, written $\condif\pid{p}.e\condthen C_1\condelse C_2$, where a single process $\pid p$ evaluates $e$ to decide whether the choreography should continue as $C_1$ (if $e$ is evaluated to true) or $C_2$ (otherwise), as shown in the invoicing choreography $\cn{inv}$ in example~\ref{ex:CinvC}.
The other is the \emph{choreographic choice} $C_1 \choice_\pid{p} C_2$ where -- differently from conditionals -- the resolution of whether $\pid{p}$ continues with $C_1$ or $C_2$ is determined by the behaviour of a system instead of a single process.
As such, choreographic choice can express nondeterministic system behaviour that cannot be modelled by conditionals (even considering that guard expressions are evaluated nondeterministically), including patterns like barriers, locks, and first-come-first-served~\cite{M23}. We show this in the next example.
\begin{example}[A First-Come-First-Served Choreography~\cite{M23}]\label[example]{ex:chor-fcfs}
The choreography $\cn{fcfs}$ models two processes $\pid{p_1}$ and $\pid{p_2}$ competing to gain access to shared resource on another process $\pid{s}$, by sending requests to $\pid{s}$ indicated by a selection label $\lbl{req}$.
$\pid s$ replies $\lbl{ok}$ to the requests in a first-come, first-served manner.
The winner can work with $\pid{s}$, modelled by the \emph{procedure} $W$.
\begin{spreadlines}{0pt}
\begin{align*}
\cn{fcfs} &\defeq (\csel{p_1}{s}{req}\seq \csel{s}{p_1}{ok}\seq W(\pid{p_1},\pid{s})\seq\csel{p_2}{s}{req}\seq \csel{s}{p_2}{ko}\\
&\choicep{s}\csel{p_2}{s}{req}\seq \csel{s}{p_2}{ok}\seq W(\pid{p_2},\pid{s})\seq\csel{p_1}{s}{req}\seq \csel{s}{p_1}{ko})
\end{align*}
\end{spreadlines}
\end{example}

The next example applies conditionals and choreographic choice to model a triage process in the healthcare domain \cite{PZWW22}.
\begin{example}[A Triage Choreography]\label[example]{ex:chor-triage}
The choreography $\cn{triage}$ models a typical triage scenario where a doctor ($\pid d$) awaits the results of a chest X-ray (by $\pid{rx}$) and a blood test (by $\pid {bw}$). If the X-ray results arrive first, the doctor must react immediately (first branch) to determine (conditional) whether they call for an urgent surgery (decision communicated to $\pid {op}$) \cite{ATLS2018,StatPearls2023}.
Otherwise, if the doctor receives the blood test results first, the decision is postponed until the X-ray results are available.
\begin{spreadlines}{0pt}
\begin{align*}
    \cn{triage}&\defeq \ccom{rx}{\fname{test()}}{d}{x}\seq (\condif \pe{d}{\fname{bad(x)}} \condthen \csel{d}{op}{on} \condelse \csel{d}{op}{no}) \seq \ccom{bw}{\fname{test()}}{d}{y}\\
    &\choicep{d} \ccom{bw}{\fname{test()}}{d}{y} \seq \ccom{rx}{\fname{test()}}{d}{x}\seq (\condif \pe{d}{\fname{bad(x)}} \condthen \csel{d}{op}{on}\condelse \csel{d}{op}{no})
\end{align*}
\end{spreadlines}
\end{example}

Lastly, we have \emph{procedure calls} and \emph{runtime terms} together with a set of \emph{choreographic procedure definitions}
for modelling \emph{general recursion}.
In a \emph{procedure call} $X(\pids{p})$, the processes $\pids p$ execute the procedure $X$. In \cref{ex:chor-fcfs}, $W(\pid{p_1},\pid{s})$ is a procedure call where the winning process $\pid{p_1}$ can access the shared resource on $\pid{s}$.
We define procedures as equations of the form $X(\pids{p}) = C$, where $\pids p$ is a sequence of process parameters of $X$ (procedures are polymorphic over processes) and $C$ is the body of $X$. A set of choreographic procedure definitions $\mathscr{C}$ collects \emph{distinct} procedures defined as equations.
We implement $\mathscr C$ as a list in Lean 4 to afford straightforward inductive reasoning.

\begin{example}[A Recursive First-Come-First-Served Choreographic Procedure Definition]\label[example]{ex:chor-fcfs-rec}
Using general recursion, we can define a recursive procedure that extends the shared resource choreography from \cref{ex:chor-fcfs} to repeat the competition for the acquisition of the resource ($\pid s$ acts as a reusable lock).
\begin{spreadlines}{0pt}
\begin{align*}
X(\pid{s},\pid{p_1},\pid{p_2}) &= (\csel{p_1}{s}{req}\seq \csel{s}{p_1}{ok}\seq W(\pid{p_1},\pid{s})\seq\csel{p_2}{s}{req}\seq \csel{s}{p_2}{ko}\\
&\choicep{s}\csel{p_2}{s}{req}\seq \csel{s}{p_2}{ok}\seq W(\pid{p_2},\pid{s})\seq\csel{p_1}{s}{req}\seq \csel{s}{p_1}{ko})\seq X(\pid{s},\pid{p_1},\pid{p_2})
\end{align*}
\end{spreadlines}
\end{example}

The last term -- $\pid{q} : X(\pids{p}).C$ -- is a runtime term that is not intended for programmers. It is a standard technical construct used in our semantics to address the textbook problem of modelling that processes can enter a procedure concurrently, without the need for any coordination~\cite{M23}. Specifically, it indicates that one or more processes have already started procedure $X$ while $\pid q$ has yet to do so. The $C$ in a runtime term is irrelevant for the semantics of choreographies, but serves as a useful annotation for EPP, which we discuss in more detail in \cref{sec:epp}.
\begin{figure}[b]
\vspace{-1.0em}
\begin{align*}
\begin{aligned}[c]
\pn(\cnil) &\triangleq \emptyset
\end{aligned}
\quad
\begin{aligned}[c]
\pn(I \seq C) &\triangleq \pn(I) \cup \pn(C)
\end{aligned}
\quad
\begin{aligned}[c]
\pn(\pid{p}.x \metaDef e) &\triangleq \{\pid{p}\}
\end{aligned}
\end{align*}
\vspace{-1.7em}
\begin{align*}
\begin{aligned}[c]
\pn(\gencom) &\triangleq \{\pid{p}, \pid{q}\}
\end{aligned}
\;\
\begin{aligned}[c]
\pn(\gensel) &\triangleq \{\pid{p}, \pid{q}\}
\end{aligned}
\;\
\begin{aligned}[c]
\pn(X(\pids{p})) &\triangleq \{\pids{p}\}
\end{aligned}
\;\
\begin{aligned}[c]
\pn(\pid{q} : X(\pids{p}).C ) &\triangleq \{\pid{q}\}
\end{aligned}
\end{align*}
\vspace{-1.7em}
\begin{align*}
\begin{aligned}[c]
\pn(\condif \pid{p}.e \condthen C_1 \condelse C_2) &\triangleq \{\pid{p}\} \cup \pn(C_1) \cup \pn(C_2)
\end{aligned}
\;\;
\begin{aligned}[c]
\pn(C_1 \choice_\pid{p} C_2) &\triangleq \{ \pid{p} \} \cup \pn(C_1) \cup \pn(C_2)
\end{aligned}
\end{align*}
\caption{Process Names in Choreographies and Instructions}
\label{chor:syntax:pn}
\vspace{-1.0em}
\end{figure}

We also define functions $\pn(C)$ and $\pn(I)$ for gathering the process names in choreographies and instructions
(\cref{chor:syntax:pn}), which are used by our semantics.

\subsection{Semantics}
\label{chor:semantics}
We define the operational semantics of choreographies as a \emph{labelled transition system (LTS)} of which states are configurations of the form $\langle C, \cstore, \mathscr{C} \rangle$, where $C$ is a choreography, $\cstore$ is a global store mapping process names to their \emph{local stores}, and $\mathscr{C}$ is a set of \emph{choreographic procedure definitions}.
A transition takes the form of $\langle C, \cstore, \mathscr{C}\rangle \xrightarrow[]{\mu} \langle C', \cstore', \mathscr{C}\rangle$, where $\mu$ is a transition label. Before presenting the rules of our semantics, we first introduce some important notations and definitions.

\subsubsection{Stores and Evaluation of Expressions}
\label{chor:semantics:stores}
In a choreographic configuration $\langle C, \cstore, \mathscr{C} \rangle$,
the global store $\cstore$ is a function mapping process names to \emph{local stores}, representing the memory state of the system. A local store $\sigma$ on a process is a function mapping variables to values, representing the local memory of a process. Formally they are defined as:
\begin{align*}
\begin{aligned}[c]
    \mathrm{Global\, Store\, (\setof{GStore})}\quad \cstore : \setof{PName} \to \setof{LStore}
\end{aligned}
\quad\quad\quad\quad
\begin{aligned}[c]
    \mathrm{Local\, Store\, (\setof{LStore})} \quad\sigma : \setof{Var} \to \setof{Val}
\end{aligned}
\end{align*}
\\[-3.5ex]
There are two important operations on stores, which are the \emph{update} of a store's content and the \emph{evaluation} of an expression under a given store.
For the update, we write $\sigma[x \mapsto v]$ to denote the update of a local store $\sigma$ with variable $x$ mapped to value $v$, and $\cstore[\pid{p}.x \mapsto v]$ for updating the global store $\cstore$ such that the local store of $\pid{p}$ is updated with variable $x$ mapped to value $v$. (For detailed definitions, see definition~\ref{app:def:store-update}).
For the evaluation, we write $\sigma \vdash e \downarrow v$ to denote that expression $e$ evaluates to value $v$ under store $\sigma$, and the only assumption we make is that the evaluation relation $\downarrow$ is total. Thus in particular, such an evaluation can be nondeterministic as introduced in \cref{chor:syntax}.

\subsubsection{Transition Labels}
\label{chor:semantics:tl}
\begin{figure}
\vspace{-1.0em}
\begin{align*}
\begin{aligned}[c]
\text{Choice Label ($\mathscr{L}$)} \quad \mathscr{l} \ &\metaDeff \ \mathit{left} \cmid \mathit{right}
\end{aligned}
\quad\quad\quad
\begin{aligned}[c]
\text{Choice Map} \quad \zeta : \textbf{PName} \finfun [\mathscr{L}]
\end{aligned}
\end{align*}
\vspace{-2.0em}
\begin{spreadlines}{0pt}
\begin{alignat*}{3}
\text{Pre-Transition Label} \quad && \rho \ \metaDeff& \ \tlassign{p} \cmid \pid{p}.v \tto \pid{q} \cmid \gensel \cmid \tlthen{p} \cmid \tlelse{p} \\ && \cmid &\ \tlcall{p} \cmid \tlskip{p}\\
\text{Transition Label} \quad && \mu \ \metaDeff& \ (\rho, \zeta)
\end{alignat*}
\end{spreadlines}
\caption{Transition Labels}
\label{chor:semantics:transition-labels}
\vspace{-1.0em}
\end{figure}
We define the transition labels for the LTS in \cref{chor:semantics:transition-labels}.
A key novelty of our definition is that transition labels carry \emph{choice maps} that record the path of resolving choreographic choices of each involved process. This is crucial for defining a correct semantics that accounts for nested choices, as we will discuss in example~\ref{ex:chor-nested-choice}. Note that the semantics of choreographies and networks use the same transition labels, which is important for establishing the behavioural equivalence between choreographies and their projected networks in \cref{sec:epp}.

A transition label $\mu$ is a pair $(\rho, \zeta)$ where $\rho$ is a \emph{pre-transition label} and $\zeta$ is a choice map.
A pre-transition label expresses an action performed by process(es) in a transition.
For instance, $\tlassign{p}$ denotes that $\pid p$ performs a local assignment and $\pid{p}.v \tto \pid{q}$ models a communication of value $v$ from $\pid{p}$ to $\pid{q}$.
A \emph{choice map} ($\zeta$) is a \emph{finite function} introduced in \cref{sec:finfun}. It maps process names to lists of \emph{choice labels}, which can be either $\mathit{left}$ or $\mathit{right}$ (indicating which branch of the choice has been chosen). We write $[\mathscr{L}]$ for the set of such lists, where the zero element is the empty list. We also define a function adding a choice label into a transition label as shown in \cref{chor:semantics:add-choice}.
\begin{align}
(\rho, \zeta).\mathit{addChoice}\; \pid{p}\; \mathscr{l} \triangleq (\rho, \zeta[\pid{p}\mapsto \mathscr{l} \cons (\zeta\; \pid{p})])
\label{chor:semantics:add-choice}
\end{align}
\Cref{chor:semantics:pntl} shows the function $\pn$ gathering process names in pre-transition and transition labels.
\begin{figure}[b]
\vspace{-1.0em}
\begin{align*}
\begin{aligned}[c]
\pn((\rho, \zeta)) &\triangleq \pn(\rho) \cup \textbf{dom}(\zeta)\\
\pn(\tlskip{p}) &\triangleq \{ \pid{p} \}
\end{aligned}
\;
\begin{aligned}[c]
\pn(\tlassign{p}) &\triangleq \{ \pid{p} \}\\
\pn(\tlcall{p}) &\triangleq \{ \pid{p} \}
\end{aligned}
\;
\begin{aligned}[c]
\pn(\pid{p}.v \tto \pid{q}) &\triangleq \{\pid{p} , \pid{q}\}\\
\pn(\gensel) &\triangleq \{\pid{p} , \pid{q}\}
\end{aligned}
\;
\begin{aligned}[c]
\pn(\tlthen{p}) &\triangleq \{ \pid{p} \}\\
\pn(\tlelse{p}) &\triangleq \{ \pid{p} \}
\end{aligned}
\end{align*}
\caption{Process Names in Transition Labels}
\label{chor:semantics:pntl}
\vspace{-1.0em}
\end{figure}

\subsubsection{Process Name Substitution}
\label{chor:substitutions}
To model parametric procedure calls for recursions, we define the name substitution for process names in choreographies and instructions,
written as $C[\pid{r}/\pid{s}]$ and $I[\pid{r}/\pid{s}]$, meaning that we replace all occurrences of $\pid{s}$ with $\pid{r}$ in $C$ and $I$ respectively.
We write $C[\pids{r}/\pids{s}]$ as a shorthand of the name substitution in a choreography for many process names all at once, where the length of $\pids{r}$ and $\pids{s}$ must be the same.
Please refer to definitions~\ref{chor:semantics:pid-pids-substitution} and \ref{chor:semantics:chor-substitution} for the details.

\subsubsection{Sequential Composition}
In \cref{chor:semantics:seqcomp}, we define the \emph{sequential composition operator} ($\fatsemi$) to handle the continuation of choreographies in transitions involving conditionals, choices, and/or procedure calls.
Note that $\fatsemi$ is a not part of the syntax of choreographies, rather, it is a semantic operator.
Hence, $C_1 \fatsemi C_2$ is not a choreography term but a result of applying $\fatsemi$ to $C_1$ and $C_2$.
\begin{figure}[t]
\vspace{-1.0em}
\begin{align*}
\begin{aligned}[c]
    \cnil \fatsemi C &\triangleq C\\[-0.5ex]
    (I \seq C') \fatsemi C &\triangleq (I \fatsemi C) \seq (C' \fatsemi C)
\end{aligned}
\quad\quad\quad\quad
\begin{aligned}[c]
    I \fatsemi C &\triangleq 
    \begin{cases}
    \ \pid{q} : X(\pids{p}).(C' \fatsemi C) &\text{if $I = \pid{q} : X(\pids{p}).C' $}\\[-0.5ex]
    \ I &\text{otherwise}
    \end{cases}
\end{aligned}
\end{align*}
\caption{Sequential Composition of Choreographies}
\label{chor:semantics:seqcomp}
\vspace{-1.5em}
\end{figure}
The set of choreographies together with the sequential composition operator $\fatsemi$ forms a \emph{monoid}, where $\cnil$ is the \emph{identity element} and $\fatsemi$ is the \emph{associative operator}, of which the details are shown in proposition~\ref{app:chor:seqcomp:prop}.

\begin{figure}[b]
\vspace{-1.5em}
\begin{mathpar}
    \inferrule*[right={\scriptsize assign}] { \cstore(\pid{p}) \vdash e \downarrow v }
    {\langle \pid{p}.x \metaDef e \seq C, \cstore, \mathscr{C} \rangle \xrightarrow[]{\tlassign{\pid{p}}} \langle C, \cstore[\pid{p}.x \mapsto v], \mathscr{C} \rangle}

    \inferrule*[right={\scriptsize com}] { \cstore(\pid{p}) \vdash e \downarrow v } 
    {\langle \gencom \seq C, \cstore, \mathscr{C} \rangle \xrightarrow[]{\pid{p}.v \tto \pid{q}} \langle C, \cstore[\pid{q}.x \mapsto v], \mathscr{C} \rangle}
    \quad\inferrule*[right={\scriptsize sel}] {  } 
    {\langle \gensel \seq C, \cstore, \mathscr{C} \rangle \xrightarrow[]{\gensel} \langle C, \cstore, \mathscr{C} \rangle}

    \inferrule*[right={\scriptsize cond-then}] { \cstore(\pid{p}) \vdash e \downarrow \mathit{true}} 
    {\langle \condif \pid{p}.e\condthen C_1\condelse C_2 \seq C, \cstore, \mathscr{C} \rangle \xrightarrow[]{\tlthen{p}} \langle C_1 \fatsemi C, \cstore, \mathscr{C} \rangle }

    \inferrule*[right={\scriptsize cond-else}] { \cstore(\pid{p}) \vdash e \downarrow v \\ v \neq \mathit{true}} 
    {\langle \condif \pid{p}.e\condthen C_1\condelse C_2 \seq C, \cstore, \mathscr{C} \rangle \xrightarrow[]{\tlelse{p}} \langle C_2 \fatsemi C, \cstore, \mathscr{C} \rangle }
    
    \inferrule*[right={\scriptsize choice-l}]{\langle C_1, \cstore, \mathscr{C}\rangle \xrightarrow[]{\mu} \langle C_1', \cstore', \mathscr{C}\rangle \\ \pid{p}\in \pn(\mu)}{\langle C_1 \choice_\pid{p} C_2\seq C,  \cstore, \mathscr{C}\rangle \xrightarrow[]{\addChoice{p}{left}} \langle C_1' \fatsemi C, \cstore', \mathscr{C} \rangle}
    
    \inferrule*[right={\scriptsize choice-r}]{\langle C_2, \cstore, \mathscr{C}\rangle \xrightarrow[]{\mu} \langle C_2', \cstore', \mathscr{C}\rangle \\ \pid{p}\in \pn(\mu)} { \langle C_1 \choice_\pid{p} C_2\seq C, \cstore, \mathscr{C}\rangle \xrightarrow[]{\addChoice{p}{right}} \langle C_2' \fatsemi C, \cstore', \mathscr{C} \rangle }

    \inferrule*[right={\scriptsize skip-l}]{\pid{p} \notin \pn(C_1) \cup \pn(C_2)}{ \langle C_1 \choice_\pid{p} C_2\seq C, \cstore, \mathscr{C}\rangle \xrightarrow[]{\tlskip{p}} \langle C_1 \fatsemi C, \cstore, \mathscr{C} \rangle }
    \;\;
    \inferrule*[right={\scriptsize skip-r}]{\pid{p} \notin \pn(C_1) \cup \pn(C_2)}{ \langle C_1 \choice_\pid{p} C_2\seq C, \cstore, \mathscr{C}\rangle \xrightarrow[]{\tlskip{p}} \langle C_2 \fatsemi C, \cstore, \mathscr{C} \rangle }

    \inferrule*[right={\scriptsize call-first}] { X(\pids{q}) = C \in \mathscr{C} \\ \pids{p} = \pid{p}_1, \ldots, \pid{p}_n \\ i \in [1,n] }
    {\langle X(\pids{p});C', \cstore, \mathscr{C} \rangle \xrightarrow[]{\tlcall{\pid{p}_i}} \\\langle \pid{p}_1 : X(\pids{p}).C'; \ldots; \pid{p}_{i-1} : X(\pids{p}).C'; \pid{p}_{i+1} : X(\pids{p}).C'; \ldots; \pid{p}_n : X(\pids{p}).C'; (C[\pids{p}/\pids{q}] \fatsemi C'), \cstore, \mathscr{C} \rangle}

    \inferrule*[right={\scriptsize call-enter}] {  }
    {\langle \pid{q} : X(\pids{p}).C'; C, \cstore, \mathscr{C} \rangle \xrightarrow[]{\tlcall{\pid{q}}} \langle C, \cstore, \mathscr{C} \rangle}
    \;\;\inferrule*[right={\scriptsize delay}] {\langle C, \cstore, \mathscr{C}\rangle \xrightarrow[]{\mu} \langle C', \cstore', \mathscr{C}\rangle \\ \pn(I)\disj\pn(\mu)} 
    {\langle I\seq C, \cstore, \mathscr{C}\rangle \xrightarrow[]{\mu} \langle I\seq C', \cstore', \mathscr{C}\rangle}

    \inferrule*[right={\scriptsize delay-cond}] {\langle C_1, \cstore, \mathscr{C}\rangle \xrightarrow[]{\mu} \langle C_1', \cstore', \mathscr{C}\rangle \\ \langle C_2, \cstore, \mathscr{C}\rangle \xrightarrow[]{\mu} \langle C_2', \cstore', \mathscr{C}\rangle \\ \pid{p}\notin \pn(\mu)}
    {\langle \condif \pid{p}.e\condthen C_1\condelse C_2 \seq C, \cstore, \mathscr{C} \rangle \xrightarrow[]{\mu} \langle \condif \pid{p}.e\condthen C_1'\condelse C_2' \seq C, \cstore', \mathscr{C} \rangle}
    
    \inferrule*[right={\scriptsize delay-choice}]{\langle C_1, \cstore, \mathscr{C}\rangle \xrightarrow[]{\mu} \langle C_1', \cstore', \mathscr{C}\rangle \\ \langle C_2, \cstore, \mathscr{C}\rangle \xrightarrow[]{\mu} \langle C_2', \cstore', \mathscr{C}\rangle \\ \pid{p} \notin \pn(\mu)}{ \langle C_1 \choice_\pid{p} C_2\seq C, \cstore, \mathscr{C}\rangle \xrightarrow[]{\mu} \langle C_1' \choice_\pid{p} C_2'\seq C, \cstore', \mathscr{C} \rangle }
\end{mathpar}
\caption{Semantics of Choreographies}
\label{chor:semantics:lts:fig}
\vspace{-1.0em}
\end{figure}

\subsubsection{LTS for Choreographies}
\label{chor:semantics:lts}
We give the formal semantics of choreographies as a LTS over configurations in \cref{chor:semantics:lts:fig}.
For better readability, we write only the pre-transition label $\rho$ in the rules when the choice map $\zeta$ is empty. For instance, we write $\tlassign{p}$ as a short-hand of $(\tlassign{p}, \emptyset)$.

We first introduce three basic rules.
Rule \textsc{assign} models the execution of a local assignment. 
It states that if an expression $e$ is evaluated to a value $v$ in the local store of $\pid{p}$,
then the choreography can perform a transition with label $\tlassign{\pid{p}}$ to update the store of $\pid{p}$ by mapping $x$ to $v$.
Rule \textsc{com} models the execution of a communication between two processes.
Different from \textsc{assign}, instead of updating the local store of $\pid{p}$, the choreography performs a transition with label $\pid{p}.v \tto \pid{q}$ to communicate the result of evaluating $e$, from $\pid{p}$ to $\pid{q}$ and update the store of $\pid{q}$.
Rule \textsc{sel} is similar to \textsc{com} but for communicating selection labels between two processes, stating that a choreography can perform a transition with label $\gensel$ to send a label $\albl$ from $\pid{p}$ to $\pid{q}$.

Rules \textsc{cond-then} and \textsc{cond-else} for modelling the execution of a conditional.
The sequential composition $(\fatsemi)$ is used to handle the continuation of the choreography after executing a conditional.
Specifically, rule \textsc{cond-then} states that if an guard expression $e$ is evaluated to $\vname{true}$ in the store of a guard process $\pid{p}$, then the choreography can perform a transition with label $\tlthen{p}$ to continue with then-branch $C_1$ followed by the continuation $C$. If the expression $e$ is not evaluated $\vname{true}$, according to rule \textsc{cond-else}, the choreography will instead perform a transition with label $\tlelse{p}$ to continue with else-branch $C_2$ followed by the continuation $C$.

We have four rules for modelling the execution of a choreographic choice.
Rule \textsc{choice-l} states that if the left branch $C_1$ can perform a transition involving the choice-resolving process $\pid p$, then the choice is resolved to the left branch with a transition labelled $\addChoice{p}{left}$ and continues as $C_1' \fatsemi C$. The transition label $\addChoice{p}{left}$ records that the choice is resolved to the left branch for $\pid{p}$.
Rule \textsc{choice-r} is symmetric to \textsc{choice-l} but resolves the choice to the right branch of the choice.
Recall the first-come-first-served choreography $\cn{fcfs}$ in example~\ref{ex:chor-fcfs}, suppose that $\pid{p_1}$'s request reaches $\pid r$ first, according to the rule \textsc{choice-l} the choreography can perform a transition with the transition label $\addChoiceg{\csel{p_1}{s}{req}}{s}{left}$ to continue with the left branch of the choice:
\begin{spreadlines}{0pt}
\begin{align*}
    (&\csel{p_1}{s}{req}\seq \csel{s}{p_1}{ok}\seq W(\pid{p_1},\pid{s})\seq\csel{p_2}{s}{req}\seq \csel{s}{p_2}{ko}\\
\choicep{s} \ &\csel{p_2}{s}{req}\seq \csel{s}{p_2}{ok}\seq W(\pid{p_2},\pid{s})\seq\csel{p_1}{s}{req}\seq \csel{s}{p_1}{ko})\\
\xrightarrow[]{\addChoiceg{\csel{p_1}{s}{req}}{s}{left}} \ &\csel{s}{p_1}{ok}\seq W(\pid{p_1},\pid{s})\seq\csel{p_2}{s}{req}\seq \csel{s}{p_2}{ko}
\end{align*}
\end{spreadlines}
Representing the choice map as a finite function is crucial for nested choices, as it records choices resolved at each participating process w/o imposing an ordering on the processes (cf. example~\ref{ex:chor-nested-choice}).
\begin{example}[Nested Choreographic Choices]\label{ex:chor-nested-choice}
Although $\cn{pq}$ and $\cn{qp}$ differ syntactically in their nesting structure, they exhibit identical behaviour: their transitions are independent of the nesting of choices, as all choices are resolved jointly by $\pid p$ and $\pid q$. In fact, they are projected to the same network. If the choice is resolved to the left branch for $\pid{p}$ and the right branch for $\pid{q}$, then both choreographies can perform a transition to $\ccom{p}{1}{q}{y}$ with the label $\addChoiceg{(\addChoiceg{\tlcom{p}{1}{q}}{p}{left})}{q}{right}$.
\begin{spreadlines}{0pt}
\begin{align*}
\cn{pq} &\defeq (\ccom{p}{1}{q}{x}\choicep{q}\ccom{p}{1}{q}{y}) \choicep{p} (\ccom{p}{2}{q}{x}\choicep{q}\ccom{p}{2}{q}{y})\\
\cn{qp} &\defeq (\ccom{p}{1}{q}{x}\choicep{p}\ccom{p}{2}{q}{x}) \choicep{q} (\ccom{p}{1}{q}{y}\choicep{p}\ccom{p}{2}{q}{y})      
\end{align*}
\end{spreadlines}
\end{example}

Garbage-collection rules \textsc{skip-l} and \textsc{skip-r} play a crucial role in ensuring deadlock freedom. They model the case where the choice-resolving process $\pid{p}$ is not involved in the next transition of either branch. The choreography therefore performs a transition labelled $\tlskip{p}$ and proceeds with a chosen branch followed by the continuation C, demonstrated in example~\ref{ex:chor-gc}.
\begin{example}[A Choreographic Choice Requires Garbage Collection Rules]\label{ex:chor-gc}
The choreography
$\cn{gc} \defeq (\assign{p}{x}{42} \choicep{q} \assign{p}{x}{42}) \seq \ccom{\pid{p}}{7}{\pid{q}}{y}$ contains a choice at a process $\pid q$ that is not involved in either branch.
Without garbage collection rules, $\cn{gc}$ is stuck. However, with rule \textsc{skip-l} or \textsc{skip-r}, it can 
perform a transition with label $\tlskip{q}$ to continue with either the left or the right branch.
\[ (\assign{p}{x}{42} \choicep{q} \assign{p}{x}{42}) \seq \ccom{\pid{p}}{7}{\pid{q}}{y} \xrightarrow[]{\tlskip{q}} \assign{p}{x}{42} \fatsemi \ccom{\pid{p}}{7}{\pid{q}}{y}\]
\end{example}

The execution of recursive procedure calls is modelled by two rules, \textsc{call-first} and \textsc{call-enter}. Rule \textsc{call-first} allows any process $\pid{p}_i$ in the arguments of a procedure call $X(\pids{p})$ to initiate the call by performing a transition labelled $\tlcall{\pid{p}_i}$. The remaining processes $\pid{p}_j$ ($j \neq i$) become runtime terms $\pid{p}_j : X(\pids{p}).C'$, indicating that they have yet to enter the procedure. The continuation $C'$ is preserved, since from the perspective of these processes the procedure call has not yet begun. Once every process in $\pids{p}$ has entered the procedure, execution continues as $C[\pids{p}/\pids{q}] \fatsemi C'$, where the actual arguments $\pids{p}$ are substituted for the formal parameters $\pids{q}$.
Rule \textsc{call-enter} removes a runtime term $\pid{q} : X(\pids{p}).C'$ and continues with $C$.
This two-stage treatment of recursive procedure calls via runtime terms is essential for proving the correctness of EPP and establishing the behavioural equivalence between choreographies and their projections, as shown in \cref{sec:epp}.

Finally, our semantics models that processes execute concurrently by allowing  instructions independent on processes to be executed out of order.
For instance, given a choreography:
$\ccom {p} {e_1} {q} {x} \seq \pid{r}.y \metaDef e_2 \seq \cnil$,
where $\pid{r}$ is different from both $\pid{p}$ and $\pid{q}$, i.e., the communication instruction $\ccom {p} {e_1} {q} {x}$ and the assignment instruction $\pid{r}.y \metaDef e_2$ are independent.
Thus, they may be executed in either order. Regardless of the execution order, the choreography evaluates to the same result.
Formally, the out-of-order execution is captured by three rules: \textsc{delay}, \textsc{delay-cond}, and \textsc{delay-choice}.
Rule \textsc{delay} states that if $C$ can perform a transition with a label $\mu$ to $C'$, and none of the processes occurring in $I$ is mentioned in $\mu$, then $I \seq C$ can perform the same transition to $I \seq C'$.
Rule \textsc{delay-cond} states that if both branches of a conditional $C_1$ and $C_2$ can perform a transition with the same transition label $\mu$ to $C_1'$ and $C_2'$ respectively, and the guard process $\pid{p}$ is not mentioned in $\mu$, then the conditional can perform the same transition and continue with the updated branches $C_1'$ and $C_2'$, followed by $C$.
Rule \textsc{delay-choice} is similar to \textsc{delay-cond} but for choreographic choices. 

\Cref{chor:semantics:seqcomp-border} shows an important semantics property of choreographies: sequential composition strictly imposes a boundary of the execution of choreographies, which means transitions never cross a sequential composition operator $(\fatsemi)$. This property is essential for distinguishing the \textsc{delay}, \textsc{delay-cond}, and \textsc{delay-choice} cases when reasoning about the behaviour equivalence between out-of-order executions of choreographies and parallel executions of networks in \cref{sec:epp:correctness}.
\begin{theorem}[Execution Does Not Cross Sequential Composition]
\label{chor:semantics:seqcomp-border}
For any choreographic configuration $\langle C_1 \fatsemi C_2, \cstore, \mathscr{C} \rangle$, if it can perform a transition with the transition label $\mu$, where \emph{$\pn(\mu) \subseteq \pn(C_1)$}, then there exists a choreography $C_1'$ such that $\langle C_1, \cstore, \mathscr{C} \rangle \xrightarrow[]{\mu} \langle C_1', \cstore', \mathscr{C} \rangle$ and the resulting choreographic configuration after the $\mu$-transition performed by $\langle C_1 \fatsemi C_2, \cstore, \mathscr{C} \rangle$ is $\langle C_1' \fatsemi C_2, \cstore', \mathscr{C} \rangle$.
\end{theorem}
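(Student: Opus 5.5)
We prove the statement by structural induction on $C_1$, generalising over $C_2$, $\cstore$, $\cstore'$, $\mu$ and the target configuration, and inverting the transition of $\langle C_1 \fatsemi C_2, \cstore, \mathscr{C}\rangle$ in each case. If $C_1 = \cnil$, then $\pn(C_1)=\emptyset$, so $\pn(\mu)\subseteq\emptyset$. We then check that every transition label has a nonempty set of process names: each pre-transition label $\rho$ in \cref{chor:semantics:pntl} contributes at least one name, and $\pn((\rho,\zeta)) \supseteq \pn(\rho)$. This case is therefore vacuous. If $C_1 = I \seq C_1''$, then by \cref{chor:semantics:seqcomp} we have $C_1 \fatsemi C_2 = (I\fatsemi C_2)\seq(C_1''\fatsemi C_2)$, and we case on the last rule used to derive the transition, reading $I \fatsemi C_2$ as $I$ except for runtime terms.

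\textbf{Head rules.} For \textsc{assign}, \textsc{com}, \textsc{sel}, \textsc{cond-then}, \textsc{cond-else}, \textsc{choice-l/r} and \textsc{skip-l/r}, the instruction $I\fatsemi C_2 = I$ fires directly. The same rule applies to $I\seq C_1''$ with continuation $C_1''$. For the rules producing $D \fatsemi (C_1''\fatsemi C_2)$, we rewrite with associativity of $\fatsemi$ (proposition~\ref{app:chor:seqcomp:prop}) to obtain $(D\fatsemi C_1'')\fatsemi C_2$; the basic rules need no rewriting. The choice premises mention only $C_1$ and $C_2$ of the choice, and the skip side conditions mention only the branches, so they transfer unchanged.

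For \textsc{call-enter} on $\pid q : X(\pids p).(C'\fatsemi C_2)$, the result is $C_1''\fatsemi C_2$, which matches directly. For \textsc{call-first}, the generated runtime terms carry $C_1''\fatsemi C_2$ as annotation and the body becomes $C[\pids p/\pids q]\fatsemi(C_1''\fatsemi C_2)$. Using the definition of $\fatsemi$ on runtime terms together with associativity, this equals the result of \textsc{call-first} on $X(\pids p)\seq C_1''$, post-composed with $C_2$.

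\textbf{Delay rules.} For \textsc{delay}, the premise is a transition of $C_1''\fatsemi C_2$ with $\pn(I)\disj\pn(\mu)$, noting that $\pn(I\fatsemi C_2)=\pn(I)$. Since $\pn(\mu)\subseteq\pn(I)\cup\pn(C_1'')$, we get $\pn(\mu)\subseteq\pn(C_1'')$, so the induction hypothesis yields $C_1'''$ and the decomposition. We then re-apply \textsc{delay}. For \textsc{delay-cond} and \textsc{delay-choice}, the premises are transitions of the branches $C_a, C_b$ of $I$ itself, which are untouched by $\fatsemi$. We simply re-apply the rule with continuation $C_1''$ and observe that $(I'\seq C_1'')\fatsemi C_2 = I' \seq (C_1''\fatsemi C_2)$.

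\textbf{Main obstacle.} The induction on $C_1$ must be strong enough to cover the runtime-term case, where $I\fatsemi C_2$ is not $I$. We also need the lemma that $\pn(D\fatsemi E)=\pn(D)\cup\pn(E)$, which is used to justify the subset conditions. The remaining work is equational bookkeeping in the \textsc{call-first} case, where a chain of runtime terms must be shown to commute with $\fatsemi$; this follows by induction on the list of arguments.
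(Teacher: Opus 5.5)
Your proof is correct and is essentially the paper's argument. The paper also does a case analysis on the last rule of the transition. In your version only \textsc{delay} needs the induction hypothesis, via $\pn(I\fatsemi C_2)=\pn(I)$, and the case $C_1=\cnil$ is vacuous because every label mentions some process. You induct on $C_1$ and invert the transition, whereas the paper inducts on the derivation itself; this is only a difference in bookkeeping. The lemma $\pn(D\fatsemi E)=\pn(D)\cup\pn(E)$ that you list is not actually used anywhere in your argument.
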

\begin{proof}
    By structural induction on the derivation of the transition.
\end{proof}

\subsection{Well-Formedness}
\label{chor:wf}
We define \emph{well-formedness} checks for choreographies and configurations to exclude erroneous situations such as a process communicating with itself or incorrect usage of procedure parameters.

We first define the well-formedness judgement for choreographies, written $\langle \mathscr{C},\{\pids{r}\}, C \rangle \checkmark$, which states that a choreography $C$ is well-formed w.r.t. a set of choreographic procedure definitions $\mathscr{C}$ and a set of participating processes $\{\pids{r}\}$.
In particular, each communication or selection must involve two distinct processes from $\{\pids{r}\}$.
For a choreography with procedure calls and/or runtime terms to be well-formed, all processes in the arguments of the procedure call must be distinct and belong to $\{\pids{r}\}$, the invoked procedure is defined, and the number of arguments matches the number of parameters in the invoked procedure.
The detailed definition of $\langle \mathscr{C}, \{\pids{r}\}, C \rangle \checkmark$ is in \cref{app:chor:wf}.
We also give the formal definition of the \emph{well-formedness of choreographic procedure definitions $\mathscr{C}$} as a predicate over $\mathscr{C}$ and the \emph{well-formedness of choreographic configurations} in definitions~\ref{chor:wf:config:def} and \ref{chor:wf:procs:def}.
\begin{definition}[Well-Formedness of Choreographic Procedure Definitions]\label{chor:wf:procs:def}
A set of choreographic procedure definitions $\mathscr{C}$ is well-formed if for each $X(\pids{p}) = C$ in $\mathscr{C}$,
all processes in $\pids{p}$ are distinct, the choreography $C$ does not contain any runtime terms and $\langle \mathscr{C}, \{\pids{p}\}, C \rangle \checkmark$.
\end{definition}

\begin{definition}[Well-Formedness of Choreographic Configurations]\label{chor:wf:config:def}
    A choreographic configuration $\langle C, \cstore, \mathscr{C} \rangle$ is well-formed if both $\mathscr{C}$ and $C$ are well-formed.
\end{definition}

\Cref{chor:wf:preservation:thm} states that well-formedness is preserved by transitions, which is crucial for proving the behavioural equivalence between choreographies and their projections discussed in section~\ref{sec:epp:correctness}.
\begin{theorem}[Preservation of Well-Formedness]\label{chor:wf:preservation:thm}
    Given a well-formed choreographic configuration $\langle C, \cstore, \mathscr{C} \rangle$, if $\langle C, \cstore, \mathscr{C} \rangle \xrightarrow[]{\mu} \langle C', \cstore', \mathscr{C} \rangle$,
    then $\langle C', \cstore', \mathscr{C} \rangle$ is also well-formed.
\end{theorem}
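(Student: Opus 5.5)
The proof goes by induction on the derivation of $\langle C, \cstore, \mathscr{C} \rangle \xrightarrow[]{\mu} \langle C', \cstore', \mathscr{C} \rangle$, generalising over the participant set $\{\pids{r}\}$ used in the judgement $\langle \mathscr{C},\{\pids{r}\}, C \rangle \checkmark$. Since $\mathscr{C}$ is unchanged by every rule, its well-formedness carries over trivially. It therefore suffices to show that $\langle \mathscr{C},\{\pids{r}\}, C \rangle \checkmark$ implies $\langle \mathscr{C},\{\pids{r}\}, C' \rangle \checkmark$. The store plays no role, because well-formedness does not mention $\cstore$.

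Before the induction I would establish three auxiliary lemmas about well-formedness.
\begin{enumerate}
\item \emph{Compositionality of $\fatsemi$.} If $\langle \mathscr{C},\{\pids{r}\}, C_1 \rangle \checkmark$ and $\langle \mathscr{C},\{\pids{r}\}, C_2 \rangle \checkmark$, then $\langle \mathscr{C},\{\pids{r}\}, C_1 \fatsemi C_2 \rangle \checkmark$. This is proved by induction on $C_1$ following \cref{chor:semantics:seqcomp}. The runtime-term case needs the annotation $C'$ of $\pid{q} : X(\pids{p}).C'$ to be covered by the well-formedness judgement, so that $C' \fatsemi C_2$ is again well-formed.
\item \emph{Inversion for sequences.} Well-formedness of $I \seq C$ is equivalent to well-formedness of $I$ together with well-formedness of $C$.
\item \emph{Substitution.} If $\langle \mathscr{C},\{\pids{q}\}, C \rangle \checkmark$, the $\pids{q}$ are distinct, the $\pids{p}$ are distinct elements of $\{\pids{r}\}$, and $|\pids{p}| = |\pids{q}|$, then $\langle \mathscr{C},\{\pids{r}\}, C[\pids{p}/\pids{q}] \rangle \checkmark$. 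The proof is by induction on $C$. The essential point is that an injective renaming preserves distinctness, so a communication $\ccom{q_i}{e}{q_j}{x}$ with $q_i \neq q_j$ becomes one between distinct $\pid{p}_i \neq \pid{p}_j$, and likewise for the argument lists of nested calls. Monotonicity of the judgement in the participant set is also needed here.
\end{enumerate}

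The cases of the induction are then as follows.
\begin{itemize}
\item \textsc{assign}, \textsc{com}, \textsc{sel}, \textsc{call-enter}: the result is immediate by inversion.
\item \textsc{cond-then}, \textsc{cond-else}, \textsc{skip-l}, \textsc{skip-r}: apply inversion, then compositionality of $\fatsemi$.
\item \textsc{choice-l}, \textsc{choice-r}: use the induction hypothesis on the premise to obtain well-formedness of $C_1'$, then compositionality of $\fatsemi$.
\item \textsc{delay}, \textsc{delay-cond}, \textsc{delay-choice}: apply the induction hypothesis to the sub-derivations and reassemble the judgement. The guard process $\pid{p}$ and the instruction $I$ are untouched, so their checks carry over unchanged.
\end{itemize}

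The main obstacle is \textsc{call-first}. Here well-formedness of $X(\pids{p});C'$ gives that the $\pids{p}$ are distinct and lie in $\{\pids{r}\}$, that $X(\pids{q}) = C \in \mathscr{C}$, and that the arities match. Well-formedness of $\mathscr{C}$ (\cref{chor:wf:procs:def}) gives distinct $\pids{q}$ and $\langle \mathscr{C}, \{\pids{q}\}, C \rangle \checkmark$. The substitution lemma then yields well-formedness of $C[\pids{p}/\pids{q}]$ over $\{\pids{r}\}$, and compositionality handles $\fatsemi\, C'$. Finally, each prefixed runtime term $\pid{p}_j : X(\pids{p}).C'$ is well-formed because it carries exactly the data of the original call ($\pid{p}_j \in \{\pids{r}\}$, $X$ defined, arity correct, $C'$ well-formed), so the whole sequence is well-formed by repeated use of the sequence rule.

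The substitution lemma is the delicate part. It requires that $\mathscr{C}$ is fixed, so that calls inside $C$ remain defined after renaming. It also requires the side condition, from \cref{chor:wf:procs:def}, that procedure bodies contain no runtime terms, since otherwise renaming the annotations of runtime terms would need separate handling.
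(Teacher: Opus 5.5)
Your proposal is correct and follows the paper's route: induction on the derivation of the transition, with your $\fatsemi$-compositionality and renaming lemmas supplying what the paper's one-line proof leaves implicit (the runtime-term case of the $\fatsemi$ lemma is simpler than you suggest, since \textsc{wf-rt-call} never checks the annotation). The one point to make explicit is that your renaming lemma needs $C[\pids{p}/\pids{q}]$ to be the simultaneous (``all at once'') substitution of the main text, because under the appendix's literal iterated reading $C[\pid{p}_1/\pid{q}_1]\dots[\pid{p}_n/\pid{q}_n]$ it is not injective: a body of $X(\pid{a},\pid{b})$ communicating from $\pid{a}$ to $\pid{b}$, called as $X(\pid{b},\pid{c})$, becomes a self-communication at $\pid{c}$, and the theorem itself would then fail.
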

\begin{proof}
    By structural induction on the derivation of the transition.
\end{proof}

\section{Processes and Networks}
\label{sec:proc-net}
Recall the two \emph{networks} we have introduced in examples~\ref{ex:EPPinv} and \ref{ex:EPPinvC}, which are collections of \emph{endpoint programs}.
Endpoint programs are expressed in the \emph{process language}. In this section, we give the formal syntax of the process language and the semantics of networks.

\subsection{Syntax}
\label{sec:proc-net:syntax}
\Cref{proc:syntax:fig} shows the formal syntax of the process language, which shares the same expression $e$ as choreographies. 
Process terms $P$ and instructions $I$ are again mutually defined.
Note that we use $I$ for both choreographic and process instructions. The meaning of $I$ is always clear from the context.

\begin{figure}
\vspace{-1.0em}
\begin{spreadlines}{0pt}
\begin{alignat*}{3}
    \text{Process Procedure Definitions ($\mathfrak{P}$)}\ &&
    \mathscr{P} \metaDeff&\ \{X_i(\pids{p_i}) = P_i\}_{i \in \idx{I}} \\
    \text{Process Term (\setof{Process})}\ &&
    P, Q \metaDeff&\ I \seq P \cmid \precvl p \cmid \pnil\\
    \text{Process Instruction (\setof{PInstr})}\ &&
    I \metaDeff&\ x \metaDef e \cmid \psend{p}{e} \cmid \precv{p}{x} \cmid \psel{p} \cmid \condif e \condthen P\condelse Q \\ && \cmid &\ P + Q \cmid X(\pids{p})\\
    \text{Expression (\setof{Expr})}\ &&
    e \metaDeff&\ v \cmid x \cmid f(\vec{e})
\end{alignat*}
\end{spreadlines}
\caption{Syntax of the Process Language}
\label{proc:syntax:fig}
\vspace{-1.0em}
\end{figure}

There are three process terms: a sequence of process instructions constructed by the overloaded symbol of sequence ($\!\seq\!$), a branching term $\precvl p$, and a terminated process denoted by the overloaded symbol $\pnil$. We again omit trailing occurrences of $\pnil$ for readability.
Amongst these three process terms, the branching term is most interesting one.
In example~\ref{ex:EPPinvC}, the network for $\pid{q}$ projected from the choreography $\cn{inv}$,
$\atomicn{q}{\hlb{\pid{p}\br}\hlb{\{\lbl{pay}: \precv{p}{x} \sep \lbl{reimburse}: \psend{p}{\fname{coords()}} \seq \precv{p}{x} \sep \lbl{skip}: \pnil \}}}$,
contains a highlighted branching term, which specifies three possible selection labels -- $\lbl{pay}$, $\lbl{reimburse}$, and $\lbl{skip}$ -- that $\pid q$ may receive from $\pid p$. Upon receiving a label, $\pid q$ proceeds with the corresponding branch.
Formally, a branching term specifies that a process receives a selection label $\albl_i$ from $\pid{p}$ and continues with the corresponding process term $P_i$. Specifically, $\{\albl_i : P_i\}_{i\in \idx{I}}$ is a \emph{finite map} that associates each distinct label $\albl_i$ to a process term $P_i$, where $\idx{I}$ is a finite index set.
In Lean~4, we represent branching terms as \emph{sorted lists} of label--process term pairs, ordered strictly by their labels.
The list representation allows an efficient merge sorted based implementation of the \emph{merge operation} for projecting branching terms from choreographies with selections (cf. \cref{epp:merge}) and facilitates inductive reasoning for the properties of merge and EPP (cf. \cref{epp:projection}).

There are four basic instructions: \emph{assignment}, the \emph{send}, the \emph{receive}, and the \emph{selection}.
An assignment $x \metaDef e$ stores the result of evaluating an expression $e$ into a local variable $x$.
A send instruction $\psend{p}{e}$ and a receive instruction $\precv{p}{x}$ are dual: the former sends the result of evaluating an expression $e$ to a process $\pid{p}$, and the latter receives a value from a process $\pid{p}$ and stores it into a local variable $x$.
In example~\ref{ex:EPPinv}, the network for $\pid p$ contains $\precv{q}{c}$, which receives the coordinates from $\pid q$ and stores them in a local variable $c$, while the network for $\pid q$ contains a dual instruction $\psend{p}{\fname{coords()}}$, sending the coordinates to $\pid p$.
A selection $\psel{p}$ sends a selection label $\albl$ to a process $\pid{p}$, which is the dual of a branching term. In example~\ref{ex:EPPinvC}, $\pselg{q}{pay}$ in the network for $\pid p$ sends the label $\lbl{pay}$ to $\pid q$.

Similar to choreographies, we also have instructions for making choices between alternative process terms, which are the \emph{conditional} construct (in example~\ref{ex:EPPinvC}) and the \emph{nondeterministic choice}.
A conditional $\condif e \condthen P \condelse Q$ proceeds with $P$ if the expression $e$ evaluates to $\mathit{true}$, and proceeds with $Q$ otherwise, while a nondeterministic choice $P + Q$ can proceed with either $P$ or $Q$ nondeterministically.
The nondeterministic choice is important for modelling the equivalent behaviour of the choreographic choice in networks.
In example~\ref{ex:net-fcfs}, the network projected from the first-come-first-served choreography $\cn{fcfs}$ in example~\ref{ex:chor-fcfs} uses a nondeterministic choice.
\begin{example}[A First-Come-First-Served Network]\label{ex:net-fcfs}
Here $\pid{p_1}$ and $\pid{p_2}$ execute the same program competing for the computational resource of $\pid s$, whereas $\pid s$ knows which one of them wins the competition and works with the winner by using it as the argument in the \emph{process procedure call} $W_2.$
$W_1$ and $W_2$ are the projections of $W$ w.r.t. to its first and second parameters, respectively (cf. \cref{epp:projection}).
\begin{spreadlines}{0pt}
\begin{align*}
\left.
\begin{aligned}[c]
\atomicn{s}{&\pid{p_1}\br\{\lbl{req}: \pselg{p_1}{ok}\seq W_2(\pid{p_1})\seq\\&\pid{p_2}\br\{\lbl{req}: \pselg{p_2}{ko}\}\} \\ \choice \ &\pid{p_2}\br\{\lbl{req}: \pselg{p_2}{ok}\seq W_2(\pid{p_2})\seq\\&\pid{p_1}\br\{\lbl{req}: \pselg{p_1}{ko}\}\}}
\end{aligned}
\;\;\middle|\;\;
\begin{aligned}[c]
\atomicn{p_1}{&\pselg{s}{req} \seq \\&\pid{s}\br\{\lbl{ok} : W_1(s)\sep \lbl{ko} : \pnil \}}
\end{aligned}
\;\;\middle|\;\;
\begin{aligned}[c]
\atomicn{p_2}{&\pselg{s}{req} \seq \\&\pid{s}\br\{\lbl{ok} : W_1(s)\sep \lbl{ko} : \pnil \}}
\end{aligned}
\right.
\end{align*}
\end{spreadlines}
\end{example}

The conditional and nondeterministic choice in the process language enable the triage choreography $\cn{triage}$ in example~\ref{ex:chor-triage} to be projected into the network shown below:
\begin{example}[A Triage Network]\label{ex:net-triage} The network shows a distributed view of each participant in the triage scenario, combining a nondeterministic choice with conditionals, the doctor $\pid d$ prioritise the decision to call for an urgent surgery based on the X-ray result.
\begin{spreadlines}{0pt}
\begin{align*}
\left.
\begin{aligned}[c]
\atomicn{d}{&\precv{rx}{x}\seq\\&(\condif\fname{bad(x)}\condthen \pselg{op}{on} \\ &\condelse \pselg{op}{no}) \seq \precv{bw}{y}\\\choice\ &\precv{bw}{y}\seq\precv{rx}{x}\seq\\ &(\condif\fname{bad(x)} \condthen \pselg{op}{on} \\&\condelse \pselg{op}{no})}
\end{aligned}
\;\;\middle|\;\;
\begin{aligned}[c]
\atomicn{op}{\pid{d}\br\{\lbl{on}: \pnil \sep \lbl{no}: \pnil\}}
\end{aligned}
\;\;\middle|\;\;
\begin{aligned}[c]
\atomicn{rx}{\psend{d}{\fname{test()}}}
\end{aligned}
\;\;\middle|\;\;
\begin{aligned}[c]
\atomicn{bw}{\psend{d}{\fname{test()}}}
\end{aligned}
\right.
\end{align*}
\end{spreadlines}
\end{example}

Lastly, for modelling general recursion, we have the process procedure call $X(\pids{p})$ to denote the invocation of a procedure $X$ with arguments $\pids{p}$, and a set of \emph{process procedure definitions} $\mathscr{P}$.
\begin{example}[Recursive First-Come-First-Served Process Procedure Definitions]\label{ex:net-fcfs-rec}
Recall the recursive first-come-first-served choreographic procedure definition in example~\ref{ex:chor-fcfs-rec}, where the recursive procedure call takes the form of $X(\pid{s},\pid{p_1},\pid{p_2})$. It is projected to three process procedure definitions, each one of which covers the perspective of one of the process parameters.
The process procedure definition w.r.t. $\pid{p_1}$ is
$X_2(\pid{s},\pid{p_2}) = \pselg{s}{req} \seq \pid{s}\br\{\lbl{ok} : W_1(s)\sep \lbl{ko} : \pnil \} \fatsemi (X_2(\pid{s},\pid{p_2}))$,
and the one w.r.t. $\pid{p_2}$ is
$X_3(\pid{s},\pid{p_1}) = \pselg{s}{req} \seq \pid{s}\br\{\lbl{ko} : \pnil \sep \lbl{ok} : W_1(s)\} \fatsemi (X_3(\pid{s},\pid{p_1}))$. The overloaded $(\fatsemi)$ notation is the sequential composition of processes that will be presented in \cref{sec:proc:semantics:seqcomp}.
The process procedure definition w.r.t. $\pid{s}$ is more complex as $\pid s$ is the process where the choreographic choice is resolved:
\begin{spreadlines}{0pt}
\begin{align*}
X_1(\pid{p_1},\pid{p_2}) = \ &\pid{p_1}\br\{\lbl{req}: \pselg{p_1}{ok}\seq W_2(\pid{p_1})\seq\pid{p_2}\br\{\lbl{req}: \pselg{p_2}{ko}\}\} \\ \choice \ &\pid{p_2}\br\{\lbl{req}: \pselg{p_2}{ok}\seq W_2(\pid{p_2})\seq\pid{p_1}\br\{\lbl{req}: \pselg{p_1}{ko}\}\} \seq X_1(\pid{p_1},\pid{p_2}) 
\end{align*}
\end{spreadlines}
\end{example}
We have briefly mentioned that examples \ref{ex:net-fcfs}, \ref{ex:net-triage}, and \ref{ex:net-fcfs-rec} are \emph{generated} by EPP from their corresponding choreographies and choreographic procedure definitions. We will discuss EPP in \cref{sec:epp}.
Having introduced networks informally through examples, we now define them formally.
\begin{definition}[Network] \label{proc:network:def}
A \emph{network} is a \emph{function} mapping process names to process terms, where the zero element of the codomain \setof{Process} is the terminated process $\pnil$.
\[ 
\mathrm{Network\;(\setof{Net})}\quad N : \setof{PName} \to \setof{Process}
\]
\end{definition}
Since a network is a function, we establish the equality of two networks using \emph{extensional equality}.
Formally, two networks $N$ and $M$ are equal, if for all processes $\pid{p}$, we have $N(\pid{p}) = M(\pid{p})$.
Given a network $N$, we say that a process $\pid p$ is \emph{running} if $\pid p \in \support(N)$ (we defined function support in \cref{sec:finfun}).
Furthermore, if two networks $N$ and $M$ have \emph{disjoint supports}, i.e., $\support(N) \disjoint \support(M)$, then we say that the two networks are \emph{disjoint}, written as $N \disjoint M$.
\begin{figure}
\vspace{-1.0em}
\begin{spreadlines}{0pt}
\begin{align*}
\begin{aligned}[c]
    \nnil (\pid{p}) \triangleq \pnil 
\end{aligned}
\quad\quad
\begin{aligned}[c]
    \pid{p}[P] (\pid {q})&\triangleq 
    \begin{cases}
    \ P &\text{if $\pid{p} = \pid{q}$}\\[-0.5ex]
    \ \pnil &\text{otherwise}
    \end{cases}
\end{aligned}
\quad
\begin{aligned}[c]
    (N \parcomp M)(\pid {p})&\triangleq 
    \begin{cases}
    \ N(\pid{p}) &\text{if $\pid{p} \in \support(N)$}\\[-0.5ex]
    \ M(\pid{p}) &\text{otherwise}
    \end{cases}
\end{aligned}
\end{align*}
\end{spreadlines}
\vspace{-0.5em}
\caption{Terminated and Atomic Networks, and Parallel Composition of Networks}
\label{proc:network:nil:atomic:fig}
\vspace{-1.0em}
\end{figure}

In \cref{proc:network:nil:atomic:fig} we formally define three kinds of networks: \emph{terminated networks}, \emph{atomic networks}, and the \emph{parallel composition} of networks.
We again overload the notation $\nnil$ for a terminated network, which maps all process to the terminated process term.
An atomic network $\pid{p}[P]$ maps $\pid p$ to a non-terminated process term $P$ and every other process to $\pnil$.
We write $N \parcomp M$ for the parallel composition of two networks $N$ and $M$, which is left biased and right associative.
The set of networks together with parallel composition form a \emph{monoid} (cf. proposition~\ref{app:proc:parcomp:prop}).
If $N$ and $M$ are disjoint, then $N \parcomp M$ is commutative, i.e., $N \parcomp M \triangleq M \parcomp N$. Consequently, restricting parallel composition to disjoint networks yields a \emph{partial commutative monoid}, as discussed by~\citet{M23}.

In practice, networks obtained by EPP have finite support,
since choreographies specify the behaviour of a finite set of processes and EPP does not introduce new processes.
Thus, we introduce the notation of a network with finite support, which is a \emph{finite network}, written as $\hat{N}$, in definition~\ref{proc:finnet:def}.
\begin{definition}[Finite Network]\label{proc:finnet:def}
A \emph{finite network} $\hat{N}$ is a \emph{finite function} (introduced in \cref{sec:finfun}) from process names to process terms, i.e.,
$\hat{N} : \textbf{PName} \to_f \textbf{Process}$.
\end{definition}

Lastly, we have the \emph{process removal} and the \emph{network restriction} operations (cf. \cref{proc:network:removal-restriction:def}).
The process removal operation $N \setminus \pid{p}$ updates $N$ by setting $N(\pid{p}) = \pnil$, which is commutative and associative. We can thus generalise it to finite sets of processes, written as $N \setminus \{\pids{p}\}$.
The network restriction operation is the opposite, which extracts a subnetwork containing only a finite set of processes $\{\pids{p}\}$ from a given network $N$.
The resulting subnetwork is a finite network written as $N \restrict \{\pids{p}\}$.
\begin{figure}[b]
\vspace{-1.0em}
\begin{spreadlines}{0pt}
\begin{align*}
\begin{aligned}[c]
    (N \setminus \pid{p})(\pid{q}) &\triangleq 
    \begin{cases}
    \ \pnil &\text{if $\pid{p} = \pid{q}$}\\[-0.5ex]
    \ N(\pid{q}) &\text{otherwise}
    \end{cases}
\end{aligned}
\quad\quad
\begin{aligned}[c]
    (N \restrict \{\pids{p}\})(\pid{p}) &\triangleq 
    \begin{cases}
    \ N(\pid{p}) &\text{if $\pid{p} \in \{\pids{p}\}$}\\[-0.5ex]
    \ \pnil &\text{otherwise}
    \end{cases}
\end{aligned}
\end{align*}
\end{spreadlines}
\caption{Process Removal and Network Restriction}
\label{proc:network:removal-restriction:def}
\vspace{-1.0em}
\end{figure}

\subsection{Semantics}
\label{sec:proc-net:semantics}
Similar to choreographies, we define an LTS semantics for \emph{network configurations} $\langle N, \cstore, \mathscr{P} \rangle$,
with the transition labels used in the LTS of choreographies (\cref{chor:semantics:transition-labels}).
This is important for establishing the behavioural equivalence between choreographies and their projected networks in \cref{sec:epp:correctness}.

\subsubsection{Process Name Substitution}
Similar to modelling the execution of procedure calls in choreographies,
we write $P[\pid{r}/\pid{s}]$ and $I[\pid{r}/\pid{s}]$ for the substitution of all occurrences of $\pid{s}$ with $\pid r$ in a process term $P$ and an instruction $I$ respectively.
We write $P[\pids{r}/\pids{s}]$ as a shorthand for substituting many process names in $P$ all at once, where $\pids{r}$ and $\pids{s}$ have the same length.
Details are shown in definition~\ref{proc:semantics:proc-substitution}.

\subsubsection{Sequential Composition}
\label{sec:proc:semantics:seqcomp}
In \cref{proc:semantics:seqcomp} we formally define the sequential composition operation for process terms and networks,
where we overload the notation $(\fatsemi)$.
Similar to sequential composition for choreographies, it is merely an operation for defining the network semantics and EPP rather than a part of the syntax of process and network languages.
We define sequential composition for networks as the point-wise extension of sequential composition for process terms.
Both of the set of process terms with the sequential composition operator and the set of networks with the sequential composition operator form monoids. Details are given in propositions~\ref{app:proc:seqcomp:prop} and \ref{app:net:seqcomp:prop}.
\begin{figure}
\vspace{-1.0em}
\begin{align*}
\begin{aligned}[c]
\pnil \fatsemi P &\triangleq P
\end{aligned}
\quad\quad\quad
\begin{aligned}[c]
(I\seq P') \fatsemi P &\triangleq I \seq (P' \fatsemi P)
\end{aligned}
\quad\quad\quad
\begin{aligned}[c]
\precvl p \fatsemi P &\triangleq \pid{p} \,\&\,\{\albl_i : P_i \fatsemi P\}_{i \in \idx{I}}
\end{aligned}
\end{align*}
\vspace{-1.7em}
\begin{align*}
(N \fatsemi M)\, (\pid{p}) \triangleq (N\, (\pid{p}))\fatsemi (M \, (\pid{p}))
\end{align*}
\caption{Sequential Composition of Process Terms and Networks}
\label{proc:semantics:seqcomp}
\vspace{-1.0em}
\end{figure}

\subsubsection{LTS for Networks}
We give the formal definition of the LTS for networks in \cref{proc:net:semantics:fig}.
Again, we write only the pre-transition label $\rho$ in the transition rules when the choice map $\zeta$ is empty.

\begin{figure}[b]
\vspace{-1.0em}
\begin{mathpar}
    \inferrule*[right={\scriptsize assign}] { \cstore(\pid{p}) \vdash e \downarrow v }
    {\langle \pid{p}[x \metaDef e \seq P], \cstore, \mathscr{P} \rangle \xrightarrow[]{\tlassign{\pid{p}}} \langle \pid{p}[P], \cstore[\pid{p}.x \mapsto v], \mathscr{P} \rangle}

    \inferrule*[right={\scriptsize com}] { \cstore(\pid{p}) \vdash e \downarrow v } 
    {\langle \pid{p}[\psend{q}{e} \seq P] \parcomp \pid{q}[\precv{p}{x} \seq Q], \cstore, \mathscr{P} \rangle \xrightarrow[]{\pid{p}.v \tto \pid{q}} \langle \pid{p}[P] \parcomp \pid{q}[Q], \cstore[\pid{q}.x \mapsto v], \mathscr{P} \rangle}

    \inferrule*[right={\scriptsize sel}] { j \in \mathcal{I} } 
    {\langle \pid{p}[\pseli{q}{j} \seq P] \parcomp \pid{q}[\precvl p ], \cstore, \mathscr{P} \rangle \xrightarrow[]{\genseli{j}} \langle \pid{p}[P] \parcomp \pid{q}[P_j], \cstore, \mathscr{P} \rangle}

    \inferrule*[right={\scriptsize cond-then}] { \cstore(\pid{p}) \vdash e \downarrow \mathit{true}} 
    {\langle \pid{p}[\condif e \condthen P_1 \condelse P_2 \seq Q], \cstore, \mathscr{P} \rangle \xrightarrow[]{\tlthen{p}} \langle \pid{p} [P_1 \fatsemi Q], \cstore, \mathscr{P} \rangle }

    \inferrule*[right={\scriptsize cond-else}] { \cstore(\pid{p}) \vdash e \downarrow v \\ v \neq \mathit{true}} 
    {\langle \pid{p}[\condif e \condthen P_1 \condelse P_2 \seq Q], \cstore, \mathscr{P} \rangle \xrightarrow[]{\tlelse{p}} \langle \pid{p} [P_2 \fatsemi Q], \cstore, \mathscr{P} \rangle }
    
    \inferrule*[right={\scriptsize choice-l}]{\langle \pid{p}[P] \parcomp N, \cstore, \mathscr{P}\rangle \xrightarrow[]{\mu} \langle \pid{p}[P'] \parcomp N', \cstore', \mathscr{P}\rangle \\ \pid{p} \in \pn(\mu) \\ \pid{p} \notin \support(N) \\ \pid{p} \notin \support(N') \\ \support(N) \cup \{\pid{p}\} = \pn(\mu)}{\langle \pid{p}[P \choice Q \seq R] \parcomp N, \cstore, \mathscr{P}\rangle \xrightarrow[]{\mu.\mathit{addChoice}\; \pid{p}\; \mathit{left}} \langle\pid{p}[P' \fatsemi R] \parcomp N', \cstore', \mathscr{P} \rangle}
    
    \inferrule*[right={\scriptsize choice-r}]{\langle \pid{p}[Q] \parcomp N, \cstore, \mathscr{P}\rangle \xrightarrow[]{\mu} \langle \pid{p}[Q'] \parcomp N', \cstore', \mathscr{P}\rangle \\ \pid{p} \in \pn(\mu) \\ \pid{p} \notin \support(N) \\ \pid{p} \notin \support(N') \\ \support(N) \cup \{\pid{p}\} = \pn(\mu)} {\langle \pid{p}[P \choice Q \seq R] \parcomp N, \cstore, \mathscr{P}\rangle \xrightarrow[]{\mu.\mathit{addChoice}\; \pid{p}\; \mathit{right}} \langle\pid{p}[Q' \fatsemi R] \parcomp N', \cstore', \mathscr{P} \rangle}

    \inferrule*[right={\scriptsize choice-skip}] { } 
    {\langle \pid{p}[\cnil + \cnil \seq P], \cstore, \mathscr{P} \rangle \xrightarrow[]{\tlskip{p}} \langle \pid{p} [P], \cstore, \mathscr{P} \rangle }

    \inferrule*[right={\scriptsize call}] {X(\pids{r}) = Q \in \mathscr{P}} 
    {\langle \pid{p}[X(\pids{q}) \seq P], \cstore, \mathscr{P} \rangle \xrightarrow[]{\tlcall{p}} \langle \pid{p} [Q[\pids{q}/\pids{r}] \fatsemi P], \cstore, \mathscr{P} \rangle }
    \quad\;
    \inferrule*[right={\scriptsize par}] {\langle N, \cstore, \mathscr{P} \rangle \xrightarrow[]{\mu} \langle N', \cstore', \mathscr{P} \rangle \\ N \disjoint M} 
    {\langle N \parcomp M, \cstore, \mathscr{P} \rangle \xrightarrow[]{\mu} \langle N' \parcomp M, \cstore', \mathscr{P} \rangle }
\end{mathpar}
\caption{Semantics of Networks}
\label{proc:net:semantics:fig}
\vspace{-1.0em}
\end{figure}

As distributed implementations of choreographies, we have three basic rules for local assignments, communications, and selections in a network.
Rule \textsc{assign} models the execution of an assignment on the process $\pid{p}$, where the result of evaluating an expression $e$ in the local store of $\pid{p}$, namely $v$, is stored into $\pid p$'s local variable $x$.
Rule \textsc{com} models the communication between two processes $\pid{p}$ and $\pid{q}$ in a network, where $\pid{p}$ evaluates an expression $e$ in its local store and sends the result to $\pid{q}$. Upon receipt of the value (from $\pid p$), $\pid{q}$ stores it in its local variable $x$.
Rule \textsc{sel} is similar to the rule \textsc{com} but for communicating selection labels.
The network contains a selection instruction at $\pid{p}$ and a compatible branching term at $\pid{q}$ that performs a transition in which $\pid{p}$ sends a selection label $\albl_j$ to $\pid{q}$, and upon receipt of $\albl_j$, $\pid q$ continues with the process term $P_j$ associated with $\albl_j$.

The execution of a conditional in networks is similar to choreographies but from the perspective of a guard process $\pid p$, modelled by the rules \textsc{cond-then} and \textsc{cond-else}.
Rule \textsc{cond-then} states that, if a guard expression $e$ is evaluated to $\mathit{true}$ in the local store of $\pid{p}$, then the network performs a transition with label $\tlthen{p}$ and continues with the then-branch $P_1$, followed by the continuation $Q$ which is appended to $P_1$ by the sequential composition operator $(\fatsemi)$.
Rule \textsc{cond-else} instead continues with the else-branch $P_2$ if the expression $e$ is not evaluated to $\mathit{true}$.
Since the resolution of a nondeterministic choice is decided by the system rather than a single process, processes other than the one at which the choice is resolved may also be involved.
Example~\ref{ex:net-nested-choice} shows the transition of the network resulting from the EPP of the nested choreographic choices $\cn{pq}$ and $\cn{qp}$ in example~\ref{ex:chor-nested-choice}:
\begin{example}[Nondeterministic Choice]\label{ex:net-nested-choice} In the network $\atomicn{p}{\psend{q}{1} \choice \psend{q}{2}} \parcomp \atomicn{q}{\precv{p}{x} + \precv{p}{y}}$, resolving the choice at $\pid p$ requires interacting with $\pid q$ and vice versa. 
If the choice is resolved to the left branch at $\pid{p}$ and the right branch at $\pid{q}$, then the network can perform a transition by applying rule \textsc{choice-l} to $\pid{p}$ and rule \textsc{choice-r} to $\pid{q}$ in any order such that $\pid p$ sends the value $1$ to $\pid q$, which stores it at $y$:
    {
    \[ \atomicn{p}{\psend{q}{1} \choice \psend{q}{2}} \parcomp \atomicn{q}{\precv{p}{x} + \precv{p}{y}} \xrightarrow[]{\addChoiceg{(\addChoiceg{\tlcom{p}{1}{q}}{p}{left})}{q}{right}} \atomicn{p}{\pnil} \parcomp \atomicn{q}{\pnil} \]
    }
\end{example}
Formally, rules \textsc{choice-l} and \textsc{choice-r} modelling the execution of a nondeterministic choice include not only the process $\pid{p}$, but also a network $N$ involved in resolving the choice.
The requirements $\pid{p} \in \pn(\mu)$, $\pid{p} \notin \support(N)$, and $\support(N) \cup \{\pid{p}\} = \pn(\mu)$ ensure that the network $N$ is the minimal network that is involved in transition for resolving the choice.
When a choice is resolved without involving any process other than $\pid{p}$, for instance, $\pid{p}[ x \metaDef 1 \choice x \metaDef 2]$, the network $N$ in the rules can simply be $\nnil$.
Rule \textsc{choice-l} states that if the process term $P$ can perform a $\mu$-transition to $P'$ and the rest of the network $N$ can perform a compatible $\mu$-transition to $N'$, then the choice at $\pid{p}$ is resolved to the left branch by the transition, and $\pid p$ continues with the process term $P' \fatsemi R$.
Rule \textsc{choice-r} is symmetric to \textsc{choice-l} where the choice at $\pid{p}$ is resolved to the right branch $Q$.
We also have a garbage collection rule \textsc{choice-skip} corresponds to the rules \textsc{skip-l} and \textsc{skip-r} in the semantics of choreographies:
if the process $\pid{p}$ has a choice between two $\pnil$s, then $\pid p$ skips $\pnil \choice \pnil$ and proceeds with the continuation $P$. Example~\ref{ex:net-gc} illustrates the need of the garbage collection rule.
\begin{example}[A Nondeterministic Choice Needs the Garbage Collection Rule]\label{ex:net-gc}
The network resulting from the EPP of $\cn{gc}$ in example~\ref{ex:chor-gc} is $\atomicn{q}{\pnil \choice \pnil \seq \precv{p}{y}} \parcomp \atomicn{p}{x \metaDef 42 \seq \psend{q}{42}}$. It requires rule \textsc{choice-skip} together with rule \textsc{par} introduced later to get rid of the unmeaningful choice in $\pid q$:
{
\[ \atomicn{q}{\pnil \choice \pnil \seq \precv{p}{y}} \parcomp \atomicn{p}{x \metaDef 42 \seq \psend{q}{42}} \xrightarrow[]{\tlskip{q}} \atomicn{q}{\precv{p}{y}} \parcomp \atomicn{p}{x \metaDef 42 \seq \psend{q}{42}}\]
}
\end{example}

Rule \textsc{call} models recursions.
It states that if there is a procedure definition $X(\pids{r}) = Q$ in the set of procedure definitions $\mathscr{P}$,
then $\pid{p}$ can execute the procedure call $X(\pids{q})$ by substituting the $\pids{q}$ for the formal parameters $\pids{r}$ in the body of the procedure $Q$, followed by the continuation $P$.
Rule \textsc{par}
models the parallel execution of independent transitions: for two disjoint networks $N$ and $M$,
if $N$ can perform a $\mu$-transition, then $N \pp M$ can perform the same transition leaving $M$ unaffected.

\section{Endpoint Projection (EPP)}
\label{sec:epp}
As we have demonstrated in previous sections, choreographies are compiled into networks by EPP.
A key ingredient in the design of EPP is handling \emph{knowledge of choice (KoC)} correctly, as we have briefly discussed in examples~\ref{ex:CinvC} and \ref{ex:EPPinvC}.
Thus, before formally defining EPP, we introduce the \emph{merge operation} ($\sqcup$) for process terms which takes care of KoC.

\subsection{Merge and the Branching Order}
\label{epp:merge}
In the literature~\cite{M23}, the merge operation $(\sqcup)$ is defined as a \emph{partial function} of the type signature: $\textbf{Process} \rightarrow \textbf{Process} \rightarrow \opt{Process}$,
where $\opt{Process}$ is an option type. It takes two process terms and returns a ``some” term $\some{P}$ if they are indeed mergeable, where $P$ is the resulting process term; otherwise, it is undefined and returns a ``none'' term represented as $\none$.
Moreover, $(\textbf{Process}, \sqcup)$ has been claimed~\cite{M23} to form a \emph{partial semilattice}, meaning that properties of semilattices is satisfied when the merge operation is defined.

In our work, we give a more precise definition to the merge operation, where the merge operation is mutually defined for process terms and instructions with input process terms and instructions being option types as well. Thus, we can define the merge operation as a total function and obtain \emph{join-semilattice} structures $(\opt{Process}, \sqcup)$ and $(\opt{PInstr}, \sqcup)$. The type signatures are given below:
\begin{spreadlines}{0pt}
\begin{align*}
\begin{aligned}[c]
    \sqcup : \opt{Process}\rightarrow \opt{Process} \rightarrow \opt{Process}
\end{aligned}
\quad\quad\quad
\begin{aligned}[c]
    \sqcup : \opt{PInstr}\rightarrow \opt{PInstr} \rightarrow \opt{PInstr}
\end{aligned}
\end{align*}
\end{spreadlines}
\\[-4.0ex]
We formally define $(\sqcup)$ for process terms and instructions in \cref{merge:proc:fig} and \cref{merge:instr:fig}. Note that $\opterm{P}$ is a term of the option type $\opt{Process}$, i.e., either $\some{P}$ or $\none$, and $\opterm{I}$ is a term of $\opt{PInstr}$.
\begin{figure}
\vspace{-1.0em}
\begin{subfigure}{1.0\textwidth}
\begin{spreadlines}{0pt}
\begin{align*}
\begin{aligned}[c]
    \some{\pnil} \sqcup \some{\pnil}\ \triangleq\ \some{\pnil}
\end{aligned}
\quad
\begin{aligned}[c]
    \some{I_1 \seq P_1} \sqcup \some{I_2 \seq P_2} \ \triangleq
    \begin{cases}
    \;\some{I \seq P} &\text{if $\some{I_1} \sqcup \some{I_2} \eqw \some{I}$ and $\some{P_1} \sqcup \some{P_2} \eqw\some{P}$} \\[-0.5ex]
    \;\none &\text{otherwise}
    \end{cases}
\end{aligned}
\end{align*}
\vspace{-1.0em}
\end{spreadlines}
\begin{spreadlines}{0pt}
\begin{align*}
    &\some{\precvl {p_1}} \sqcup \some{\precvlj {p_2}} \ \triangleq\\
    &\begin{cases}
    \some{\{\albl_k \!:\! R_k\}_{k\in \idx{I} \cap \idx{J}} \!\cup\! \{\albl_i \!:\! P_i\}_{i\in \idx{I} \setminus \idx{J}} \!\cup\! \{\albl_j \!:\! Q_j\}_{j\in \idx{J} \setminus \idx{I}}}  &\!\!\!\!\text{if $\pid{p_1}\!=\!\pid{p_2}$, for each $k\!\in\! \idx{I} \!\cap\! \idx{J},\some{P_k} \!\sqcup\! \some{Q_k} \eqw \some{R_k}$} \\[-0.5ex]
    \none &\!\!\!\!\text{otherwise}
    \end{cases}
\end{align*}
\end{spreadlines}
\vspace{-1.0em}
\begin{align*}
    \opterm{P_1} \sqcup \opterm{P_2} \triangleq\ \none \quad\text{if $\opterm{P_1}$ and $\opterm{P_2}$ do not match the above forms}
\end{align*}
\vspace{-1.5em}
\caption{The Merge Operation for Process Terms}
\label{merge:proc:fig}
\end{subfigure}
\begin{subfigure}{1.0\textwidth}
\begin{spreadlines}{0pt}
\begin{align*}
    &\some{\condif e_1 \condthen P_1 \condelse P_2} \sqcup \some{\condif e_2 \condthen Q_1 \condelse Q_2} \;\triangleq\\
    &\begin{cases}
    \;\some{\condif e_1 \condthen R_1 \condelse R_2} \quad &\text{if $e_1 = e_2$ and $\some{P_1} \sqcup \some{Q_1} \eqw \some{R_1}$ and $\some{P_2} \sqcup \some{Q_2} \eqw \some{R_2}$} \\[-0.5ex]
    \;\none &\text{otherwise}
    \end{cases}
\end{align*}
\end{spreadlines}
\vspace{-1.0em}
\begin{spreadlines}{0pt}
\begin{align*}
    \some{P_1 \choice P_2} \sqcup \some{Q_1 \choice Q_2} &\triangleq
    \begin{cases}
    \;\some{R_1 \choice R_2} \quad &\text{if $\some{P_1} \sqcup \some{Q_1} \eqw \some{R_1}$ and $\some{P_2} \sqcup \some{Q_2} \eqw \some{R_2}$} \\[-0.5ex]
    \;\none &\text{otherwise}
    \end{cases}\\[0.2em]
    \some{I} \sqcup \some{I} &\triangleq\ \some{I} \quad\text{if $I$ is not a conditional or a choice}\\[-0.5em]
    \opterm{I_1} \sqcup \opterm{I_2} &\triangleq\ \none \quad\;\text{if $\opterm{I_1}$ and $\opterm{I_2}$ do not match the above forms}
\end{align*}
\end{spreadlines}
\vspace{-1.5em}
\caption{The Merge Operation for Process Instructions}
\label{merge:instr:fig}
\vspace{0.5em}
\end{subfigure}
\caption{The Merge Operation for Process Terms and Process Instructions}
\label{merge:fig}
\vspace{-1.0em}
\end{figure}
The merge of two terminated processes is a terminated process.
The merge of two process terms with the form of $I \seq P$ returns the merge of the instructions and the merge of the continuations chained up by $(\!\seq\!)$, if both are mergeable, and returns a none term $\none$ otherwise.
The merge of two branching terms is the most interesting case of the merge operation.
If the two branching terms have the same process name, their merge contains the union of their selection labels.
For each common label $\albl_k$ with $k \in \idx{I} \cap \idx{J}$, the associated continuation is $P_k \sqcup Q_k$, provided they are mergeable.
Labels appearing in only one branching term
remain associated with their corresponding process terms.
If the process names differ ($\pid{p_1} \neq \pid{p_2}$), or any $P_k \sqcup Q_k$ results in $\none$, then the result is $\none$.
As discussed in \cref{sec:proc-net:syntax}, we represent the branching terms as sorted lists in Lean 4, enabling the merge operation to be implemented with a merge sort algorithm efficiently.
We are thus able to formulate and reason about properties of merge with straightforward inductions.

The merge of two instructions is mostly straightforward.
For instructions other than conditionals and nondeterministic choices, the merge results in the instruction itself if the two instructions are identical, and $\none$ otherwise.
For conditionals, if the two conditionals have the same guard expression, i.e., $e_1 = e_2$, and their corresponding branches are mergeable, i.e., $\some{P_1} \sqcup \some{Q_1} \eqw \some{R_1}$ and $\some{P_2} \sqcup \some{Q_2} \eqw \some{R_2}$, then their merge is $\condif e_1 \condthen R_1 \condelse R_2$. Otherwise, the merge is $\none$. The merge operation for nondeterministic choice is defined similarly: the corresponding branches are merged whenever they are mergeable, and the result is $\none$ otherwise.

Since the merge operation is defined mutually over process terms and instructions, all associated definitions and properties apply to both. Their proofs likewise proceed by mutual induction on the structure of process terms and instructions. For brevity, we present only the results for process terms, as those for instructions follow in the same way.

The set $\opt{Process}$ together with the merge operation, $(\opt{Process}, \sqcup)$, forms a \emph{join-semilattice}, where $(\sqcup)$ is the \emph{join operator}. Thus, the merge operation is idempotent, commutative, and associative (cf. proposition~\ref{app:merge:semilattice:prop}).
In Lean 4, 
we instantiate $(\opt{Process}, \sqcup)$ to \code{SemilatticeSup}, which is the the join-semilattice implementation in Mathlib~\cite{mathlib2020}. This allows us to reuse Mathlib's existing theories of join-semilattices directly in our mechanisation.

To reason about the correctness of EPP, we define a merge operation for finite networks -- here finiteness is particularly important to guarantee that this operation is decidable.
Formally, we define the merge operation for two finite networks $\hat{N}$ and $\hat{M}$ by lifting the merge operation defined for $\opt{Process}$, overloading the notation merge operator $(\sqcup)$, with the type signature:
\[ \sqcup : \setof{FinNet} \rightarrow \setof{FinNet} \rightarrow \opt{FinNet} \]
Such a lifting is defined as a general function $\fname{mapBin}$ for finite functions. Given a binary operation $\op$ defined on the set $\optg{B}$,
we can lift $\op$ to finite functions $\hat{f}$ and $\hat{g}$ of the type $ A \finfun B$:
\begin{align*}
    \fname{mapBin} \;\hat{f}\; \hat{g}\; \op \triangleq
    \begin{cases}
    \some{a \mapsto b_a} &\!\!\!\!\text{if $\support(\hat{f}) = \support(\hat{g})$ and for each $a \in \support(\hat{f}), \,\hat{f}(a) \op \hat{g}(a) =\,\some{b_a} $} \\[-0.5ex]
    \none &\!\!\!\!\text{otherwise}
    \end{cases}
\end{align*}
We then define the merge operation for finite networks as $\hat{N} \sqcup \hat{M} \triangleq \fname{mapBin}\; \hat{N}\; \hat{M}\; \sqcup$, where the $(\sqcup)$ in $\fname{mapBin}\; \hat{N}\; \hat{M}\; \sqcup$ is the merge for process terms, served as $\op$ in the definition of $\fname{mapBin}$.

Utilising the merge operation, definition~\ref{merge:mbrs:def} gives an ordering relation on $\opt{Process}$, namely a \emph{more branches relation} ($\mbrs$).
Intuitively, $P \mbrs Q$ means that if neither $P$ nor $Q$ contains any branching terms, then $P$ and $Q$ are identical. Otherwise, each branching term in $P$ must contain at least as many labels as its corresponding branching term in $Q$.
\begin{definition}[$\,\mbrs$ on \emph{$\opt{Process}$}] \label{merge:mbrs:def}
    For any $\opterm{P}$ and $\opterm{Q}$ of the type $\opt{Process}$, $\opterm{P} \mbrs \opterm{Q}$ iff $\opterm{P} \sqcup \opterm{Q} = \opterm{P}$.
\end{definition}

With the join-semilattice construction, we can derive that the relation $(\mbrs)$ is a \emph{partial order} on $\opt{Process}$, which means that it satisfies the properties of reflexivity, antisymmetry, and transitivity 
(cf. proposition~\ref{app:merge:mbrs:prop}). In Lean 4, we can again obtain these properties from Mathlib.
We also discover an additional algebraic property that connects merge $(\sqcup)$ and sequential composition $(\fatsemi)$: under $\mbrs$, $\sqcup$ and $\fatsemi$ exchange as formally stated in proposition~\ref{merge:seqcomp:mbrs}. we lift $(\fatsemi)$ to $\opt{Process}$, if either $\opterm{P} $ or $\opterm{Q}$ is $\none$, then $\opterm{P} \fatsemi \opterm{Q}$ results in $\none$, otherwise, if both $\opterm{P}$ and $\opterm{Q}$ are $\some{P}$ and $\some{Q}$, then $\opterm{P} \fatsemi \opterm{Q}$ is $\some{P \fatsemi Q}$.
\begin{proposition} \label{merge:seqcomp:mbrs}
For any terms $\opterm{P_1}$, $\opterm{P_2}$, $\opterm{Q_1}$, and $\opterm{Q_2}$ in the set \emph{$\opt{Process}$}:
\[
(\opterm{P_1} \sqcup \opterm{P_2}) \fatsemi (\opterm{Q_1} \sqcup \opterm{Q_2}) \mbrs (\opterm{P_1} \fatsemi \opterm{Q_1}) \sqcup (\opterm{P_2} \fatsemi \opterm{Q_2})
\]
\end{proposition}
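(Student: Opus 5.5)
The plan is to reduce the statement to a single monotonicity lemma for $\fatsemi$ with respect to $\mbrs$, and then conclude by join-semilattice reasoning (proposition~\ref{app:merge:semilattice:prop} and proposition~\ref{app:merge:mbrs:prop}).

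First, observe that $\none$ is the top element of $(\opt{Process},\mbrs)$, since $\none \sqcup \opterm{P} = \none$. Hence $\opterm{P}\mbrs\opterm{Q}$ means that $\opterm{P}$ is an upper bound of $\opterm{Q}$. The right-hand side of the statement is the join of $\opterm{P_1}\fatsemi\opterm{Q_1}$ and $\opterm{P_2}\fatsemi\opterm{Q_2}$. By the universal property of joins, it therefore suffices to show
\[
(\opterm{P_1}\sqcup\opterm{P_2})\fatsemi(\opterm{Q_1}\sqcup\opterm{Q_2}) \mbrs \opterm{P_i}\fatsemi\opterm{Q_i} \quad (i=1,2).
\]
By the semilattice laws we have $\opterm{P_1}\sqcup\opterm{P_2}\mbrs\opterm{P_i}$ and $\opterm{Q_1}\sqcup\opterm{Q_2}\mbrs\opterm{Q_i}$. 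So both inequalities follow from one monotonicity lemma:
\[
\text{if } \opterm{P}\mbrs\opterm{P'} \text{ and } \opterm{Q}\mbrs\opterm{Q'}, \text{ then } \opterm{P}\fatsemi\opterm{Q}\mbrs\opterm{P'}\fatsemi\opterm{Q'}.
\]

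To prove the lemma, first dispose of the $\none$ cases. If $\opterm{P}$ or $\opterm{Q}$ is $\none$, the left-hand side is $\none$, which is the top element, so the inequality holds. Otherwise $\opterm{P}=\some{P}$, and $\some{P}\sqcup\opterm{P'}=\some{P}$ forces $\opterm{P'}=\some{P'}$; similarly $\opterm{Q'}=\some{Q'}$. We are left to show that $\some{P}\sqcup\some{P'}=\some{P}$ and $\some{Q}\sqcup\some{Q'}=\some{Q}$ imply $\some{P\fatsemi Q}\sqcup\some{P'\fatsemi Q'}=\some{P\fatsemi Q}$. I would prove this by structural induction on $P$, generalising over $P'$, and inverting the merge equation in each case. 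The definition of merge forces $P'$ to have the same top constructor as $P$.
\begin{itemize}
\item If $P=\pnil$, then $P'=\pnil$ and the goal is exactly the hypothesis on $Q,Q'$.
\item If $P=I\seq P_0$, then $P'=I'\seq P_0'$ with $\some{I}\sqcup\some{I'}=\some{I}$ and $\some{P_0}\sqcup\some{P_0'}=\some{P_0}$. Since $(I\seq P_0)\fatsemi Q = I\seq(P_0\fatsemi Q)$ leaves the instruction untouched, the instruction component is reused verbatim. The continuation follows from the induction hypothesis. Notably, no mutual induction into instructions is needed, because process-level $\fatsemi$ never enters instructions.
\item If $P$ is a branching term, then $P'$ is a branching term on the same process $\pid{p}$, and the labels of $P'$ form a subset of those of $P$. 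For every common label $\albl_k$ we have $\some{P_k}\sqcup\some{P'_k}=\some{P_k}$. Since $\fatsemi$ maps $\_\fatsemi Q$ (respectively $\_\fatsemi Q'$) over the continuations while keeping the labels, the induction hypothesis applies to each common label. Labels occurring only in $P$ keep their continuation $P_i\fatsemi Q$ unchanged.
\end{itemize}

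The main obstacle I expect is the branching case in the Lean representation via sorted label lists. There I need two auxiliary facts. The first is that mapping $\_\fatsemi Q$ over a sorted list preserves its keys and sortedness. The second is that the list-level merge commutes with such key-preserving maps whenever the pointwise merges succeed; I would prove this by induction on the pair of lists, following the merge-sort recursion. The induction hypothesis must be available for the continuations inside the list, so the induction should be on the size of $P$ (or use a nested-inductive recursor) rather than plain structural recursion. I would set this up first, before tackling the remaining cases.
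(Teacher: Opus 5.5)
Your argument is correct. The paper gives no written proof of this proposition; it is only asserted as part of the Lean development and then invoked in its four-term shape for theorem~\ref{merge:net:transition}.

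The natural point of comparison is therefore a direct simultaneous induction on $\opterm{P_1}$ and $\opterm{P_2}$. That route has to split on whether $\opterm{P_1}\sqcup\opterm{P_2}$ and $\opterm{Q_1}\sqcup\opterm{Q_2}$ are $\none$. It must also handle common, left-only and right-only labels in the branching case, and appeal to idempotency of instruction merge in the $I\seq P$ case.

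Your factorisation hands all order-theoretic work to the generic join laws (proposition~\ref{app:merge:semilattice:prop}). What remains is a single two-term induction in which instructions are copied verbatim and the smaller branching term has no labels of its own.

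It also shows what the proposition really says, because your reduction is an equivalence. Instantiate the statement with $\opterm{P_1}=\opterm{P}$, $\opterm{P_2}=\opterm{P'}$, $\opterm{Q_1}=\opterm{Q}$, $\opterm{Q_2}=\opterm{Q'}$, where $\opterm{P}\mbrs\opterm{P'}$ and $\opterm{Q}\mbrs\opterm{Q'}$. This gives $\opterm{P}\fatsemi\opterm{Q}\mbrs(\opterm{P}\fatsemi\opterm{Q})\sqcup(\opterm{P'}\fatsemi\opterm{Q'})$, which by associativity and idempotency is $\opterm{P}\fatsemi\opterm{Q}\mbrs\opterm{P'}\fatsemi\opterm{Q'}$. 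So the proposition is exactly monotonicity of $\fatsemi$ in both arguments with respect to $\mbrs$. That is also the shape one would naturally use in, e.g., the \textsc{cond-then} and \textsc{call} cases of theorem~\ref{merge:mbrs:net:transition}.

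One point of precision in the branching case: the list fact you need is not literally a commutation, because the two lists are mapped by different functions, $(\cdot)\fatsemi Q$ and $(\cdot)\fatsemi Q'$. State it as follows. If every key of $L'$ is a key of $L$, and for each common key $k$ merging $f(L_k)$ with $g(L'_k)$ returns $f(L_k)$, then merging $\mathrm{map}\,f\,L$ with $\mathrm{map}\,g\,L'$ returns $\mathrm{map}\,f\,L$.
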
 
We also define the more branches relation for networks by pointwise lifting the more branches relation on $\opt{Process}$ to the set of networks \setof{Net}, shown in definition~\ref{merge:mbrs:net:def}.
\begin{definition}[$\,\mbrs$ on \emph{\setof{Net}}]\label{merge:mbrs:net:def}
    For any $N$ and $M$ in $\setof{Net}$, $N \mbrs M$ iff for any process $\pid{p}$, $\some{N (\pid{p})} \mbrsw \some{M (\pid{p})}$.
\end{definition}
Intuitively, $N \mbrs M$ means either $N$ and $M$ are identical, or for each process $\pid{p}$, the process term of $\pid{p}$ in $N$ has at least the same number of branches as the corresponding process term of $\pid{p}$ in $M$, provided that both process terms contain branching terms.
The relation $(\mbrs)$ is preserved by network transitions, as formalised in \cref{merge:mbrs:net:transition}. Specifically, if $N \mbrs M$ and the smaller network $M$ performs a transition to $M'$, then $N$ can perform the same transition to some $N'$ such that $N' \mbrs M'$.
\begin{theorem}\label{merge:mbrs:net:transition}
    For any networks $N$ and $M$, if $N \mbrs M$ and $\langle M, \cstore, \mathscr{P} \rangle \xrightarrow[]{\mu} \langle M', \cstore', \mathscr{P} \rangle$, then there exists an $N'$ such that $\langle N, \cstore, \mathscr{P} \rangle \xrightarrow[]{\mu} \langle N', \cstore', \mathscr{P} \rangle$ and $N' \mbrs M'$.
\end{theorem}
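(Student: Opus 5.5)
We argue by induction on the derivation of $\langle M, \cstore, \mathscr{P} \rangle \xrightarrow[]{\mu} \langle M', \cstore', \mathscr{P} \rangle$, generalising over $N$. The basic observation is a process-level characterisation of $\mbrs$: from the definition of merge (\cref{merge:fig}), $\some{P} \mbrsw \some{Q}$ holds iff $P$ and $Q$ share the same top-level shape, with identical instructions other than conditionals and choices. Their subterms are again related by $\mbrs$. In the branching case $P = \pid{p}\br\{\albl_i : P_i\}_{i\in\idx{I}}$ and $Q = \pid{p}\br\{\albl_j : Q_j\}_{j\in\idx{J}}$ we have $\idx{J}\subseteq\idx{I}$ and $P_j \mbrs Q_j$ for each $j\in\idx{J}$. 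I would prove this inversion lemma first, by mutual induction on process terms and instructions using the sorted-list representation of branchings. Because $\pnil$ merges only with $\pnil$, it also gives $\support(N) = \support(M)$ whenever $N \mbrs M$.

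Next I need a small toolkit of compatibility lemmas. \emph{(i)} Sequential composition is monotone: if $P \mbrs P'$ and $Q \mbrs Q'$ then $P \fatsemi Q \mbrs P' \fatsemi Q'$. This follows from \cref{merge:seqcomp:mbrs} by instantiating $P_1=P$, $P_2=P'$, $Q_1=Q$, $Q_2=Q'$, together with $P\sqcup P' = P$. \emph{(ii)} Process-name substitution preserves $\mbrs$. \emph{(iii)} Network decomposition: if $N \mbrs \pid{p}[Q]\parcomp M_0$ with $\pid p\notin\support(M_0)$, then $N = \pid{p}[P]\parcomp N_0$ with $P\mbrs Q$, $N_0\mbrs M_0$ and $\support(N_0)=\support(M_0)$, and disjointness is preserved. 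The same holds for the binary pattern $\pid p[\cdot]\parcomp\pid q[\cdot]$ and for $N\parcomp M$ with disjoint supports. \emph{(iv)} Conversely, $\parcomp$ of pointwise $\mbrs$-related disjoint networks is again related.

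With these in hand, the rule cases go as follows.
\begin{itemize}
\item For \textsc{assign}, \textsc{com}, \textsc{cond-then}, \textsc{cond-else}, \textsc{call} and \textsc{choice-skip}, the inversion lemma forces $N$ to contain the same redex. The same rule applies with the same evaluation premise, so the store and the label coincide. The residuals are related by (i) and (ii); for \textsc{call} the body $Q$ is shared.
\item For \textsc{sel}, the sender's instruction is identical. The receiver in $N$ has a superset of labels, so $\albl_j$ is still offered and its continuation $P_j\mbrs Q_j$.
\item For \textsc{par}, decompose $N$ by (iii), apply the induction hypothesis to the left component, and recompose by (iv). Disjointness transfers because supports coincide.
\item For \textsc{choice-l} and \textsc{choice-r}, decompose $N = \pid p[P\choice Q\seq R]\parcomp N_0$ with the branches related. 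Apply the induction hypothesis to the premise transition of $\pid p[P_M]\parcomp M_0$, using $\pid p[P]\parcomp N_0$ as the larger network. Then re-check the side conditions $\pid p\notin\support(N_0')$ and $\support(N_0)\cup\{\pid p\}=\pn(\mu)$.
\end{itemize}

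The main obstacle is the choice case. The induction hypothesis returns some $N'$ that is only $\mbrs$-related, and we must show it again splits as $\pid p[P']\parcomp N_0'$ with $\pid p\notin\support(N_0')$. This needs the support equality $\support(N')=\support(M')$, which I would add as a strengthening of the inductive statement; it follows from the inversion lemma since $\pnil$ is $\mbrs$-related only to $\pnil$. The second potential difficulty is the sorted-list reasoning for the branching case of the inversion lemma. There I would prove lemmas about label lookup in merged lists once and reuse them.
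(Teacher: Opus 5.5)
Your proposal is correct and takes the same route as the paper: induction on the derivation of $M$'s transition, generalised over $N$, with each rule case handled by inverting $\mbrs$ and using \cref{merge:seqcomp:mbrs} for the $\fatsemi$ residuals. Two of your ingredients are unnecessary. The support equality you want to add as a strengthening already follows from $N' \mbrs M'$, and in fact any $N'$ splits as $\pid{p}[N'(\pid{p})] \parcomp (N' \setminus \pid{p})$. The substitution lemma (ii) is also not needed, since the \textsc{call} body is shared.
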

\begin{proof}
   By structural induction on the derivation of the transition $\langle M, \cstore, \mathscr{P} \rangle \xrightarrow[]{\mu} \langle M', \cstore', \mathscr{P} \rangle$
\end{proof}

We establish a key property relating network transitions and the merge operation, formalised in \cref{merge:net:transition}. If $N$ and $M$ can both perform a transition with label $\mu$ to $N'$ and $M'$, respectively, and both $N$ and $M$ as well as $N'$ and $M'$ are mergeable on all processes in $\mu$, then the merge of $N$ and $M$ can perform a transition with label $\mu$ to a network related by $(\mbrs)$ to the merge of $N'$ and $M'$.
\begin{theorem}\label{merge:net:transition}
    For any networks $N$, $M$, $N'$, and $M'$ such that $N \restrict {\emph{\pn}(\mu)} \sqcup M \restrict {\emph{\pn}(\mu)} \eqr \some{\hat{W}}$ and $N'\restrict {\emph{\pn}(\mu)} \sqcup M'\restrict {\emph{\pn}(\mu)} \eqr \some{\hat{W'}}$,
    if $\langle N, \cstore, \mathscr{P} \rangle \xrightarrow[]{\mu} \langle N', \cstore', \mathscr{P} \rangle$ and $\langle M, \cstore, \mathscr{P} \rangle \xrightarrow[]{\mu} \langle M', \cstore', \mathscr{P} \rangle$, then there exists a network $W''$ such that $\langle W, \cstore, \mathscr{P} \rangle \xrightarrow[]{\mu} \langle W'', \cstore', \mathscr{P} \rangle$ and $W'' \mbrs W'$. Note that $W$ and $W'$ are the lifted networks from finite networks $\hat{W}$ and $\hat{W'}$, respectively.
\end{theorem}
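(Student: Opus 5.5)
We argue by induction on the derivation of $\langle N, \cstore, \mathscr{P} \rangle \xrightarrow[]{\mu} \langle N', \cstore', \mathscr{P} \rangle$. In each step we invert the second derivation $\langle M, \cstore, \mathscr{P} \rangle \xrightarrow[]{\mu} \langle M', \cstore', \mathscr{P} \rangle$ to match the label $\mu$.

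\textbf{Preliminary localisation lemma.} A transition only affects processes in $\pn(\mu)$: the processes in $\support(N)$ outside $\pn(\mu)$ are unchanged, and the transition is determined by $N \restrict \pn(\mu)$. We prove this by induction on the transition derivation. It lets us reduce the statement to $W$, which by construction is $\hat W = N\restrict\pn(\mu) \sqcup M\restrict\pn(\mu)$ lifted. Since $\fname{mapBin}$ requires equal supports, both restrictions have the same support, contained in $\pn(\mu)$. Hence it suffices to build the transition of $W$ from the local transitions of $N$ and $M$ on $\pn(\mu)$, and then to compare pointwise, on $\pn(\mu)$, the resulting $W''$ with $W' = N'\restrict\pn(\mu)\sqcup M'\restrict\pn(\mu)$.

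\textbf{Base rules (assign, com, sel, cond, call, choice-skip).} Here the label fixes the shape of the involved process terms, so $N(\pid p)$ and $M(\pid p)$ have the same head instruction kind. The definition of merge (\cref{merge:fig}) then gives the shape of $W(\pid p)$.
\begin{itemize}
\item For \textsc{assign}, \textsc{com} and \textsc{call}, the instruction is merged with itself and the continuations are merged, so $W$ fires the same rule.
\item For \textsc{sel}, the receiving branching term of $W$ contains the union of labels, hence the label $\albl_j$, with continuation $P_j \sqcup Q_j$ (or a one-sided term).
\item For conditionals, the guard expressions are equal. The continuation is $(P_1\sqcup Q_1)\fatsemi(P\sqcup Q)$, which by \cref{merge:seqcomp:mbrs} is $\mbrs (P_1\fatsemi P)\sqcup(Q_1\fatsemi Q)$, the component of $W'$.
\item For \textsc{call}, substitution commutes with merge, and we again apply \cref{merge:seqcomp:mbrs}.
\end{itemize}
The $\mbrs$ slack (rather than equality) is needed exactly in two places: for \cref{merge:seqcomp:mbrs}, and for the \textsc{sel} case, where a label may be present on only one side.

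\textbf{Inductive cases.} For \textsc{par}, we restrict to $\pn(\mu)$, which drops $M$, and apply the induction hypothesis.

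\textbf{Main obstacle: \textsc{choice-l}/\textsc{choice-r}.} The difficulty is that the derivation for $N$ may use \textsc{choice-l} while the derivation for $M$ resolves the same choice differently or is structured differently. However, the label records $\mathit{left}$/$\mathit{right}$ in the choice map for $\pid p$, so inverting the $M$ derivation forces the same rule at $\pid p$, with the same side network support $\pn(\mu)\setminus\{\pid p\}$. Merging $P+Q$ with $P'+Q'$ yields $(P\sqcup P')+(Q\sqcup Q')$. We then apply the induction hypothesis to the premises $\pid p[P]\parcomp N_0$ and $\pid p[P']\parcomp M_0$. This requires the mergeability hypotheses on the premise networks. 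For the initial networks, these follow componentwise from the hypothesis on $N\restrict\pn(\mu)\sqcup M\restrict\pn(\mu)$. For the resulting ones, they follow from the hypothesis on the post-states: inverting the merge of $P_1\fatsemi R$ with $P_1'\fatsemi R'$ gives mergeability of $P_1, P_1'$. This inversion fails in general, so we would strengthen the induction to assume only the pre-state mergeability and conclude the existence of $W'$ together with the $\mbrs$ bound. Finally, we close the case with \cref{merge:seqcomp:mbrs} and transitivity of $\mbrs$.
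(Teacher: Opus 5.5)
\textbf{Gap: matching $M$'s derivation in the choice case.} The gap is at the step you call the main obstacle. Inverting $M$'s derivation does not force its last rule to be the same choice rule at $\pid{p}$. A network transition can have several derivations. When the choice map records choices at several processes, the corresponding \textsc{choice-l}/\textsc{choice-r} steps can be stacked in any order, possibly under \textsc{par}; Example~\ref{ex:net-nested-choice} says this explicitly. So $N$'s derivation may end with \textsc{choice-l} at $\pid{p}$ while $M$'s ends with \textsc{choice-r} at some $\pid{q}\neq\pid{p}$. The two premises then carry different labels: $\mu$ minus the head of $\zeta(\pid{p})$, versus $\mu$ minus the head of $\zeta(\pid{q})$. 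Your induction hypothesis needs a common label, so it cannot be applied to this pair. Inversion only tells you that $M(\pid{p})$ starts with a choice and that some step of $M$'s derivation resolves it to the left. It does not tell you that this step is the last one. This is the reason to induct on the label $\mu$, as the paper does, rather than on one of the derivations: the label is shared by both transitions and does not depend on the order in which either derivation resolves the recorded choices.

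\textbf{What would repair it.} You need a commutation lemma. After restricting $M$ to $\pn(\mu)$ via Theorem~\ref{net:semantics:restriction}, suppose $\langle M, \cstore, \mathscr{P}\rangle \xrightarrow{\mu.\mathit{addChoice}\;\pid{p}\;\mathit{left}} \langle M', \cstore', \mathscr{P}\rangle$. Then:
\begin{itemize}
\item $M = \pid{p}[P + Q ; R] \parcomp M_0$ with $\pid{p}\notin\support(M_0)$;
\item $\langle \pid{p}[P] \parcomp M_0, \cstore, \mathscr{P}\rangle \xrightarrow{\mu} \langle \pid{p}[P'] \parcomp M_0', \cstore', \mathscr{P}\rangle$;
\item $M' = \pid{p}[P'\fatsemi R] \parcomp M_0'$;
\end{itemize}
and symmetrically for $\mathit{right}$. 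You prove it by induction on the derivation. Swap adjacent choice steps at distinct processes: their $\mathit{addChoice}$ updates commute, but you must re-establish the support side conditions. Discard the \textsc{par} wrappers, which become trivial after restriction. With this lemma your choice case goes through, and the argument effectively becomes an induction on $\mu$ that peels the same recorded choice off both transitions.

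\textbf{The rest of your plan.} It is sound. Deriving post-state mergeability from pre-state mergeability is exactly Theorem~\ref{net:semantics:transition-mergeable}, so you can invoke it instead of strengthening the statement. In the \textsc{sel} case no slack is actually needed: both transitions fire the same $\albl_j$, so it lies in both index sets and the continuation is exactly the merge of the two branches.
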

\begin{proof}
    By induction on the transition label $\mu$ using proposition~\ref{merge:seqcomp:mbrs}.
\end{proof}

\subsection{Process Projection and EPP}
\label{epp:projection}
We define EPP using the \emph{process projection}, which is a function of the type:
\[
\llbracket \cdot \rrbracket\_ : \textbf{Chor} \rightarrow \textbf{PName} \rightarrow \opt{\textbf{Process}}
\]
It is formally defined in \cref{epp:proc-projection:fig}. We say a choreography $C$ is \emph{projectable} onto a process $\pid{r}$ if $\proj{C}{r} \eqr\some{P}$ for some process term $P$, and we say a choreography $C$ is \emph{not projectable} onto a process $\pid{r}$ if $\proj{C}{r} \eqr\none$.

\begin{figure}
\vspace{-1.0em}
\begin{spreadlines}{1pt}
\begin{align*}
\begin{aligned}[c]
    \proj{\cnil}{r} &\triangleq \some{\pnil}
\end{aligned}
\quad\quad
\begin{aligned}[c]
    \proj{\pid{p}.x \metaDef e \seq C}{r} &\triangleq
    \begin{cases}
    \;\some{x \metaDef e \seq P} &\text{if $\pid{r} = \pid{p}$ and $\proj{C}{r} \eqr\some{P}$}\\[-0.5ex]
    \;\proj{C}{r} &\text{otherwise}
    \end{cases}
\end{aligned}
\end{align*}
\end{spreadlines}
\vspace{-1.0em}
\begin{spreadlines}{1pt}
\begin{align*}
    \proj{\gencom \seq C}{r} &\triangleq
    \begin{cases}
    \;\some{\psend{q}{e} \seq P} &\text{if $\pid{r} = \pid{p}$ and $\proj{C}{r} \eqr\some{P}$}\\[-0.5ex]
    \;\some{\precv{p}{x} \seq P} &\text{if $\pid{r} = \pid{q}$ and $\proj{C}{r} \eqr\some{P}$}\\[-0.5ex]
    \;\proj{C}{r} &\text{otherwise}
    \end{cases}\\
    \proj{\gensel \seq C}{r} &\triangleq
    \begin{cases}
    \;\some{\psel{q} \seq P} &\text{if $\pid{r} = \pid{p}$ and $\proj{C}{r} \eqr\some{P}$}\\[-0.5ex]
    \;\some{\precvlpr{p}{P}} &\text{if $\pid{r} = \pid{q}$ and $\proj{C}{r} \eqr\some{P}$}\\[-0.5ex]
    \;\proj{C}{r} &\text{otherwise}
    \end{cases}
\end{align*}
\end{spreadlines}
\vspace{-1.0em}
\begin{spreadlines}{1pt}
\begin{align*}
    &\proj{\condif\pid{p}.e\condthen C_1\condelse C_2 \seq C}{r} \,\triangleq\\
    &\begin{cases}
    \;\some{\condif e\condthen P_1 \condelse P_2 \seq P} &\text{if $\pid{r} = \pid{p}$ and $\proj{C_1}{r} \eqr\some{P_1}$ and $\proj{C_2}{r} \eqr\some{P_2}$ and $\proj{C}{r} \eqr\some{P}$}\\[-0.5ex]
    \;\some{Q \fatsemi P} &\text{if $\pid{r} \neq \pid{p}$ and $\proj{C_1}{r} \sqcup \proj{C_2}{r} \eqr\some{Q}$ and $\proj{C}{r} \eqr\some{P}$}\\[-0.5ex]
    \;\none &\text{otherwise}
    \end{cases}\\
    &\proj{C_1 \choice_\pid{p} C_2 \seq C}{r} \,\triangleq
    \begin{cases}
    \;\some{P_1 + P_2 \seq P} &\text{if $\pid{r} = \pid{p}$ and $\proj{C_1}{r} \eqr\some{P_1}$ and $\proj{C_2}{r} \eqr\some{P_2}$ and $\proj{C}{r} \eqr\some{P}$}\\[-0.5ex]
    \;\some{Q \fatsemi P} &\text{if $\pid{r} \neq \pid{p}$ and $\proj{C_1}{r} \sqcup \proj{C_2}{r} \eqr\some{Q}$ and $\proj{C}{r} \eqr\some{P}$}\\[-0.5ex]
    \;\none &\text{otherwise}
    \end{cases}\\
    &\proj{X(\pids{p}) \seq C}{r} \,\triangleq
    \begin{cases}
    \;\some{X_i(\pids{p}\backslash \pid{r}) \seq P} &\text{if $\pid{r}=\pid{p}_i$ for $\pids{p} = \pid{p}_1,...,\pid{p}_n$ and $1\leq i\leq n$ and $\proj{C}{r} \eqr\some{P}$}\\[-0.5ex]
    \;\proj{C}{r} &\text{otherwise}
    \end{cases}\\
    &\proj{\pid{q} : X(\pids{p}). C' \seq C}{r} \,\triangleq\\
    &\begin{cases}
    \;\some{X_i(\pids{p}\backslash \pid{r}) \seq P'} &\text{if $\pid{r} = \pid{q}$ and $\pid{r}=\pid{p}_i$ for $\pids{p} = \pid{p}_1,...,\pid{p}_n$ and $1\leq i\leq n$ and $\proj{C'}{r} \eqr\some{P'}$}\\[-0.5ex]
    \;\llbracket C \rrbracket_\pid{r} &\text{otherwise}
    \end{cases}
\end{align*}
\end{spreadlines}
\vspace{-0.5em}
\caption{Process Projection of Choreographies}
\label{epp:proc-projection:fig}
\vspace{-1.0em}
\end{figure}
A terminated choreography is projected to a terminated process.
The projection of a choreography of the form $I \seq C$ proceeds by projecting the choreographic instruction $I$.
When $I$ is a local assignment $\pid{p}.x \metaDef e$,
if $\pid{r} = \pid{p}$ and there exists a process term $P$ such that the continuation $C$ projects onto $\pid{r}$ as $\some{P}$, then the projection results in the assignment $x \metaDef e$ followed by $P$. Otherwise, it results in the projection of $C$ onto $\pid r$.
When $I$ is a communication, there are three cases for the projection of $\gencom \seq C$ onto $\pid{r}$. If $\pid{r}$ is the sender $\pid{p}$ and the continuation $C$ is projectable onto $\pid{r}$ resulting in a process term $P$, then the resulting process term is a send instruction $\psend{q}{e}$ followed by $P$. If $\pid{r}$ is the receiver $\pid{q}$ and $C$ is projectable onto $\pid{r}$ resulting in $P$, then the resulting process term is a receive instruction $\precv{p}{x}$ followed by $P$. Otherwise, the result is the projection of $C$ onto $\pid r$.
When $I$ is a selection $\gensel$, the projection of the choreography onto $\pid r$ is similar to the communication case, but results in a selection instruction $\psel{q}$ followed by continuation $P$ and a branching term $\precvlpr{p}{P}$ for the first two cases, respectively.

The merge operation is the key of handling KoC correctly in the definition of the projection onto a process $\pid r$ for conditionals and choreographic choices.
When $I$ is a conditional $\condif\pid{p}.e\condthen C_1\condelse C_2$, we define the projection by analysing if $\pid{r}$ is the guard process $\pid{p}$ or not.
If $\pid r = \pid p$, and both branches $C_1$ and $C_2$ as well as the continuation $C$ are projectable onto $\pid{r}$ resulting in process terms $P_1$, $P_2$, and $P$ respectively, then the result is a conditional of which the guard expression is $e$ and two branches are $P_1$ and $P_2$, followed by $P$.
If $\pid r \neq \pid p$, while the projections of two branches $C_1$ and $C_2$ onto $\pid{r}$ are mergeable, resulting in $Q$, and $C$ is projectable onto $\pid{r}$, resulting in $P$, then the result is a sequential composition of $Q$ and $P$.
Otherwise, the conditional is not projectable on $\pid{r}$, yielding $\none$.
The process projection for choice $C_1 \choice_\pid{p} C_2 \seq C$ is defined similarly to the projection of a conditional by analysing whether or not $\pid{r}$ is the process at which the choreographic choice is resolved.

Lastly, both of choreographic procedure calls and runtime terms are projected to process procedure calls.
For the projection of $X(\pids{p}) \seq C$ onto $\pid{r}$, if $\pid{r}$ is the $i$-th argument in $\pids{p}$ and the continuation $C$ is projectable on $\pid{r}$ resulting in $P$, then the result is the process procedure call $X_i(\pids{p} \setminus {\pid{r}})$ followed by $P$, where $X_i$ represents the procedure from the perspective of the $i$-th argument. Otherwise, the projection is that of $C$.
For the projection of $\pid{q} : X(\pids{p}). C' \seq C$ onto $\pid{r}$, if $\pid{r}$ is the process yet to enter the procedure call ($\pid{r} = \pid{q}$), $\pid{r}$ is $i$-th element of the argument list $\pids{p}$, and the body $C'$ of is projectable on $\pid{r}$, yielding $P'$, then the result is the process procedure call $X_i(\pids{p}\backslash\pid{r})$ followed by $P'$. Here, the continuation is the projection of $C'$ rather than $C$, since $\pid{r}$ has not yet entered the procedure call and therefore should not project beyond it. Otherwise, the result is the projection of $C$ onto $\pid r$.

Proposition~\ref{epp:proc-projection:seqcomp} shows an important algebraic property of process projection: the distributivity of process projection over sequential composition. This exemplifies yet another algebraic property formulated in \mech to improve modularity, which has been left informal in the literature.
\begin{proposition}\label{epp:proc-projection:seqcomp}
    For any choreography $C_1$ and $C_2$, and any process $\pid{p}$:
    $\proj{C_1 \fatsemi C_2}{\pid{p}} = \proj{C_1}{\pid{p}} \fatsemi \proj{C_2}{\pid{p}}$.
\end{proposition}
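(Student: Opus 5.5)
We prove the statement by mutual structural induction on $C_1$ and its instructions. Throughout, $\fatsemi$ on $\opt{Process}$ is the lifted operator introduced before proposition~\ref{merge:seqcomp:mbrs}, so $\none \fatsemi \opterm{Q} = \opterm{P} \fatsemi \none = \none$. The runtime term $\pid{q} : X(\pids{p}).C'$ contains a nested choreography, and its projection onto $\pid{q}$ looks inside it. So plain induction on the list $C_1$ is not enough. I would use a mutual induction over $\setof{Chor}$ and $\setof{Instr}$ (or well-founded induction on term size), so that an induction hypothesis is available for every subchoreography, including bodies of runtime terms.

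\textbf{Base case.} For $C_1 = \cnil$ we have $\cnil \fatsemi C_2 = C_2$ and $\proj{\cnil}{p} = \some{\pnil}$. The claim then follows from $\some{\pnil} \fatsemi \opterm{P} = \opterm{P}$, which holds in both the $\some{}$ and the $\none$ case.

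\textbf{Inductive case $C_1 = I \seq C'$.} Here $C_1 \fatsemi C_2 = (I \fatsemi C_2) \seq (C' \fatsemi C_2)$, and the IH gives $\proj{C' \fatsemi C_2}{r} = \proj{C'}{r} \fatsemi \proj{C_2}{r}$. We case on $I$.
\begin{itemize}
\item \emph{Assignment, communication, selection, procedure call.} Here $I \fatsemi C_2 = I$. If $\pid r$ occurs in $I$ and the continuation is projectable, the left side is $\some{J \seq (P' \fatsemi P_2)}$, where $J$ is the projected instruction. The right side is $\some{J \seq P'} \fatsemi \some{P_2}$, which equals the left side by the defining equation $(J \seq P') \fatsemi P_2 = J \seq (P' \fatsemi P_2)$. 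For a selection received by $\pid r$, we use the branching clause $\pid{p}\&\{\albl : P'\} \fatsemi P_2 = \pid{p}\&\{\albl : P' \fatsemi P_2\}$ instead. If the continuation projects to $\none$, both sides are $\none$ by the option lifting. If $\pid r$ does not occur in $I$, both sides reduce to the IH.
\item \emph{Conditional and choice with $\pid r = \pid p$.} This case is analogous to the previous one.
\item \emph{Conditional and choice with $\pid r \neq \pid p$.} The left side is $\some{Q \fatsemi (P' \fatsemi P_2)}$, where $\some{Q} = \proj{C_a}{r} \sqcup \proj{C_b}{r}$ for the two branches $C_a, C_b$ of $I$. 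The right side is $\some{(Q \fatsemi P') \fatsemi P_2}$. These agree by associativity of process sequential composition (the monoid result, proposition~\ref{app:proc:seqcomp:prop}). The subcases where the merge or one of the projections is $\none$ again give $\none$ on both sides.
\item \emph{Runtime term.} Here $I \fatsemi C_2 = \pid{q} : X(\pids{p}).(C'' \fatsemi C_2)$. When $\pid r = \pid q = \pid p_i$, the projection ignores the outer continuation on both sides. The left side is $\some{X_i(\ldots)} \seq \proj{C'' \fatsemi C_2}{r}$. The right side is $\some{X_i(\ldots) \seq P''} \fatsemi \proj{C_2}{r}$. These match by the IH on the nested $C''$; this is where the mutual or size-based induction is needed. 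Otherwise, both sides fall to the IH on $C'$.
\end{itemize}

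\textbf{Main obstacle.} I expect the difficulty to be bookkeeping rather than conceptual. First, the \emph{otherwise} clauses of projection make the $\none$ propagation delicate. For example, when $\pid r = \pid p$ but the continuation is unprojectable, the definition returns $\proj{C'}{r} = \none$, and we must check that this agrees with the lifted $\fatsemi$. Second, the $\pid r \neq \pid p$ branching cases need associativity of $\fatsemi$ on process terms. Third, the runtime-term case needs an induction principle that reaches inside the annotation $C''$. I would first set up that induction principle and a few simp lemmas for the lifted $\fatsemi$ on options. After that, each case should close by unfolding the definitions and rewriting with the IH and associativity.
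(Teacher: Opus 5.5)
Your proof is correct and is the natural route for this result, which the paper states without a written proof: a mutual structural induction on $C_1$ and its instructions, with the non-guard conditional/choice cases closed by associativity of process-level $\fatsemi$, and the runtime-term case using the induction hypothesis on the annotation. One small imprecision: when $\pid{r} = \pid{q} = \pid{p}_i$ the projection does not always ignore the outer continuation. If $\proj{C''}{r} = \none$, or if $\proj{C_2}{r} = \none$ so that $\proj{C'' \fatsemi C_2}{r} = \none$, the ``otherwise'' clause falls through to the outer continuation, and these subcases close by combining the hypotheses for $C''$ and $C'$ with the $\none$-absorption of the lifted $\fatsemi$.
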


We define EPP of choreographies as a function of the type: $\epp{\cdot} : \setof{Chor} \to \opt{Net}$, using the process projection.
The EPP of a choreography $C$ can either result in a network mapping each process in $\pn(C)$ to its projected process term and all processes not in $\pn(C)$ to $\pnil$, if $C$ is projectable on all processes mentioned in $C$; or results in $\none$, if there exists at least one process mentioned in $C$ on which $C$ is not projectable. Formally, the EPP of a choreography $C$ is defined in \cref{epp:chor:def:fig}.
\begin{figure}
\vspace{-1.0em}
\begin{subfigure}{1.0\textwidth}
\begin{align*}
    \epp{C} &\triangleq
    \begin{cases}
        \,\some{\pid{p}\mapsto P_{\pid{p}}} &\text{if for each $\pid{p} \in \pn(C)$ such that $\proj{C}{p} \eqr\some{P_{\pid{p}}}$, and $P_{\pid{p}} = \pnil$ when $\pid{p} \notin \pn(C)$}\\[-0.5ex]
        \,\none &\text{if there exists $\pid{p} \in \pn(C)$ such that $\proj{C}{p} \eqr \none$}
    \end{cases}
\end{align*}
\vspace{-1.0em}
\caption{Endpoint Projection for Choreographies}
\label{epp:chor:def:fig}
\end{subfigure}
\begin{subfigure}{1.0\textwidth}
\begin{align*}
    \epp{\mathscr{C}} &\triangleq
    \begin{cases}        
        \,\some{\{X_i(\pids{p}\setminus \pid{p_i}) = P_{\pid{p_i}}\}} &\text{if for each $X(\pids{p}) = C\in \mathscr{C},\, \proj{C}{p_i} \eqr \some{P_{\pid{p_i}}}$ and $\pid{p_i} \in \pids{p}$}\\[-0.5ex]
        \,\none &\text{if for any $X(\pids{p}) = C\in \mathscr{C}$, there exists a $\pid{p} \in \pids{p}$ such that $\proj{C}{p} \eqr \none$}
    \end{cases}
\end{align*}
\vspace{-1.0em}
\caption{Endpoint Projection for Choreographic Procedure Definitions}
\label{epp:c:def:fig}
\end{subfigure}
\label{epp:def:fig}
\vspace{-1.0em}
\caption{Definitions of Endpoint Projection}
\vspace{-1.0em}
\end{figure}
Utilising the distributivity property in proposition~\ref{epp:proc-projection:seqcomp}, we prove that EPP is also distributive over sequential composition shown in proposition~\ref{epp:epp:seqcomp}. These two properties are crucial for proving the correctness of EPP, especially for the conditional and choice cases as well as their out-of-order executions.
\begin{proposition}\label{epp:epp:seqcomp}
    For any choreography $C_1$ and $C_2$, 
    $\epp{C_1 \fatsemi C_2} = \epp{C_1} \fatsemi \epp{C_2}$.
\end{proposition}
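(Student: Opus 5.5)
The plan is to reduce the statement pointwise to proposition~\ref{epp:proc-projection:seqcomp}, after first fixing how $\fatsemi$ is lifted to $\opt{Net}$. As with $\opt{Process}$, we read $\epp{C_1} \fatsemi \epp{C_2}$ as $\none$ if either argument is $\none$, and otherwise as $\some{N_1 \fatsemi N_2}$, with $\fatsemi$ the pointwise network composition of \cref{proc:semantics:seqcomp}. Two equalities of networks are then compared by extensionality.

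The first step is a lemma on process names: $\pn(C_1 \fatsemi C_2) = \pn(C_1) \cup \pn(C_2)$. We prove it by structural induction on $C_1$ following \cref{chor:semantics:seqcomp}. The $\cnil$ case is immediate. For $I \seq C'$ we need $\pn(I \fatsemi C_2) = \pn(I)$. This holds because $\fatsemi$ only rewrites the annotation inside a runtime term $\pid{q} : X(\pids{p}).C''$, and $\pn$ ignores that annotation. The inductive hypothesis then gives the claim for $C' \fatsemi C_2$.

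The second step splits on projectability.
\begin{itemize}
\item \textbf{Both defined.} Suppose $\epp{C_1} = \some{N_1}$ and $\epp{C_2} = \some{N_2}$. For any $\pid{p} \in \pn(C_1 \fatsemi C_2)$, proposition~\ref{epp:proc-projection:seqcomp} gives $\proj{C_1 \fatsemi C_2}{p} = \proj{C_1}{p} \fatsemi \proj{C_2}{p}$.
  \begin{itemize}
  \item If $\pid{p} \in \pn(C_i)$, then $\proj{C_i}{p} = \some{N_i(\pid{p})}$.
  \item If $\pid{p} \notin \pn(C_i)$, we need $\proj{C_i}{p} = \some{\pnil} = \some{N_i(\pid{p})}$. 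This is an auxiliary lemma proved by a routine induction on $C$: if $\pid{r} \notin \pn(C)$ then $\proj{C}{r} = \some{\pnil}$. In the conditional and choice cases, each branch projects to $\some{\pnil}$, and $\some{\pnil} \sqcup \some{\pnil} = \some{\pnil}$. In the runtime-term case we fall into the ``otherwise'' branch.
  \end{itemize}
  Hence every $\pid{p}$ in $\pn(C_1 \fatsemi C_2)$ projects to $\some{N_1(\pid{p}) \fatsemi N_2(\pid{p})}$. Processes outside $\pn(C_1 \fatsemi C_2)$ receive $\pnil = \pnil \fatsemi \pnil$. So $\epp{C_1 \fatsemi C_2} = \some{N_1 \fatsemi N_2}$.
\item \textbf{One undefined.} Suppose, say, $\epp{C_1} = \none$. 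Then some $\pid{p} \in \pn(C_1)$ has $\proj{C_1}{p} = \none$. By proposition~\ref{epp:proc-projection:seqcomp}, $\proj{C_1 \fatsemi C_2}{p} = \none$. Since $\pid{p} \in \pn(C_1 \fatsemi C_2)$ by the first step, $\epp{C_1 \fatsemi C_2} = \none$. The case $\epp{C_2} = \none$ is symmetric.
\end{itemize}

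I expect the main obstacle to be mismatches in the bookkeeping rather than anything conceptual. \Cref{epp:chor:def:fig} quantifies only over $\pn(C)$, so the two sides range over different name sets and must be reconciled, which is why we need both the $\pn$ lemma and the non-participant lemma. The delicate part is the runtime-term case of the non-participant lemma: the projection there may consult the annotation $C'$ only when $\pid{r} = \pid{q}$, and that already forces $\pid{r} \in \pn$. Once these two lemmas are established, the rest is extensional rewriting with proposition~\ref{epp:proc-projection:seqcomp}.
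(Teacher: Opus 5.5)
Your proof is correct and follows the paper's route: the paper derives this proposition pointwise from the distributivity of process projection over $\fatsemi$ (Proposition~\ref{epp:proc-projection:seqcomp}), as you do. Your two auxiliary lemmas are exactly the bookkeeping needed to reconcile the name sets over which the EPP definition quantifies. The first is $\pn(C_1 \fatsemi C_2) = \pn(C_1) \cup \pn(C_2)$, and the second is that non-participants project to $\some{\pnil}$; both hold by the routine inductions you describe.
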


Lastly, in \cref{epp:c:def:fig} we define the EPP for a set of procedure definitions $\mathscr{C}$, overloading the notation and written as $\epp{\mathscr{C}}$, which is a function of the type: $\epp{\cdot} : \mathfrak{C} \to \optg{\mathfrak{P}}$.
The EPP of $\mathscr{C}$ can either results in a set of process procedure definitions, if for each procedure definition $X(\pids{p}) = C$ in $\mathscr{C}$,
$C$ is projectable on $\pid{p}_i$ and results in $P_{\pid{p_i}}$,
where $\pid{p}_i$ is the $i$-th element of the parameter list $\pids{p}$, or it results in $\none$, if there exists at least one procedure definition $X(\pids{p}) = C$ in $\mathscr{C}$ and at least one process $\pid{p}$ in $\pids{p}$ such that $C$ is not projectable on $\pid{p}$. 
Thus, a choreographic procedure with $n$ process parameters generates $n$ process procedures, one for the perspective of each parameter.

\subsection{Coherence}
\label{epp:coherence}
\begin{figure}[b]
\vspace{-1.0em}
\begin{mathpar}
\inferrule*[right={\scriptsize coh-rt-call}]{\mathscr{C} \propto C \\ \mathscr{C} \propto C' \\ X(\pids{s}) = C'' \in \mathscr{C} \\ |\pids{p}| = |\pids{s}| \\ \proj{C''[\pids{p}/\pids{s}] \fatsemi C'}{q} \eqr\some{P} \\ \proj{C}{q} \eqr\some{Q} \\ \some{P} \mbrsw \some{Q}}
{\mathscr{C} \propto \pid{q} : X(\pids{p}).C' \seq C}
\end{mathpar}
\caption{Coherence of Choreographies (The Rule for Runtime Terms)}
\label{epp:coherence:fig-c}
\vspace{-1.0em}
\end{figure}
Coherence is a technical construction that ensures the correct projections of runtime terms used by recursive calls, providing consistency between the continuation of a runtime term and its procedure definition for a process that still have to enter.
The rule \textsc{coh-rt-call} in \cref{epp:coherence:fig-c} states that for $\pid{q} : X(\pids{p}).C' \seq C$ to be coherent w.r.t. a set of procedure definitions $\mathscr{C}$, the continuation $C$ of the runtime term $\pid{q} : X(\pids{p}).C'$, and the continuation $C'$ for $\pid{q}$, which is the process that still needs to enter the call, are both coherent w.r.t. $\mathscr{C}$. 
Specifically, $\proj{C''[\pids{p}/\pids{s}] \fatsemi C'}{q} \eqr\some{P}$ checks that projecting $C''[\pids{p}/\pids{s}] \fatsemi C'$, the unfolding of the procedure call in $X(\pids{p}); C'$ that has generated this runtime term, onto $\pid{q}$ yields a process term $P$, such that $\some{P} \mbrsw \some{Q}$, where $Q$ results from projecting the $C$ onto $\pid{q}$.
It means that $\pid{q}$ is not involved in more actions in $C$ than in $C''[\pids{p}/\pids{s}] \fatsemi C'$.
Although coherence mainly concerns runtime terms, we require all choreographies to be coherent w.r.t. $\mathscr{C}$, since the runtime terms can appear in any part of the choreography. 
All coherence rules other than \textsc{coh-rt-call} are straightforward (cf. \cref{app:epp:coherence}).
Similar to the preservation of well-formedness in theorem~\ref{chor:wf:preservation:thm}, coherence is also preserved by the semantics of choreographies shown in theorem~\ref{epp:coherence:preservation:thm}.
\begin{theorem}[Preservation of Coherence]\label{epp:coherence:preservation:thm}
    Given a well-formed choreographic configuration $\langle C, \cstore, \mathscr{C} \rangle$, both $C$ and $\mathscr{C}$ are projectable, and $C \propto \mathscr{C}$, if $\langle C, \cstore, \mathscr{C} \rangle \xrightarrow[]{\mu} \langle C', \cstore', \mathscr{C} \rangle$,
    then the resulting choreography $C'$ is still coherent w.r.t. $\mathscr{C}$, i.e., $C' \propto \mathscr{C}$.
\end{theorem}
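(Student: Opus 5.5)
We prove the statement by structural induction on the derivation of $\langle C, \cstore, \mathscr{C} \rangle \xrightarrow[]{\mu} \langle C', \cstore', \mathscr{C} \rangle$, generalising over the continuation so that the induction hypothesis applies to branches of conditionals and choices. Coherence is defined compositionally (the rules other than \textsc{coh-rt-call} are structural). So the first step is a set of inversion and construction lemmas, which we prove by induction on $C_1$ using the definition of $\fatsemi$ in \cref{chor:semantics:seqcomp}:
\begin{itemize}
\item coherence of $I \seq C$ yields coherence of $C$ and of the sub-choreographies of $I$;
\item coherence is closed under sequential composition: $\mathscr{C} \propto C_1$ and $\mathscr{C} \propto C_2$ imply $\mathscr{C} \propto C_1 \fatsemi C_2$;
\item coherence is closed under process-name substitution of procedure bodies, $\mathscr{C} \propto C''[\pids{p}/\pids{s}]$ for $X(\pids{s}) = C'' \in \mathscr{C}$, using well-formedness of $\mathscr{C}$ (definition~\ref{chor:wf:procs:def}) and the fact that bodies contain no runtime terms.
\end{itemize}
The $\fatsemi$ lemma is the workhorse. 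In its runtime-term case we must re-establish the side condition of \textsc{coh-rt-call} for $\pid{q} : X(\pids{p}).(C' \fatsemi C_2)\seq (C \fatsemi C_2)$. Associativity of $\fatsemi$ (proposition~\ref{app:chor:seqcomp:prop}) and proposition~\ref{epp:proc-projection:seqcomp} rewrite both projections as $\some{P}\fatsemi\proj{C_2}{q}$ and $\some{Q}\fatsemi\proj{C_2}{q}$. We then need that $\fatsemi$ is monotone in its left argument with respect to $\mbrs$, which should follow from proposition~\ref{merge:seqcomp:mbrs} instantiated with $\opterm{Q_1}=\opterm{Q_2}$ together with idempotence of $\sqcup$.

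With these lemmas the basic cases are immediate.
\begin{itemize}
\item \textsc{assign}, \textsc{com}, \textsc{sel}: the result is the coherent continuation $C$.
\item \textsc{call-enter}: the result is $C$, coherent by the premise of \textsc{coh-rt-call}.
\item \textsc{cond-then/else}, \textsc{skip-l/r}: the result is $C_i \fatsemi C$, coherent by the composition lemma.
\item \textsc{choice-l/r}: we use the induction hypothesis on the branch transition and then the composition lemma.
\item \textsc{delay}, \textsc{delay-cond}, \textsc{delay-choice}: we apply the induction hypothesis to the inner transitions and rebuild with the structural rules. When $I$ is a runtime term, \textsc{delay} also changes $C$ to some $C_0$, so we must re-check the side condition of \textsc{coh-rt-call} for the new continuation (see below).
\end{itemize}
The remaining case is \textsc{call-first}. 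Here every $\pid{p}_j$ with $j\neq i$ gets a runtime term $\pid{p}_j : X(\pids{p}).C'$ whose continuation is the remaining runtime terms followed by $C[\pids{p}/\pids{q}] \fatsemi C'$. The projection of this continuation onto $\pid{p}_j$ skips the other runtime terms, since they concern different processes, and equals $\proj{C[\pids{p}/\pids{q}] \fatsemi C'}{p_j}$. The \textsc{coh-rt-call} premise then holds with equality, hence by reflexivity of $\mbrs$ (proposition~\ref{app:merge:mbrs:prop}). The needed distinctness of the $\pid{p}_j$ comes from well-formedness.

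I expect the main obstacle to be the \textsc{delay} case under a runtime term $\pid{q} : X(\pids{p}).C'\seq C$ with $C\xrightarrow{\mu}C_0$. Since $\pid{q}\notin\pn(\mu)$, we need $\some{P}\mbrsw\proj{C_0}{q}$ given $\some{P}\mbrsw\proj{C}{q}$, so by transitivity it suffices to show $\proj{C}{q}\mbrsw\proj{C_0}{q}$. This is a new auxiliary lemma: a transition not involving $\pid{r}$ can only lose branches in the projection onto $\pid{r}$. The loss happens when a conditional or choice not guarded by $\pid{r}$ resolves; its projection onto $\pid{r}$ is a merge $\proj{C_1}{r}\sqcup\proj{C_2}{r}$, which is $\mbrs$ each branch by the semilattice laws. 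We would prove this lemma by a separate induction on the transition, using proposition~\ref{epp:proc-projection:seqcomp} for the $\fatsemi$ continuations and projectability of $C$ to exclude $\none$. The subtle cases are the delay rules, where merges must be shown monotone in both arguments.
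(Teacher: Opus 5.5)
Your route is the same as the paper's, which gives only ``structural induction on the derivation of the transition''. Two of your ingredients are sound: the auxiliary lemma that a step not involving $\pid{r}$ can only lose branches in the projection onto $\pid{r}$, and the derivation of monotonicity of $\fatsemi$ from proposition~\ref{merge:seqcomp:mbrs}. The gap is that \textsc{coh-rt-call} requires both projections in its premise to be \emph{defined}, i.e.\ of the form $\some{\cdot}$, and you never establish this.

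In the \textsc{call-first} case you show the two projections coincide and conclude by reflexivity of $\mbrs$. But $\none \mbrs \none$ also holds, so this does not give $\proj{C[\pids{p}/\pids{q}] \fatsemi C'}{p_j} \eqr \some{P}$. By proposition~\ref{epp:proc-projection:seqcomp} this projection is $\proj{C[\pids{p}/\pids{q}]}{p_j} \fatsemi \proj{C'}{p_j}$. The second factor is defined because $X(\pids{p})\seq C'$ is projectable onto $\pid{p}_j$. That projection, however, is just $X_j(\pids{p}\setminus\pid{p}_j)$ followed by $\proj{C'}{p_j}$, so it says nothing about the unfolded body.

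This is exactly where the hypothesis that $\mathscr{C}$ is projectable is needed, and you never use it. You also need a lemma you have not stated: projection commutes with the renaming $[\pids{p}/\pids{q}]$ whenever that renaming acts injectively on the names of the body. Then definedness of $\proj{C}{q_j}$ transfers to $\proj{C[\pids{p}/\pids{q}]}{p_j}$. Distinctness of the arguments (\textsc{wf-call}) and of the parameters (definition~\ref{chor:wf:procs:def}) enters here. Proving the lemma also requires showing that merge commutes with injective renaming, since merging branching terms compares process names and merging other instructions tests them for equality.

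The same omission makes your composition lemma false as stated. Take $C_1 = \pid{q}:X(\pids{p}).C'\seq C$ coherent, and $C_2$ coherent but with $\proj{C_2}{q}\eqr\none$, for instance a conditional whose branches are unmergeable on $\pid{q}$. Then both projections in the \textsc{coh-rt-call} premise for $C_1\fatsemi C_2$ are $\none$, so $C_1\fatsemi C_2$ is not coherent. You need the extra hypothesis that $C_2$ is projectable, at least onto the processes of runtime terms in $C_1$.

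In the conditional, choice and skip cases this extra hypothesis follows from projectability of the source. Every defined clause for projecting a conditional or choice requires $\proj{C}{r}$ to be defined, and $\proj{C}{r}\eqr\some{\pnil}$ whenever $\pid{r}\notin\pn(C)$. Similarly, applying the induction hypothesis in the \textsc{delay} case under a runtime term $\pid{q}:X(\pids{p}).C'\seq C$ needs $C$ projectable onto $\pid{q}$. This does not follow from projectability of the whole term, whose projection onto $\pid{q}$ ignores $C$. It does follow from the coherence premise $\proj{C}{q}\eqr\some{Q}$. Once these definedness obligations are discharged, your case analysis goes through.
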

\begin{proof}
        By structural induction on the derivation of the transition.
\end{proof}

\section{Correctness of EPP}
\label{sec:epp:correctness}
We first show the correctness of EPP for single-step transitions. Then we show the correctness of multi-step executions as a direct consequence of the correctness of EPP for single-step transitions.

\subsection{Completeness and Soundness of EPP}
\label{sec:epp:completeness:soundness}
The correctness of EPP for single-step transitions is established by proving the \emph{completeness} and \emph{soundness} of EPP. 
The completeness means that all networks resulting for EPP are \emph{complete} implementations of their corresponding choreographies. Specifically, it states that if a choreography $C$ can perform a transition to $C'$, then the network $N$ resulting from the EPP of $C$ can mimic the choreographic transition, and the resulting network has at least the same number of branches as the network resulting from the EPP of $C'$. Formally, the completeness theorem is shown in \cref{epp:completeness:thm}.
\begin{theorem}[Completeness of EPP]\label{epp:completeness:thm}
   Given a well-formed choreographic configuration $\langle C, \cstore, \mathscr{C} \rangle$ where $C \propto \mathscr{C}$, $\epp{C} \eqr \some{N}$, and $\epp{\mathscr{C}} \eqr\some{\mathscr{P}}$, for any transition label $\mu$, if $\langle C, \cstore, \mathscr{C} \rangle \xrightarrow[]{\mu} \langle C', \cstore', \mathscr{C} \rangle$, then there exists networks $N'$ and $M'$ such that $\langle N, \cstore, \mathscr{P} \rangle \xrightarrow[]{\mu} \langle N', \cstore', \mathscr{P} \rangle$, $\epp{C'} \eqr\some{M'}$, and $N' \mbrs M'$.
\end{theorem}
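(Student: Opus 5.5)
We argue by structural induction on the derivation of $\langle C, \cstore, \mathscr{C} \rangle \xrightarrow[]{\mu} \langle C', \cstore', \mathscr{C} \rangle$, generalising over the store and keeping well-formedness, coherence and projectability of $C$ and $\mathscr{C}$ as hypotheses. Before the main induction we establish two auxiliary facts. The first is a \emph{decomposition lemma}: if $\epp{I \seq C} = \some{N}$ and $\pid{r} \notin \pn(I)$, then $N(\pid{r}) = \proj{C}{r}$. Hence $N$ splits as $N\restrict\pn(I) \parcomp N\setminus\pn(I)$, and the second component coincides with the corresponding part of $\epp{C}$. The second is a \emph{monotonicity lemma}: $\sqcup$-projection is monotone under $\mbrs$, so if $\proj{C_1}{r}\sqcup\proj{C_2}{r} = \some{Q}$ then $\some{Q} \mbrsw \proj{C_i}{r}$ for $i=1,2$. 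This follows from the semilattice structure of $(\opt{Process},\sqcup)$.

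The base cases are \textsc{assign}, \textsc{com}, \textsc{sel}, \textsc{cond-then}, \textsc{cond-else}, \textsc{skip-l}/\textsc{skip-r}, \textsc{call-first} and \textsc{call-enter}. In each we read off the projections at the involved processes from \cref{epp:proc-projection:fig} and fire the matching network rule, using \textsc{par} to frame away the rest of $N$. For \textsc{sel}, the receiver's branching term is $\precvlpr{p}{P}$, possibly enlarged by earlier merges, so $N'$ has at least as many branches as $\epp{C'}$. This is exactly why the conclusion uses $\mbrs$ rather than equality.

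For the conditional cases, the guard's projection is a conditional, and we rewrite $\epp{C_1 \fatsemi C}$ via \cref{epp:epp:seqcomp}. Every other process $\pid{r}$ has the projection $Q\fatsemi P$ with $\some{Q} \mbrsw \proj{C_1}{r}$. Lifting this through $\fatsemi$ using \cref{merge:seqcomp:mbrs} gives $N' \mbrs M'$ pointwise. The skip rules are handled analogously: at $\pid{p}$ the projection is $\pnil+\pnil\seq P$, because $\pid{p}\notin\pn(C_i)$, and \textsc{choice-skip} applies.

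For \textsc{call-first}, the initiating $\pid{p}_i$ performs \textsc{call} with the projected definition $X_i$. We then need $\proj{C[\pids{p}/\pids{q}]}{p_i} = P_{\pid{p}_i}[\pids{p}\setminus\pid{p}_i/\ldots]$, that is, projection commutes with process-name substitution; this is a separate lemma proved by induction on $C$. The other processes keep $X_j(\ldots)\seq P'$, which is what the runtime terms project to. For \textsc{call-enter}, coherence via \textsc{coh-rt-call} supplies precisely $\some{P}\mbrsw\some{Q}$, which yields $N'\mbrs M'$.

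The inductive cases are \textsc{choice-l}/\textsc{choice-r} and the three delay rules. For \textsc{choice-l}, we apply the induction hypothesis to $C_1$, which requires $\epp{C_1}$ to be defined; this holds because merges being defined implies the components are. This yields a transition of $\epp{C_1}$ restricted to $\pn(\mu)$. We then transport it to $N$ using \cref{merge:mbrs:net:transition}, since $N$ at the processes other than $\pid{p}$ has more branches. Finally we wrap it with network \textsc{choice-l}; the side condition $\support(N)\cup\{\pid{p}\}=\pn(\mu)$ is obtained by restricting to $\pn(\mu)$. For \textsc{delay}, the decomposition lemma lets us apply the induction hypothesis to $C$ and frame the processes of $I$ with \textsc{par}, using $\pn(I)\disj\pn(\mu)$.

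The main obstacle is \textsc{delay-cond} and \textsc{delay-choice}. There the induction hypothesis gives two network transitions, from $\epp{C_1}$ and $\epp{C_2}$, with the same label $\mu$, while $N$ at each $\pid{r}\in\pn(\mu)$ is $(\proj{C_1}{r}\sqcup\proj{C_2}{r})\fatsemi\proj{C}{r}$. Here we invoke \cref{merge:net:transition} on the restrictions to $\pn(\mu)$ to obtain a transition of the merged network $W$ to some $W''\mbrs W'$. Establishing the mergeability hypothesis for the post-states is the delicate part. We plan to derive it from projectability of $C'$, obtained as a by-product of the induction; to make this work we strengthen the induction statement to assert $\epp{C'}$ defined. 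Appending the continuation with \cref{chor:semantics:seqcomp-border} and \cref{epp:proc-projection:seqcomp}, and closing with transitivity of $\mbrs$, completes the case.
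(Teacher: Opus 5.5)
Your overall plan matches the paper's: induction on the transition derivation, with \cref{merge:net:transition} relating \textsc{delay-cond}/\textsc{delay-choice} to parallel network steps. The step you flag as delicate, however, does not go through as you propose.

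In \textsc{delay-cond} the target is $C'=\condif \pid{p}.e \condthen C_1' \condelse C_2' \seq C_0$. Its projectability at each $\pid{r}\neq\pid{p}$ is exactly the statement $\proj{C_1'}{r}\sqcup\proj{C_2'}{r}\neq\none$. The induction hypothesis gives you only $\epp{C_1'}$ and $\epp{C_2'}$ separately. Strengthening the statement to assert that $\epp{C'}$ is defined adds nothing, because that is already the conclusion you must establish in this very case. Deriving post-state mergeability from projectability of $C'$ is therefore circular.

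Even granting it, that fact concerns $M_i'=\epp{C_i'}$. But \cref{merge:net:transition} must be applied to the networks $N_1',N_2'$ actually produced by the induction hypothesis, and $N_i'\mbrs M_i'$ points the wrong way. Extra branches in $N_1'$ and $N_2'$ under a shared label may carry unmergeable continuations, so mergeability does not transfer upward along $\mbrs$.

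The missing ingredient is Preservation of Mergeability (\cref{net:semantics:transition-mergeable}), which the paper's proof invokes alongside \cref{merge:net:transition}. Start from the pre-state fact $\epp{C_1}\restrict\pn(\mu)\sqcup\epp{C_2}\restrict\pn(\mu)\neq\none$, which projectability of $C$ gives you since $\pid{p}\notin\pn(\mu)$. Together with the two $\mu$-steps, that theorem yields $N_1'\restrict\pn(\mu)\sqcup N_2'\restrict\pn(\mu)\neq\none$. Intuitively, the shared label fixes which branch, selection label or choice path both sides follow; it is proved by induction on $\mu$.

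The dependency then runs opposite to yours: projectability of $C'$ is a \emph{consequence}, because mergeability is downward closed under $\mbrs$. If $A\sqcup B\neq\none$, $A\mbrs A'$ and $B\mbrs B'$, then $(A\sqcup B)\sqcup(A'\sqcup B')=A\sqcup B$. Since $\none$ is absorbing, this forces $A'\sqcup B'\neq\none$. Apply this to $N_i'(\pid{r})\mbrs M_i'(\pid{r})$ for $\pid{r}\in\pn(\mu)$, and to the unchanged pre-state for $\pid{r}\notin\pn(\mu)$.

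A secondary slip: in \textsc{choice-l} and in the delay cases you drop the continuation. At $\pid{r}\neq\pid{p}$ the projected term is $(\proj{C_1}{r}\sqcup\proj{C_2}{r})\fatsemi\proj{C_0}{r}$, which is in general not $\mbrs\proj{C_1}{r}$. So \cref{merge:mbrs:net:transition} does not apply as stated. You additionally need a network-level lemma lifting a $\mu$-step of $W$ to a $\mu$-step of $W\fatsemi(\epp{C_0}\restrict\pn(\mu))$. \cref{chor:semantics:seqcomp-border} only decomposes \emph{choreography} transitions (the soundness direction) and does not supply this.
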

\begin{proof}
   By structural induction on the derivation of the transition, using
   theorems~\ref{net:semantics:transition-mergeable} and \ref{merge:net:transition} for relating out-of-order executions in choreographies and parallel executions in networks.
\end{proof}

Intuitively, the soundness is the converse of the completeness, meaning that all networks resulting from EPP are \emph{sound} implementations of their corresponding choreographies. It states that if a network $N$, which has at least many branches as the network resulting from the EPP of a choreography $C$ can perform a transition to $N'$, then such a transition is prescribed by the corresponding choreography $C$, and the resulting choreography can be projected to a network that does not have more branches than $N'$. Formally, the soundness theorem is shown in \cref{epp:soundness:thm}.
Note that we require $N \mbrs M$ for generalising the theorem to multi-step transitions discussed in \cref{sec:epp:multi-step}.
\begin{theorem}[Soundness of EPP]\label{epp:soundness:thm}
    Given a well-formed choreographic configuration $\langle C, \cstore, \mathscr{C} \rangle$ where $C \propto \mathscr{C}$, $\epp{C} \eqr \some{M}$, and $\epp{\mathscr{C}} \eqr\some{\mathscr{P}}$, for any networks $N$, $N'$, and transition label $\mu$ where $N \mbrs M$, if $\langle N, \cstore, \mathscr{P} \rangle \xrightarrow[]{\mu} \langle N', \cstore', \mathscr{P} \rangle$, then there exists a choreography $C'$ and a network $M'$ such that $\langle C, \cstore, \mathscr{C} \rangle \xrightarrow[]{\mu} \langle C', \cstore', \mathscr{C} \rangle$, $\epp{C'}\eqr\some{M'}$, and $N' \mbrs M'$.
\end{theorem}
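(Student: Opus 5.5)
We proceed by induction on the structure of $C$, generalising over $N$, $N'$, $\mu$, $\cstore$ and $\cstore'$, and inside each case we invert the network transition $\langle N, \cstore, \mathscr{P} \rangle \xrightarrow[]{\mu} \langle N', \cstore', \mathscr{P} \rangle$. Before the induction we would reduce the hypothesis $N \mbrs M$ to the exact projection. By Theorem~\ref{merge:mbrs:net:transition} a transition of $M$ lifts to $N$ with $\mbrs$ preserved, but here we need the converse direction. So we would first prove an auxiliary ``downward'' lemma: if $N \mbrs M$ and $N \xrightarrow[]{\mu} N'$, and the head instructions of $M$ at the processes in $\pn(\mu)$ are enabled, then $M \xrightarrow[]{\mu} M''$ with $N' \mbrs M''$. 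The lemma is justified because branching terms in $N$ may only carry extra labels, and a label received by $N$ but absent in $M$ cannot be sent by the selecting process, whose term is identical up to branches. Using transitivity of $\mbrs$ (Proposition~\ref{app:merge:mbrs:prop}), it then suffices to treat $N = M = \epp{C}$.

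\textbf{Base cases.} If $C = \cnil$, then $M$ is terminated and there is no transition. If $C = I \seq C_0$, we split on whether $\pn(\mu) \cap \pn(I) \neq \emptyset$.

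\textbf{Case $\pn(\mu)$ meets $\pn(I)$.} Here the network step must be the one generated by $I$, and we go through the constructs of $I$ in turn.
\begin{itemize}
\item For an assignment, communication or selection, the process terms at the involved processes begin with the projected instruction, so the choreography fires \textsc{assign}, \textsc{com} or \textsc{sel}. We get $C' = C_0$, and the other processes' terms are unchanged because $\proj{I\seq C_0}{r} = \proj{C_0}{r}$ for $\pid r \notin \pn(I)$.
\item For a conditional at the guard process $\pid p$, we use \textsc{cond-then} or \textsc{cond-else}. The guard's term becomes $P_i \fatsemi P$. The other processes hold $(\proj{C_1}{r}\sqcup\proj{C_2}{r})\fatsemi\proj{C_0}{r}$, which is $\mbrs$-above $\proj{C_i \fatsemi C_0}{r}$ by Proposition~\ref{epp:proc-projection:seqcomp} and the semilattice laws. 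Proposition~\ref{epp:epp:seqcomp} then gives $\epp{C_i\fatsemi C_0}$.
\item For a choice at $\pid p$, the network step is \textsc{choice-l}, \textsc{choice-r} or \textsc{choice-skip}. In the first two we peel off the $\mathit{addChoice}$ and obtain a transition of $\pid p[P_1]\parcomp N_0$. Because $\proj{C_1}{r}\sqcup\proj{C_2}{r} \mbrs \proj{C_1}{r}$, this is a transition of a network $\mbrs$-above $\epp{C_1}$ restricted to $\pn(\mu)$. The IH on $C_1$ then yields $C_1 \xrightarrow[]{\mu} C_1'$, and we close with \textsc{choice-l}, using Theorem~\ref{chor:semantics:seqcomp-border} to handle the continuation. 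For \textsc{choice-skip}, the premise $\pnil\choice\pnil$ forces $\pid p\notin\pn(C_1)\cup\pn(C_2)$, so \textsc{skip-l} applies.
\item For procedure calls, \textsc{call} at $\pid p_i$ matches \textsc{call-first}, with the runtime terms projecting correctly for the other participants. For a runtime term, \textsc{call-enter} applies, and coherence (\textsc{coh-rt-call}) supplies the $\mbrs$ relation between the unfolded body and $\proj{C_0}{q}$.
\end{itemize}

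\textbf{Case $\pn(\mu)$ disjoint from $\pn(I)$.} The transition happens among processes whose projection ignores $I$, or, for conditionals and choices, comes from the merge. If $I$ is a basic instruction or a call, the IH on $C_0$ gives $C_0\xrightarrow[]{\mu}C_0'$, and \textsc{delay} closes the case. If $I$ is a conditional or choice, the involved processes hold $(\proj{C_1}{r}\sqcup\proj{C_2}{r})\fatsemi\proj{C_0}{r}$. We split on whether the step lies inside the merged prefix or in $C_0$; in the latter case the prefix for those processes must be $\pnil$-like, and we use \textsc{delay} on $C_0$. In the former case the step is a step of a network $\mbrs$-above both $\epp{C_1\fatsemi C_0}$ and $\epp{C_2\fatsemi C_0}$ restricted to $\pn(\mu)$. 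Applying the IH to each branch gives two choreographic steps with the same $\mu$, and \textsc{delay-cond} or \textsc{delay-choice} combines them. We then reassemble the resulting projection, using Theorem~\ref{merge:net:transition} to show that the merged network after the step is $\mbrs$-above $\epp{C'}$.

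\textbf{Main obstacle.} We expect this last sub-case to be the main difficulty. There are two reasons. First, the IH must be stated over $\mbrs$-supersets and over restricted sub-networks $N\restrict\pn(\mu)$, rather than whole projections. Second, we must show that the two branch-wise steps yield the same $\cstore'$ and label. This relies on the evaluations being performed by the same processes with identical terms, since merging only allows differences inside branching terms. We would first try to generalise the IH to exactly such restricted, $\mbrs$-dominated networks.
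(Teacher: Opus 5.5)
Your overall shape matches the paper's proof: induction on $C$, case analysis on the head instruction, and use of Theorems~\ref{chor:semantics:seqcomp-border} and~\ref{merge:net:transition}. There are, however, genuine gaps, concentrated where you leave things open.

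\textbf{The ``downward'' reduction.} It does not hold as justified. Take $N = \pid{p}[\pselg{q}{l}] \parcomp \pid{q}[\pid{p}\br\{\lbl{l}:\pnil,\ \lbl{m}:\pnil\}]$ and $M = \pid{p}[\pselg{q}{l}] \parcomp \pid{q}[\pid{p}\br\{\lbl{m}:\pnil\}]$. Then $N \mbrs M$ and $N$ performs the selection of $\lbl{l}$, but $M$ is stuck. Your justification has it backwards: $M$'s selecting process \emph{does} send $\lbl{l}$, since its term agrees with $N$'s up to branches; it is $M$'s receiver that lacks the branch. Your extra hypothesis that $M$'s heads are ``enabled'' is exactly what fails here. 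Establishing it for $M = \epp{C}$ is a communication-safety property of projections (cf.\ Theorem~\ref{epp:comm-safe-state}) that needs its own induction on $C$. The reduction also buys nothing: your inner cases need the induction hypothesis for $\mbrs$-dominated networks anyway, and the statement already carries $N \mbrs M$. Drop the lemma and use the theorem itself, on global networks with no restrictions, as the induction hypothesis.

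\textbf{The induction hypothesis is applied to the wrong choreography.} In \textsc{choice-l}, with $N(\pid{p}) = P_1 \choice P_2 \seq R$, each $\pid{r} \in \support(N_0)$ only satisfies $N(\pid{r}) \mbrs (\proj{C_1}{r}\sqcup\proj{C_2}{r})\fatsemi\proj{C_0}{r}$, which contains the continuation. So $\pid{p}[P_1]\parcomp N_0$ is in general not $\mbrs$-above $\epp{C_1}$, restricted or not. What works is to replace $\pid{p}$'s term in $N$ by $P_1 \fatsemi R$. By an auxiliary lemma that network steps are stable under appending a continuation, this network makes a $\mu$-step (without the added choice) to the very same $N'$. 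It is $\mbrs$-above $\epp{C_1\fatsemi C_0} = \epp{C_1}\fatsemi\epp{C_0}$ by Proposition~\ref{epp:epp:seqcomp} and monotonicity of $\fatsemi$, which follows from Proposition~\ref{merge:seqcomp:mbrs}. The induction hypothesis for $C_1\fatsemi C_0$ then gives a step of $C_1\fatsemi C_0$. Theorem~\ref{chor:semantics:seqcomp-border}, after checking $\pn(\mu)\subseteq\pn(C_1)$, turns this into $C_1\xrightarrow{\mu}C_1'$ with result $C_1'\fatsemi C_0$; this is why that theorem is needed. However, $C_1\fatsemi C_0$, like your $C_i\fatsemi C_0$ in the delay case, is not a structural subterm of $C$. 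You therefore need a well-founded measure or an equivalent generalisation. The measure should ignore runtime-term annotations, into which $\fatsemi$ copies $C_0$.

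\textbf{Reassembly in the \textsc{delay-cond}/\textsc{delay-choice} case.} This is where the real work lies, not in agreement of $\cstore'$ and $\mu$. That agreement is automatic: both applications of the induction hypothesis are driven by the same network step, with $\pid{p}$'s term swapped, which $\pid{p}\notin\pn(\mu)$ permits. The two results give only $N'(\pid{r}) \mbrs (\proj{C_1'}{r}\fatsemi\proj{C_0}{r})\sqcup(\proj{C_2'}{r}\fatsemi\proj{C_0}{r})$. But $\proj{C'}{r}$ is $(\proj{C_1'}{r}\sqcup\proj{C_2'}{r})\fatsemi\proj{C_0}{r}$, which Proposition~\ref{merge:seqcomp:mbrs} places above the former, not below. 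Its inner merge can even be $\none$ while the former is defined, e.g.\ $\pnil$ against a branching term, followed by a branching continuation. So even projectability of $C'$ requires preservation of mergeability (Theorem~\ref{net:semantics:transition-mergeable}). That theorem also supplies the mergeability hypothesis of Theorem~\ref{merge:net:transition}, and your plan never invokes it.

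\textbf{The case split is mis-stated.} For a conditional or choice, $\pn(I)$ includes $\pn(C_1)\cup\pn(C_2)$. So the \textsc{delay-cond}/\textsc{delay-choice} steps fall in your ``meets $\pn(I)$'' case, where the step need not be the guard's, and the ``inside the merged prefix'' subcase of your disjoint case is vacuous. Split on whether $\pid{p}\in\pn(\mu)$ instead.
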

\begin{proof}
    By induction on the choreography $C$ using \cref{net:semantics:transition-mergeable,merge:net:transition,chor:semantics:seqcomp-border}.
\end{proof}

\subsection{Correctness of Multi-Step Executions}
\label{sec:epp:multi-step}
We use a standard transitive closure to define the multi-step transitions for both choreographies and networks, written as $\langle C, \cstore, \mathscr{C} \rangle \xtwoheadrightarrow{\,\tls\,} \langle C', \cstore', \mathscr{C} \rangle$ and $\langle N, \cstore, \mathscr{P} \rangle \xtwoheadrightarrow{\,\tls\,} \langle N', \cstore', \mathscr{P} \rangle$.
The sequence $\tls$ is a list of transition labels that can be empty.
The detailed definitions are shown in \cref{app:multi}, and we implement the multi-step transition relation as \code{MTr} in CSLib.
We show the correctness of EPP for multi-step executions by the completeness \cref{epp:multi-step-completeness:thm} and soundness \cref{epp:multi-step-soundness:thm}:
\begin{theorem}[Completeness of Multi-Step Executions]\label{epp:multi-step-completeness:thm}
    Given a well-formed choreographic configuration $\langle C, \cstore, \mathscr{C} \rangle$ where $C \propto \mathscr{C}$, $\epp{C} \eqr \some{N}$, and $\epp{\mathscr{C}} \eqr\some{\mathscr{P}}$, for any list of transition label $\tls$, if $\langle C, \cstore, \mathscr{C} \rangle \xtwoheadrightarrow{\,\tls\,} \langle C', \cstore', \mathscr{C} \rangle$, then there exists networks $N'$ and $M'$ such that $\langle N, \cstore, \mathscr{P} \rangle \xtwoheadrightarrow{\,\tls\,} \langle N', \cstore', \mathscr{P} \rangle$, $\epp{C'} \eqr\some{M'}$, and $N' \mbrs M'$.
\end{theorem}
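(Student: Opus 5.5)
The proof goes by induction on the derivation of the multi-step transition $\langle C, \cstore, \mathscr{C} \rangle \xtwoheadrightarrow{\,\tls\,} \langle C', \cstore', \mathscr{C} \rangle$, that is, on the list $\tls$. The single-step completeness theorem (\cref{epp:completeness:thm}) alone does not suffice for chaining. After one step we only know $N' \mbrs M'$, where $M'$ is the projection of the intermediate choreography, and the next step has to start from $N'$, not from $M'$. So I will first strengthen the statement: I replace the hypothesis $\epp{C} \eqr \some{N}$ by the existence of some $M$ with $\epp{C} \eqr \some{M}$ and $N \mbrs M$. The original theorem is the instance $N = M$, which holds by reflexivity of $\mbrs$ (lifted pointwise from the partial order of proposition~\ref{app:merge:mbrs:prop}).

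\emph{Base case} ($\tls$ empty). Here $C' = C$ and $\cstore' = \cstore$. I take $N' = N$ and $M' = M$; the empty multi-step on networks gives the transition, and the invariant is the hypothesis itself.

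\emph{Inductive step.} Suppose $\langle C, \cstore, \mathscr{C}\rangle \xrightarrow{\mu} \langle C_1, \cstore_1, \mathscr{C}\rangle \xtwoheadrightarrow{\,\tls'\,} \langle C', \cstore', \mathscr{C}\rangle$, with $\epp{C} \eqr \some{M}$ and $N \mbrs M$. The step proceeds as follows.
\begin{enumerate}
\item Apply \cref{epp:completeness:thm} to the first step. This gives $M \xrightarrow{\mu} M_1''$ and $\epp{C_1} \eqr \some{M_1}$ with $M_1'' \mbrs M_1$.
\item Since $N \mbrs M$, \cref{merge:mbrs:net:transition} yields $N \xrightarrow{\mu} N_1$ with $N_1 \mbrs M_1''$.
\item Transitivity of the pointwise $\mbrs$ gives $N_1 \mbrs M_1$.
\item To use the induction hypothesis at $\langle C_1, \cstore_1, \mathscr{C}\rangle$, I need $C_1$ well-formed, which follows from \cref{chor:wf:preservation:thm}, and $C_1 \propto \mathscr{C}$, which follows from \cref{epp:coherence:preservation:thm}. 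Projectability of $C_1$ is already provided by step 1. $\mathscr{C}$ and $\mathscr{P}$ do not change.
\item The induction hypothesis, applied with $N_1 \mbrs M_1$, gives $N_1 \xtwoheadrightarrow{\tls'} N'$ with $\epp{C'} \eqr \some{M'}$ and $N' \mbrs M'$. Prepending the $\mu$-step yields $N \xtwoheadrightarrow{\mu :: \tls'} N'$.
\end{enumerate}

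The main obstacle is bookkeeping rather than new mathematics. Two points need care. First, the hypotheses of the preservation theorems must match exactly what we carry in the invariant; \cref{epp:coherence:preservation:thm} requires projectability of both $C$ and $\mathscr{C}$, which we have. Second, transitivity of the network-level $\mbrs$ must be established from the option-level partial order. If \cref{merge:mbrs:net:transition} turned out to need a more general form, I would fall back on composing it directly with the completeness step as above.
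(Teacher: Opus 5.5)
Your proposal is correct and follows essentially the same route as the paper: induction on the multi-step derivation, chaining single-step completeness with \cref{merge:mbrs:net:transition} (plus pointwise transitivity of $\mbrs$), and re-establishing the hypotheses via preservation of well-formedness and coherence. Generalising the invariant to $\epp{C} \eqr \some{M}$ with $N \mbrs M$ is exactly what makes the induction go through; the paper uses this same invariant explicitly in its multi-step soundness statement.
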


\begin{theorem}[Soundness of Multi-Step Executions]\label{epp:multi-step-soundness:thm}
    Given a well-formed choreographic configuration $\langle C, \cstore, \mathscr{C} \rangle$ where $C \propto \mathscr{C}$, $\epp{C} \eqr \some{M}$, and $\epp{\mathscr{C}} \eqr\some{\mathscr{P}}$, for any networks $N$, $N'$, and list of transition label $\tls$ where $N \mbrs M$, if $\langle N, \cstore, \mathscr{P} \rangle \xtwoheadrightarrow{\,\tls\,} \langle N', \cstore', \mathscr{P} \rangle$, then there exists a choreography $C'$ and a network $M'$ such that $\langle C, \cstore, \mathscr{C} \rangle \xtwoheadrightarrow{\,\tls\,} \langle C', \cstore', \mathscr{C} \rangle$, $\epp{C'} \eqr\some{M'}$, and $N' \mbrs M'$.
\end{theorem}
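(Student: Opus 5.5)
We prove \cref{epp:multi-step-soundness:thm} by induction on the list $\tls$, or equivalently on the derivation of the multi-step network transition $\langle N, \cstore, \mathscr{P} \rangle \xtwoheadrightarrow{\,\tls\,} \langle N', \cstore', \mathscr{P} \rangle$. We generalise the statement over $C$, $\cstore$, $M$ and $N$, so that the induction hypothesis can be applied to the intermediate configuration. Single steps are handled by the single-step soundness \cref{epp:soundness:thm}. The invariants needed to keep applying it come from \cref{chor:wf:preservation:thm} and \cref{epp:coherence:preservation:thm}.

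\emph{Base case} ($\tls$ empty). Here $N' = N$ and $\cstore' = \cstore$. We take $C' = C$ and $M' = M$. The empty choreographic multi-step holds by reflexivity, $\epp{C} \eqr \some{M}$ holds by assumption, and $N \mbrs M$ is a hypothesis.

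\emph{Inductive case} ($\tls = \mu :: \tls_0$). The network run decomposes as
\[
\langle N, \cstore, \mathscr{P} \rangle \xrightarrow[]{\mu} \langle N_1, \cstore_1, \mathscr{P} \rangle \xtwoheadrightarrow{\,\tls_0\,} \langle N', \cstore', \mathscr{P} \rangle .
\]
We proceed in four steps.
\begin{enumerate}
\item Apply \cref{epp:soundness:thm} to the first step, using $N \mbrs M$. This gives $C_1$ and $M_1$ with $\langle C, \cstore, \mathscr{C} \rangle \xrightarrow[]{\mu} \langle C_1, \cstore_1, \mathscr{C} \rangle$, $\epp{C_1} \eqr \some{M_1}$, and $N_1 \mbrs M_1$.
\item Re-establish the hypotheses for $C_1$. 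Well-formedness of $\langle C_1, \cstore_1, \mathscr{C} \rangle$ follows from \cref{chor:wf:preservation:thm}. Coherence $C_1 \propto \mathscr{C}$ follows from \cref{epp:coherence:preservation:thm}; its projectability premises hold because $\epp{C} \eqr \some{M}$ and $\epp{\mathscr{C}} \eqr \some{\mathscr{P}}$. The procedure definitions $\mathscr{C}$, and hence $\mathscr{P}$, are unchanged by transitions.
\item Apply the induction hypothesis to $C_1$, $M_1$, $N_1$ and the run along $\tls_0$. This gives $C'$ and $M'$ with $\langle C_1, \cstore_1, \mathscr{C} \rangle \xtwoheadrightarrow{\,\tls_0\,} \langle C', \cstore', \mathscr{C} \rangle$, $\epp{C'} \eqr \some{M'}$, and $N' \mbrs M'$.
\item Prepend the first choreographic step to obtain the run along $\mu :: \tls_0$.
\end{enumerate}

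The main obstacle is bookkeeping rather than mathematics. The statement must be generalised so that the "more branches" hypothesis $N \mbrs M$ threads through each step; this is exactly why \cref{epp:soundness:thm} is stated for any $N \mbrs M$ rather than for $N = M$. We must also ensure that the coherence-preservation premises match what is available at each intermediate configuration. If the multi-step relation in CSLib is defined by appending steps at the tail rather than the head, we would instead induct on that derivation and apply single-step soundness to the last step, with the same invariant argument.
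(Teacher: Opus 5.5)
Your proof is correct and follows the paper's approach: induction on the multi-step derivation (head-decomposed via \textsc{step-l}), applying single-step soundness (\cref{epp:soundness:thm}) and re-establishing the hypotheses for the intermediate choreography via \cref{chor:wf:preservation:thm} and \cref{epp:coherence:preservation:thm}. The paper also lists preservation of $\mbrs$ (\cref{merge:mbrs:net:transition}) among the ingredients, but that is only needed for the completeness direction. For soundness, as you observe, threading $N \mbrs M$ through \cref{epp:soundness:thm} already suffices.
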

We prove both theorems by induction on the derivation of the multi-step transition. In addition to the completeness and soundness~\cref{epp:completeness:thm,epp:soundness:thm}, we also need the preservation of well-formedness (\cref{chor:wf:preservation:thm}), more branches relation (\cref{merge:mbrs:net:transition}), and coherence (\cref{epp:coherence:preservation:thm}).

\section{Consequences of the Correctness of Endpoint Projection}
\label{sec:epp:properties}
The correctness of EPP guarantees important safety and liveness properties of networks projected from choreographies. In this section, we show two properties, which are \emph{communication safety} in \cref{epp:comm-safety} and \emph{deadlock freedom} in \cref{epp:deadlock-freedom}, as consequences of the correctness of EPP.

\subsection{A Safety Property: Communication Safety}
\label{epp:comm-safety}
Communication safety ensures the absence of communication errors. Intuitively, these communication errors are caused by mismatched sending, receiving, selection, and branching actions.

To formally define a state to be communication safe, we first define \emph{communication actions} and \emph{duality} of communication actions in \cref{comm-act:fig}, as well as \emph{enabled communication actions} in a process term defined in \cref{enabled-comm-act:fig}. The communication actions is a set of actions including sending ($!$), receiving ($?$), selection ($\oplus_{\albl}$), and branching ($\&_{\albl}$). We use the notation $\overline{\alpha}$ to denote the dual of $ \alpha$. The dual of a $!$ is a $?$ and the dual of a $\oplus_{\albl}$ is a $\&_{\albl}$ with the same label $\albl$, and vice versa.
\begin{figure}
\vspace{-1.0em}
\begin{align*}
   \text{Communication Action (\setof{Act})}\quad \alpha \ \metaDeff \ ! \cmid \ ? \cmid \oplus_{\albl} \cmid \&_{\albl}
\end{align*}
\vspace{-1.5em}
\begin{align*}
\begin{aligned}[c]
\overline{!} \triangleq \ ?
\end{aligned}
\quad\quad
\begin{aligned}[c]
\overline{?} \triangleq \ !
\end{aligned}
\quad\quad
\begin{aligned}[c]
\overline{\oplus_{\albl}} \triangleq \&_{\albl}
\end{aligned}
\quad\quad
\begin{aligned}[c]
\overline{\&_{\albl}} \triangleq \oplus_{\albl}
\end{aligned}
\end{align*}
\vspace{-1.0em}
\caption{Communication Actions and Duality of Communication Actions}
\label{comm-act:fig}
\vspace{-1.0em}
\end{figure}

\begin{figure}[b]
\vspace{-1.0em}
\begin{align*}
\begin{aligned}[c]
    \pnil \octopusop \pid{p} &\triangleq \emptyset
\end{aligned}
\quad\quad
\begin{aligned}[c]
    x \metaDef e \seq P \octopusop \pid{p} &\triangleq \emptyset
\end{aligned}
\quad\quad
\begin{aligned}[c]
    X(\pids{q}) \seq P \octopusop \pid{p} &\triangleq \emptyset
\end{aligned}
\end{align*}
\vspace{-1.7em}
\begin{align*}
\begin{aligned}[c]
    \condif e\condthen P_1\condelse P_2 \seq P \octopusop \pid{p} &\triangleq \emptyset
\end{aligned}
\quad\quad
\begin{aligned}[c]
    P_1 \choice P_2 \seq P \octopusop \pid{p} &\triangleq P_1 \octopusop \pid{p} \cup P_2 \octopusop \pid{p}
\end{aligned}
\end{align*}
\vspace{-1.7em}
\begin{align*}
\begin{aligned}[c]
    \psend{q}{e}\seq P \octopusop \pid{p} &\triangleq
    \begin{cases}
        \ \{!\} &\text{if $\pid{p} = \pid{q}$}\\[-0.5ex]
        \ \emptyset &\text{otherwise}
    \end{cases}\\[-0.5ex]
    \psel{q}\seq P \octopusop \pid{p} &\triangleq
    \begin{cases}
        \ \{\oplus_{\albl}\} &\text{if $\pid{p} = \pid{q}$}\\[-0.5ex]
        \ \emptyset &\text{otherwise}
    \end{cases}
\end{aligned}
\quad\quad
\begin{aligned}[c]
    \precv{q}{x}\seq P \octopusop \pid{p} &\triangleq
    \begin{cases}
        \ \{?\} &\text{if $\pid{p} = \pid{q}$}\\[-0.5ex]
        \ \emptyset &\text{otherwise}
    \end{cases}\\[-0.5ex]
    \precvl{q} \octopusop \pid{p} &\triangleq
    \begin{cases}
        \ \{\&_{\albl_i}\}_{i\in I} &\text{if $\pid{p} = \pid{q}$}\\[-0.5ex]
        \ \emptyset &\text{otherwise}
    \end{cases}
\end{aligned}
\end{align*}
\caption{Enabled Communication Actions}
\label{enabled-comm-act:fig}
\vspace{-1.0em}
\end{figure}
If a communication action appears in the first instruction of a process term, then it is an enabled communication action.
We define $P \octopusop \pid{p}$ as the operation for gathering enabled communication actions in a process term.
For instance, if the first action of $P$ is a send instruction $\psend{q}{e}$ and $\pid{p}$ is the sender, then the enabled communication action for $\pid{p}$ in $P$ is $!$. 
If the first action of $P$ is a receive instruction $\precv{q}{x}$ and $\pid{p}$ is the receiver, then the enabled communication action for $\pid{p}$ in $P$ is $?$. 
If the first action of $P$ is a select instruction $\psel{q}$ and $\pid{p}$ is the sender, then the enabled communication action for $\pid{p}$ in $P$ is $\oplus_{\albl}$. 
If the first action of $P$ is a branching term $\precvlpr{q}{P}$ and $\pid{p}$ is the receiver, then the enabled communication actions for $\pid{p}$ in $P$ are $\&_{\albl_i}$ for each label $\albl_i$ in the branching term. In all other cases, there is no enabled communication action for $\pid{p}$, hence it returns an empty set.

We now give the formal definition of a communication safety state in definition~\ref{def:comm-safety-state}.
\begin{definition}[Communication Safe State]\label{def:comm-safety-state}
    A communication safe state for any network $N$ is a state where, for any two different processes $\pid{p}$ and $\pid{q}$,
    given $N (\pid{q}) \octopusop \pid{p} \neq \emptyset$,
    for any communication action $\alpha \in N (\pid{p}) \octopusop \pid{q}$, if $\alpha$ is a branching action, then there exists a branching action $\alpha'\in N (\pid{p}) \octopusop \pid{q}$ and $\overline{\alpha'} \in  N (\pid{q}) \octopusop \pid{p}$; 
    if $\alpha$ is not a branching action, then $\overline{\alpha} \in  N (\pid{q}) \octopusop \pid{p}$.
\end{definition}
In \cref{epp:comm-safety:thm}, we formulate and prove the communication safety property for networks resulting from EPP, which states that if a network $N$ resulting from EPP of a choreography $C$ can perform a multi-step transition to $N'$, then the resulting network $N'$ is still at a communication safe state.
\begin{theorem}[Communication Safety]\label{epp:comm-safety:thm}
    \hspace{-0.8ex}For any well-formed choreographic configuration $\langle C, \cstore, \mathscr{C} \rangle$ where $C \propto \mathscr{C}$, if $\epp{C} \eqr \some{N}$ and $\epp{\mathscr{C}} \eqr\some{\mathscr{P}}$, then for any network $N'$ and list of transition labels $\tls$ such that $\langle N, \cstore, \mathscr{P} \rangle \xtwoheadrightarrow{\,\tls\,} \langle N', \cstore', \mathscr{P} \rangle$, the resulting network $N'$ is at a communication safe state.
\end{theorem}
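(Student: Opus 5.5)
The plan is to reduce communication safety of reachable networks to a purely static property of networks in the image of EPP (up to $\mbrs$), and then transport it along executions using multi-step soundness (\cref{epp:multi-step-soundness:thm}). Concretely, apply \cref{epp:multi-step-soundness:thm} with $M = N$ (reflexivity of $\mbrs$) to the execution $\langle N, \cstore, \mathscr{P} \rangle \xtwoheadrightarrow{\,\tls\,} \langle N', \cstore', \mathscr{P} \rangle$. This yields $C'$ and $M'$ with $\langle C, \cstore, \mathscr{C} \rangle \xtwoheadrightarrow{\,\tls\,} \langle C', \cstore', \mathscr{C} \rangle$, $\epp{C'} \eqr \some{M'}$ and $N' \mbrs M'$. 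By iterating \cref{chor:wf:preservation:thm} and \cref{epp:coherence:preservation:thm} along the choreographic execution, $\langle C', \cstore', \mathscr{C}\rangle$ is well-formed and $C' \propto \mathscr{C}$. It therefore suffices to prove a \emph{static lemma}: if $C$ is well-formed and $\epp{C} \eqr \some{M}$ and $N \mbrs M$, then $N$ is at a communication safe state.

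To prove the static lemma, I would first show that $\mbrs$ preserves enabled actions in the right way. If $\some{P} \mbrsw \some{Q}$, then $P \octopusop \pid{p}$ and $Q \octopusop \pid{p}$ agree on non-branching actions. Moreover, the branching actions of $P$ contain those of $Q$. This follows by unfolding the merge in \cref{merge:fig}: merging preserves head instructions, and it unions label sets of branching terms. For choices it recurses into both summands, which matches the clause for $P_1 \choice P_2$ in \cref{enabled-comm-act:fig}.

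The core is then an induction on $C$ (mutually with instructions), using the definition of process projection in \cref{epp:proc-projection:fig}. The claim is that for distinct $\pid p, \pid q$, every enabled action of $\proj{C}{p}$ towards $\pid q$ has its dual (or, for branching, some common label) enabled in $\proj{C}{q}$ towards $\pid p$. The cases are as follows.
\begin{itemize}
\item For a head communication or selection between $\pid p$ and $\pid q$, the two projections are exactly dual. Well-formedness guarantees $\pid p \neq \pid q$.
\item For a head instruction not involving both $\pid p$ and $\pid q$, one projection starts with an instruction not directed at the other process. If neither process is involved, both projections reduce to those of the continuation, and the induction hypothesis applies.
\item For conditionals, the guard's projection is a conditional with no enabled actions. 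The other process gets a merge of the branch projections, whose enabled actions are unions over the branches by the lemma above, so the induction hypothesis applies to each branch.
\item For choreographic choice at $\pid p$, $\proj{}{p}$ yields $P_1 \choice P_2$, whose enabled actions are the union over the branches. Meanwhile $\pid q$'s projection is a merge, which contains the enabled actions of each branch.
\item For procedure calls and runtime terms, the projection is a call, so there are no enabled actions. Here the side condition $N(\pid q)\octopusop\pid p \neq \emptyset$ in \cref{def:comm-safety-state} makes the case vacuous whenever the partner is stuck at a call.
\end{itemize}
Proposition~\ref{epp:proc-projection:seqcomp} handles the $\fatsemi$ arising in the conditional and choice clauses.

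The main obstacle is the asymmetry introduced by merges and by $\mbrs$. The branching side may offer more labels than the selecting side ever sends, and the selecting side inside a merged conditional may only correspond to a subset of branches. This is why \cref{def:comm-safety-state} only demands \emph{some} matching branching action. In the induction I expect trouble where one process's projection is a merge while the other's is a $\choice$ or a conditional, so that enabled actions of a merged term are not literally unions. I would first try to prove ``$\octopusop$ of a merge equals the union of $\octopusop$'' for successful merges. If that fails for nested choices, I would weaken the invariant to an existential matching statement per branch. A secondary obstacle is the non-vacuous case where $\pid q$'s projection is a procedure call while $\pid p$ expects a message. I would try to rule this out using coherence: the runtime-term and call projections only precede the unfolded body, so the head of $\pid p$ cannot be directed at $\pid q$ before $\pid q$ enters.
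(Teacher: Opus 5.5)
Your proposal is correct and follows the paper's route: Theorem~\ref{epp:comm-safe-state} (projected networks are communication safe, by structural induction on $C$), transported along executions by multi-step soundness plus preservation of well-formedness and coherence; your lifting of the static lemma to $N \mbrs M$ makes explicit the step from $N' \mbrs M'$ that the paper's sketch leaves implicit. You must, however, keep the premise $N(\pid{q}) \octopusop \pid{p} \neq \emptyset$ inside the inductive invariant, since the unconditional claim you state already fails for $\assign{q}{y}{1} \seq \ccom{p}{1}{q}{x}$; with it, the conditional/choice cases where one side's merge collapses to $\some{\pnil}$ need the routine side fact that an action of $\proj{C_i}{q}$ directed at $\pid{p}$ forces $\proj{C_i}{p} \neq \some{\pnil}$, and transferring the premise across $\mbrs$ uses that projected branching terms have nonempty label sets.
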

\begin{proof}
    We first show that given a choreography $C$ that is well-formed $\langle C, \cstore, \mathscr{C} \rangle \checkmark$ and $C \propto \mathscr{C}$, if $\epp{C} \eqr \some{N}$, the network $N$ is at a communication safe state by \cref{epp:comm-safe-state}.
    Then we prove the theorem using the soundness of EPP for multi-step executions stated in theorem~\ref{epp:multi-step-soundness:thm} together with the preservation of well-formedness in \cref{chor:wf:preservation:thm} and coherence in \cref{epp:coherence:preservation:thm}.
\end{proof}

\subsection{A Liveness Property: Deadlock Freedom}
\label{epp:deadlock-freedom}
Deadlock freedom is a property that ensures the absence of deadlocks, which are states where the system cannot make any progress. For choreographies and their projected networks, a deadlock can occur when there are processes waiting for each other to perform actions, resulting in a circular dependency that prevents any process from make a transition. As a consequence of the correctness of EPP, all networks resulting from EPP are deadlock free. To show the deadlock freedom of a network, we first introduce the \emph{progress property} of projectable choreographies in \cref{epp:progress:thm}. Then in \cref{epp:deadlock-freedom:thm}, we show the deadlock freedom of networks resulting from EPP.
\begin{theorem}[Progress of Projectable Choreographies]\label{epp:progress:thm}
    For any well-formed choreographic configuration $\langle C, \cstore, \mathscr{C} \rangle$ where $C \propto \mathscr{C}$, if both $C$ and $\mathscr{C}$ are projectable, and $C \neq \cnil$, then there exists a choreography $C'$, a store $\cstore'$, and a transition label $\mu$ such that $\langle C, \cstore, \mathscr{C} \rangle \xrightarrow[]{\mu} \langle C', \cstore', \mathscr{C} \rangle$.
\end{theorem}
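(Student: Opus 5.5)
We proceed by structural induction on $C$, with $C = I \seq C_0$ since $C \neq \cnil$, and a case split on the head instruction $I$. For each case the aim is to exhibit a transition derivable from the rules in \cref{chor:semantics:lts:fig}. We rely throughout on one ingredient: the local evaluation relation $\downarrow$ is total, so every premise of the form $\cstore(\pid{p}) \vdash e \downarrow v$ is satisfiable.

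\textbf{Basic cases.} For an assignment, a communication or a selection, rules \textsc{assign}, \textsc{com} and \textsc{sel} apply directly; totality of $\downarrow$ discharges the evaluation premise.

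\textbf{Conditionals.} Totality gives some $v$ with $\cstore(\pid{p}) \vdash e \downarrow v$. If $v = \mathit{true}$ we apply \textsc{cond-then}, and otherwise \textsc{cond-else}.

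\textbf{Procedure calls.} For $X(\pids{p}) \seq C_0$, well-formedness gives that $X$ is defined in $\mathscr{C}$ with matching arity and that $\pids{p}$ is non-empty. Picking any $i$, rule \textsc{call-first} fires. For a runtime term, \textsc{call-enter} has no premises, so it applies immediately.

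\textbf{Choreographic choice.} The substantive case is $C_1 \choice_\pid{p} C_2 \seq C_0$, and this is where I expect the main obstacle. If $\pid{p} \notin \pn(C_1) \cup \pn(C_2)$, \textsc{skip-l} applies. Otherwise $\pid{p}$ occurs in, say, $C_1$, and \textsc{choice-l} requires a transition of $C_1$ whose label mentions $\pid{p}$.

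Plain progress of $C_1$ is not enough here, since the induction hypothesis might return a transition that avoids $\pid{p}$. I would therefore strengthen the induction into an auxiliary lemma:
\begin{quote}
for well-formed, coherent, projectable $C$ and any $\pid{r} \in \pn(C)$, there is a transition $\langle C, \cstore, \mathscr{C}\rangle \xrightarrow{\mu} \cdots$ with $\pid{r} \in \pn(\mu)$.
\end{quote}
This is proved by the same case analysis. When $\pid{r}$ is not in the head instruction, we use the induction hypothesis on the continuation together with \textsc{delay}. For conditional or choice heads not involving $\pid{r}$, we use \textsc{delay-cond} or \textsc{delay-choice}, which demand the \emph{same} label $\mu$ from both branches. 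That demand is exactly what projectability supplies: the merge $\proj{C_1}{r} \sqcup \proj{C_2}{r}$ being defined forces both branches to start with matching actions for the process in question.

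If delaying fails because the label would mention the guard, we instead fire the head itself. This is acceptable for plain progress, but in the strengthened lemma we may need to fall back to it, with $\pid{r}$ becoming involved through the guard's resolution of the head. The risk is that aligning labels across branches requires a further lemma, analogous to \cref{merge:net:transition}, that relates mergeable projections to transitions with identical labels. Evaluation nondeterminism must also be fixed identically in both branches, which totality lets us do by choosing the same $v$.

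Projectability of the subterms and coherence are inherited from $C$, using \cref{epp:proc-projection:seqcomp} for the continuations; this justifies applying the induction hypothesis in every case.
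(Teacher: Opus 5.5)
\textbf{Gap in the choice case: your strengthened lemma is false.} Take pairwise distinct $\pid{p},\pid{q},\pid{s}$ and $C_1 = \ccom{q}{e}{s}{x}\seq\ccom{s}{e'}{p}{y}$. Then $\pid{p}\in\pn(C_1)$, but every transition of $C_1$ is labelled $\pid{q}.v\tto\pid{s}$. The second communication cannot be lifted by \textsc{delay}, because $\pid{s}$ occurs in both instructions. Your recipe (induction hypothesis on the continuation plus \textsc{delay}, otherwise fire the head) produces exactly this $\pid{p}$-free label. The fallback you sketch for conditional or choice heads does not rescue it either, since a label such as $\tlthen{q}$ mentions only the guard.

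The failure reaches the theorem itself. Consider $(C_1\choicep{p}C_1)\seq\cnil$. It is well-formed and coherent, and it is projectable because both branches project identically onto $\pid{q}$ and $\pid{s}$, so the merges are defined. Yet none of \textsc{choice-l}, \textsc{choice-r}, \textsc{skip-l} or \textsc{skip-r} applies. Its only transitions come from \textsc{delay-choice}, a rule your choice case for the main theorem never uses.

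\textbf{What the choice case needs instead.} Split on the transition $\mu$ that the plain induction hypothesis gives for $C_1$; it applies because $\pid{p}\in\pn(C_1)$ forces $C_1\neq\cnil$.
\begin{itemize}
\item If $\pid{p}\in\pn(\mu)$, apply \textsc{choice-l}.
\item Otherwise every $\pid{r}\in\pn(\mu)$ differs from $\pid{p}$, so $\proj{C_1}{r}\sqcup\proj{C_2}{r}$ is defined. You then need a branch-alignment lemma: under this mergeability, $C_2$ can perform the same $\mu$ and reach the same store $\cstore'$. After that, \textsc{delay-choice} fires.
\end{itemize}
This is the lemma you anticipated in your last paragraph, but it belongs here, not inside a strengthened progress statement.

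The lemma must deliver the same store, not merely the same label. Labels such as $\tlassign{q}$ or $\pid{q}.v\tto\pid{s}$ do not record which variable is written. What pins the variable down is mergeability, which only succeeds on identical non-branching instructions, together with choosing the same value $v$ in both branches.

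\textbf{A separate, smaller point.} In the \textsc{call-first} case you assert that well-formedness makes $\pids{p}$ non-empty. The printed \textsc{wf-call} has no such premise, and for a nullary $X()\seq\cnil$ no rule applies. You need to state where that assumption comes from.
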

\begin{proof}
    By structural induction on the derivation of the well-formedness of $C$.
\end{proof}
\begin{theorem}[Deadlock Freedom]\label{epp:deadlock-freedom:thm}
    For any well-formed choreographic configuration $\langle C, \cstore, \mathscr{C} \rangle$ where $C \propto \mathscr{C}$, if $\epp{C} \eqr \some{N}$ and $\epp{\mathscr{C}} \eqr\some{\mathscr{P}}$, and for any network $N'$, store $\cstore'$, and list of transition labels $\tls$ such that $\langle N, \cstore, \mathscr{P} \rangle \xtwoheadrightarrow{\,\tls\,} \langle N', \cstore', \mathscr{P} \rangle$ where $N' \neq \pnil$, then there exists a network $N''$, a store $\cstore''$, and a transition label $\mu$ such that $\langle N', \cstore', \mathscr{P} \rangle \xrightarrow[]{\mu} \langle N'', \cstore'', \mathscr{P} \rangle$.
\end{theorem}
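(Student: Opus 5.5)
The plan is to reduce deadlock freedom of the network to progress of the choreography via soundness and completeness. Given $\langle N, \cstore, \mathscr{P} \rangle \xtwoheadrightarrow{\,\tls\,} \langle N', \cstore', \mathscr{P} \rangle$, I first apply the soundness of multi-step executions (\cref{epp:multi-step-soundness:thm}), instantiated with $M = N$; this is admissible because $N \mbrs N$ holds by reflexivity of $\mbrs$. It yields $C'$ and $M'$ with $\langle C, \cstore, \mathscr{C} \rangle \xtwoheadrightarrow{\,\tls\,} \langle C', \cstore', \mathscr{C} \rangle$, $\epp{C'} \eqr \some{M'}$, and $N' \mbrs M'$. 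Inducting along the multi-step choreographic execution, the preservation results (\cref{chor:wf:preservation:thm} and \cref{epp:coherence:preservation:thm}) give that $\langle C', \cstore', \mathscr{C} \rangle$ is well-formed and $C' \propto \mathscr{C}$. Moreover, $C'$ is projectable because $\epp{C'} \eqr \some{M'}$, and $\mathscr{C}$ is unchanged and hence still projectable.

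Next, I show $C' \neq \cnil$ by contradiction. If $C' = \cnil$, then $\pn(C') = \emptyset$, so $M' = \nnil$. From $N' \mbrs \nnil$ we get $\some{N'(\pid{p})} \sqcup \some{\pnil} = \some{N'(\pid{p})}$ for every $\pid{p}$. By the definition of merge, the only term mergeable with $\pnil$ is $\pnil$ itself, so $N'(\pid{p}) = \pnil$ for all $\pid{p}$. Hence $N' = \nnil$, contradicting the hypothesis $N' \neq \pnil$. The small lemma ``$P \mbrs \pnil$ implies $P = \pnil$'' is by case analysis on $P$ using \cref{merge:proc:fig}.

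With $C' \neq \cnil$, progress (\cref{epp:progress:thm}) gives a transition $\langle C', \cstore', \mathscr{C} \rangle \xrightarrow[]{\mu} \langle C'', \cstore'', \mathscr{C} \rangle$. Completeness (\cref{epp:completeness:thm}) applied to $C'$ with $\epp{C'} \eqr \some{M'}$ then gives $\langle M', \cstore', \mathscr{P} \rangle \xrightarrow[]{\mu} \langle M'', \cstore'', \mathscr{P} \rangle$ for some $M''$. Finally, since $N' \mbrs M'$, \cref{merge:mbrs:net:transition} lifts this transition to $\langle N', \cstore', \mathscr{P} \rangle \xrightarrow[]{\mu} \langle N'', \cstore'', \mathscr{P} \rangle$, which is the required step.

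The main obstacle I expect is bookkeeping rather than new insight. The multi-step soundness theorem must deliver the invariants (well-formedness and coherence of $C'$) that are needed to invoke progress and completeness. If it does not return them directly, I would strengthen the multi-step induction to carry them, reusing the two preservation theorems at each step. The other mildly delicate point is the $\nnil$ argument, which relies on $\epp{\cnil}$ being exactly the terminated network and on the rigidity of $\mbrs$ at $\pnil$.
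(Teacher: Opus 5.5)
Your proof is correct and follows the paper's route: multi-step soundness (using $N \mbrs N$ by reflexivity), then preservation of well-formedness and coherence, progress, and completeness, with the $C' = \cnil$ case and the final lift from $M'$ to $N'$ via \cref{merge:mbrs:net:transition} spelled out where the paper leaves them implicit. One bookkeeping detail: \cref{epp:coherence:preservation:thm} also assumes the pre-transition choreography is projectable, so your induction along the choreographic multi-step execution from $C$ to $C'$ must carry projectability as a third invariant; single-step completeness (or soundness) re-establishes it at each step.
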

\begin{proof}
    We prove the theorem by using the soundness of EPP for multi-step executions in \cref{epp:multi-step-soundness:thm} and the progress property of projectable choreographies in \cref{epp:progress:thm}, the completeness of EPP in \cref{epp:completeness:thm} together with the preservation of well-formedness and coherence.
\end{proof}

\section{Related Work}
\label{sec:related-work}

\subsubsection*{Mechanised theories of choreographic programming}

\begin{table}[t]
\centering
\caption{Comparison of mechanised choreographic programming languages.}
\label{tab:mechanised-cp}
\begingroup
\footnotesize
\addtolength{\tabcolsep}{-0.2em}
\begin{tabular}{@{}cccccccccc@{}}
\toprule
\multirow{2}{*}{Work} &
\multirow{2}{*}{Assistant} &
\multicolumn{2}{c}{Knowledge of Choice} &
\multicolumn{2}{c}{Nondeterminism} &
\multicolumn{3}{c}{Recursion} &
\multirow{2}{*}{Comm.} \\
\cmidrule(lr){3-4}
\cmidrule(lr){5-6}
\cmidrule(lr){7-9}
&
&
Select &
Merge &
Local &
Chor. &
Schema &
Parameters &
Entry & \\
\midrule
(1) & HOL4 & Binary  & Plain & No  & No  & Tail    & None         & Blocking   & Async \\
(2) & Rocq & Binary  & Full  & No  & No  & Tail    & None         & Concurrent & Sync  \\
(3) & Rocq & Binary  & Full  & Yes & No  & General & Higher-Order & Blocking   & Async \\
(4) & Rocq & Binary  & Full  & Yes & No  & General & Higher-Order, Process Sets & Blocking   & Async \\
(5) & Lean~4 & General & Full  & Yes & Yes & General & Processes    & Concurrent & Sync  \\
\midrule
\multicolumn{10}{c}{\footnotesize
(1) Kalas \cite{PGSN22}, 
(2) Core Choreographies \cite{CMP21b},
(3) Pirouette \cite{HG22},
}\\
\multicolumn{10}{c}{\footnotesize
(4) Quick Change \cite{SHC25},
(5) \mech (this work)
}\\
\bottomrule
\end{tabular}
\endgroup
\vspace{-1.0em}
\end{table}

Several mechanisations have been proposed in recent years. Although these developments share the goal of formalising a metatheory of CP, they differ primarily in three dimensions: their treatment of KoC, the forms of nondeterminism admitted by the language, and the expressiveness of recursion (cf.~\cref{tab:mechanised-cp}).

\emph{Knowledge of choice.~}
All these works address KoC during EPP through variations of the select-and-merge approach. 
We classify them according to whether selections express \emph{binary} or \emph{general} choices and whether merging is \emph{plain} or \emph{full}.\footnote{%
  Plain merge is the original merge operator introduced for multiparty session types~\cite{BCDLDY08} and it only merges branches at the top level.
  In contrast, full merge~\cite{DYBH12} recursively combines compatible continuations and can merge behaviours that differ deeper in the protocol structure reconciling more branches than plain merge.
}
Core Choreographies~\citep{CMP21b,CMP23} provides the first mechanisation of full merge. Restricting selections to two alternatives bounds the number of combinations that merge needs to handle to a fixed number, allowing these cases to be hard-coded into the definition. 
While this simplifies definitions and proofs, it reduces protocol efficiency, since multi-way choices are encoded as a cascade of binary selections. 
In contrast, \mech{} is the first mechanisation with general selections and full merge. 
Alternative approaches to KoC that are not mechanised are discussed later in this section. 

\emph{Nondeterminism.~}
All these works adopt a two-layer design in which choreographies are paired with a dedicated language for local computation. 
Pirouette~\cite{HG22} is the first mechanisation to support nondeterministic local evaluation. 
\mech{} is the only one to support computations whose nondeterminism arises from concurrency and communication, in addition to local computation, through the choreographic choice construct. 

\emph{Recursion.~}
We classify recursion according to the recursion schema, the type of parameters of recursive definitions, and whether processes must synchronise when entering a call. 
Pirouette is the first mechanised language to support general recursion and higher-order parameters, allowing choreographies of which participants are fixed to be passed as arguments.
However, Pirouette imposes a global synchronisation among \emph{all} processes whenever a choreographic function is invoked, which is a cost avoided by some models and implementations of Higher-Order CP~\cite{CGLMP23,GMP24}.
Quick Change~\cite{SHC25} extends Pirouette with abstractions over sets of processes~\cite{CM16,DY11}, enabling choreographies that are parametric over dynamic collections of participants with homogeneous behaviour. 
Core Choreographies is the first mechanisation to support recursion without introducing additional communications or synchronisations.

\emph{Communication.~}
Existing mechanisations span both synchronous and asynchronous (FIFO) semantics. 
This distinction is largely orthogonal to expressive power in the absence of choreographic nondeterminism~\cite{CM17b}.

\subsubsection*{Mechanised theories of multiparty session types}
Closely related to CP, \emph{multiparty session types} (MPST) study global specifications of distributed protocols together with their projection to local specifications \cite{HYC08,HYC16}. 
Unlike choreographies, the \emph{global types} in MPST describe communication structure rather than executable programs.
Consequently, mechanisations of MPST focus primarily on the correctness of typing and EPP \emph{without} computation.
Several mechanised developments of MPST have appeared in recent years
\cite{CGY21,JBK22,EY24,TBC25a,TBC25b,CFJ26}. %
With the exception of \cite{CFJ26}, which dispenses with EPP in favour of typing endpoint programs directly against global types, all mechanisations adopt for their EPP the standard MPST treatment of KoC through general selections and plain merge. %
All these works use closed tail-recursive definitions and none support choreographic nondeterminism.

\subsubsection*{Knowledge of choice}
Approaches to KoC proposed in the literature can be classified primarily on whether the propagation happens after the choice point, at the choice point itself, or before.

\emph{After choice.~}
The select-and-merge approach has underpinned CP since its introduction \cite{M13:phd,M23}. 
Classical selections carry only constant labels. 
While this is sufficient from a theoretical perspective, in practice, protocols often combine control-flow information with data in a single message to reduce communication overhead.
\citet{LM23} introduced \emph{type-based selection}, %
to convey KoC by the message type itself, and demonstrated interoperability with existing protocol implementations through the first choreographic implementation of IRC~\cite{K20}.

\emph{At choice.~}
A second line of work integrates KoC propagation directly into the semantics of conditionals.
Dynamic Choreographies~\cite{PGGLM16} and HasChor~\cite{SKK23} broadcast the outcome of each conditional to all its participants without relying on merge.
This substantially simplifies EPP but may introduce unnecessary communications, since processes with identical (or mergeable) behaviours across branches are nevertheless contacted. Also, it incurs the same inefficiency of binary select as nested conditionals require cascading broadcasts.

\emph{Before choice.~}
A third line of work establishes KoC before a choice point is reached, dispensing with selections altogether. 
Differently from the other approaches, it preserve KoC beyond the scope of individual conditionals.
\citet{JB22} formulated KoC as a logical predicate, using Hoare-style reasoning to prove that all participants in a choice possess sufficient information to reach the same outcome independently. 
\citet{BKJSKN25} introduced a type system that achieves KoC by ensuring that all participants in a choice evaluate the same deterministic guard on the same data. 
Both works incur the same communication cost as current `at-choice' approaches, as they require all participants to establish KoC even if their behaviours are not affected by the choice.
We anticipate that extending these works with merge can address this limitation.
\subsubsection*{Choreographic and Mixed choice}
In CP, \citet{M23} proposed the first choreographic choice primitive, sketching its semantics and EPP (without formal claims or proofs).
While the idea for EPP is reasonable, the semantics does not correctly capture compiled behaviour and does not guarantee deadlock-freedom. For instance, both examples~\ref{ex:chor-nested-choice} and \ref{ex:chor-gc} are counterexamples to the correctness of EPP under the sketched semantics in \citep{M23}, which we discovered thanks to our mechanisation efforts.
Our semantics for choreographic choice in \mech is new and resolves both issues.
To the best of our knowledge, no other CP language supports any form of mixed choice.

In MPST, more sophisticated and expressive forms of choice have been explored.
\citet{PY24a,PY24b} study mixed choice for synchronous MPST without requiring every choice to be resolved by a statically designated participant, as in \cite{M23} and \mech.
Their framework can represent communication patterns like the M and Star synchronisation~\cite{GGS08,PNG13}, which are inexpressible using conditionals or choreographic choices.
\citet{BHVT26} extended mixed choice to asynchronous MPST~\cite{BHVT26} allowing participants to speculatively execute different branches before a distinguished observer ultimately commits the protocol to one outcome. 
This permits a uniform treatment of phenomena such as exceptions, faults, and timeouts, but breaks the traditional behavioural equivalence between global and local specifications while the choice remains unresolved, highlighting the challenges of combining asynchronous communication with protocol-level nondeterminism.

\section{Conclusion and Future Work}
\label{sec:conclusion:future-work}
We have presented \mech, the first mechanisation of CP featuring general branching, general recursion, and choreographic nondeterminism. Our mechanisation effort is substantial, which contains more than 40,000 lines of Lean 4 code. It establishes safety and liveness guarantees for new scenarios that could not be correctly addressed in previous theories of CP such as first-come-first-served scheduling, locks, and barriers.

\mech is designed to enable future developments of CP based on our mechanisation.
We plan to extend \mech with asynchronous communication, as can improve efficiency \cite{CM13,PPM24} of communication. As recent work on asynchronous mixed choice for MPST~\cite{BHVT26} illustrates, the interaction between asynchrony nondeterministic behaviours introduces fundamental semantic challenges for establishing the equivalence between global and endpoint behaviours. On top of that, extending \mech with asynchrony also requires us to understand the interaction of these features with full merge.
There has been efforts to build certified compiler frameworks for CP~\cite{PGSN22,HG22,CMP21b,CLM23} using existing mechanised theories. With expressive features in \mech, we will also investigate how to build certified and semi-certified programming toolchains based on \mech that connect the benefits of our theory to mainstream tooling, for example by compiling our process terms to executable programming languages like Java or Rust. Here the formal parametrisation of our development explained in \cref{sec:parameters} -- for example on local computation -- is useful.


\bibliographystyle{ACM-Reference-Format}
\bibliography{references}

\appendix

\section{Stores and Process Name Substitutions}
\label{app:stores-subst}
\begin{definition}[Update Operations of Stores]\label{app:def:store-update}
\begin{align*}
\begin{aligned}[c]
\sigma[y \mapsto v] (x) &\triangleq
\begin{cases}
\ v &\text{if $x = y$}\\
\ \sigma(x) &\text{otherwise}
\end{cases}
\end{aligned}
\quad\quad\quad
\begin{aligned}[c]
\cstore[\pid{q}.x \mapsto v] (\pid{p}) &\triangleq
\begin{cases}
\ \cstore(\pid{p})[x \mapsto v] &\text{if $\pid{p} = \pid{q}$}\\
\ \cstore(\pid{p}) &\text{otherwise}
\end{cases}
\end{aligned}
\end{align*}
\end{definition}

\begin{definition}[Name Substitution for Process Names and Sequences of Process Names]\label{chor:semantics:pid-pids-substitution}
\begin{align*}
\begin{aligned}[c]
    \pid{p}[\pid{r}/\pid{s}] &\triangleq
    \begin{cases}
    \,\pid{r} &\text{if $\pid{p} = \pid{s}$}\\
    \,\pid{p} &\text{otherwise}
    \end{cases}
\end{aligned}
\quad\quad\quad\quad
\begin{aligned}[c]
    (\pid{p}_1, \dots \pid{p}_n) [\pid{r}/\pid{s}] &\triangleq (\pid{p}_1[\pid{r}/\pid{s}], \dots, \pid{p}_n[\pid{r}/\pid{s}])
\end{aligned}
\end{align*}
\end{definition}

\begin{definition}[Name Substitution for Choreographies and Instructions]\label{chor:semantics:chor-substitution}
\begin{align*}
\begin{aligned}[c]
\cnil[\pid{r}/\pid{s}] &\triangleq \cnil\\
(\pid{p}.x \metaDef e)[\pid{r}/\pid{s}] &\triangleq (\pid{p}[\pid{r}/\pid{s}]).x \metaDef e\\
(\gencom)[\pid{r}/\pid{s}] &\triangleq (\pid{p}[\pid{r}/\pid{s}].e) \tto (\pid{q}[\pid{r}/\pid{s}]).x\\
(\gensel)[\pid{r}/\pid{s}] &\triangleq (\pid{p}[\pid{r}/\pid{s}]) \tto (\pid{q}[\pid{r}/\pid{s}])[\albl]
\end{aligned}
\quad\quad
\begin{aligned}[c]
(I\seq C)[\pid{r}/\pid{s}] &\triangleq (I[\pid{r}/\pid{s}])\seq (C[\pid{r}/\pid{s}])\\
(X(\pids{p})[\pid{r}/\pid{s}]) &\triangleq X(\pids{p}[\pid{r}/\pid{s}])\\
(\pid{q} : X(\pids{p}).C) [\pid{r}/\pid{s}] &\triangleq \pid{q}[\pid{r}/\pid{s}] : X(\pids{p}[\pid{r}/\pid{s}]).(C[\pid{r}/\pid{s}])\\
(C_1 \choice_\pid{p} C_2) [\pid{r}/\pid{s}] &\triangleq (C_1[\pid{r}/\pid{s}]) \choice_{\pid{p}[\pid{r}/\pid{s}]}(C_2[\pid{r}/\pid{s}])
\end{aligned}
\end{align*}
\vspace{-1.0em}
\begin{align*}
(\condif \pid{p}.e\condthen C_1\condelse C_2) [\pid{r}/\pid{s}] &\triangleq \condif(\pid{p}[\pid{r}/\pid{s}]).e \condthen(C_1[\pid{r}/\pid{s}]) \condelse (C_2[\pid{r}/\pid{s}])\\
C[\pid{r}_1, \dots, \pid{r}_n/\pid{s}_1, \dots, \pid{s}_n] &\triangleq C[\pid{r}_1/\pid{s}_1]\dots[\pid{r}_n/\pid{s}_n]
\end{align*}
\end{definition}

\begin{definition}[Name Substitution for Process Terms and Instructions]\label{proc:semantics:proc-substitution}
\begin{align*}
\begin{aligned}[c]
\pnil[\pid{r}/\pid{s}] &\triangleq \pnil\\
(\precvl p)[\pid{r}/\pid{s}] &\triangleq \precvl p[\pid{r}/\pid{s}]\\
(\psend{p}{e})[\pid{r}/\pid{s}] &\triangleq \psend{p[\pid{r}/\pid{s}]}{e}\\
(x \metaDef e)[\pid{r}/\pid{s}] &\triangleq x \metaDef e
\end{aligned}
\quad\quad
\begin{aligned}[c]
(I\seq P)[\pid{r}/\pid{s}] &\triangleq (I[\pid{r}/\pid{s}])\seq (P[\pid{r}/\pid{s}])\\
(\psel{p})[\pid{r}/\pid{s}] &\triangleq \psel{p[\pid{r}/\pid{s}]}\\
(\precv{p}{x})[\pid{r}/\pid{s}] &\triangleq \precv{p[\pid{r}/\pid{s}]}{x}\\
(X(\pids{p})[\pid{r}/\pid{s}]) &\triangleq X(\pids{p}[\pid{r}/\pid{s}])
\end{aligned}
\end{align*}
\vspace{-1.0em}
\begin{align*}
(\condif e\condthen P_1\condelse P_2) [\pid{r}/\pid{s}] &\triangleq \condif e \condthen(P_1[\pid{r}/\pid{s}]) \condelse (P_2[\pid{r}/\pid{s}])\\
(P_1 \choice P_2) [\pid{r}/\pid{s}] &\triangleq (P _1[\pid{r}/\pid{s}]) \choice (P_2[\pid{r}/\pid{s}])\\
P[\pid{r}_1, \dots, \pid{r}_n/\pid{s}_1, \dots, \pid{s}_n] &\triangleq P[\pid{r}_1/\pid{s}_1]\dots[\pid{r}_n/\pid{s}_n]
\end{align*}
\end{definition}

\section{Well-formedness of Choreographies}
\label{app:chor:wf}
We define the \emph{well-formedness of choreographies} with the notation $\langle \mathscr{C}, \{\pids{r}\}, C \rangle \checkmark$.
It denotes that a choreography $C$ is well-formed under a set of choreographic procedures $\mathscr{C}$ and a set of processes that may be involved $\{\pids{r}\}$.
Figure~\ref{chor:wf:fig} gives the formal definition of well-formed choreographies. A terminated choreography is always well-formed, shown in rule \textsc{wf-end}.
A choreography starting with a local assignment is well-formed if the process $\pid{p}$ that performs the assignment is involved in the choreography and the continuation is well-formed, shown in the rule \textsc{wf-assign}.
Rules \textsc{wf-com} and \textsc{wf-sel} are similar. They state that a choreography starting with a communication or a selection is well-formed if the two processes communicating an expression or a label, namely, $\pid{p}$ and $\pid{q}$,
are different and both of them are involved in the choreography. In addition, the continuation of the choreography also needs to be well-formed.
Rules \textsc{wf-cond} and \textsc{wf-choice} are similar. For a choreography that starts with a conditional or a choice, if the guard process or the process that resolves the choice is involved in the choreography,
and both branches as well as the continuation are well-formed, then the choreography is well-formed.

The well-formedness of a choreography starting with a procedure call has more requirements, which is given in rule \textsc{wf-call}. It requires that all processes in the arguments of the procedure call $\pids{p}$ are distinct
and involved in the choreography, the invoked procedure $X(\pids{q}) = C'$ is defined, the number of arguments matches the number of parameters $\pids{q}$ of the invoked procedure, and the continuation is well-formed.
Lastly, rule \textsc{wf-rt-call} gives the well-formedness of a choreography starting with a runtime term. It states that if the arguments of the procedure call $\pids{p}$ are distinct,
the process $\pid{q}$ that still has to start the procedure call is involved in the choreography and the arguments of the procedure call, the invoked procedure $X(\pids{s}) = C''$ is defined. 
the number of arguments matches the number of parameters $\pids{s}$ of the invoked procedure, and the continuation is well-formed, then the choreography is well-formed.
\begin{figure}
\begin{mathpar}
\inferrule*[right={\scriptsize wf-end}]{ }
{ \langle \mathscr{C}, \{\pids{r}\}, \termination \rangle \checkmark}%

\inferrule*[right={\scriptsize wf-assign}]{\pid{p} \in \{\pids{r}\} \\ \langle \mathscr{C}, \{\pids{r}\}, C \rangle \checkmark} 
{ \langle \mathscr{C}, \{\pids{r}\}, \pid{p}.x \metaDef e \seq C \rangle \checkmark }%

\inferrule*[right={\scriptsize wf-com}]{\pid{p} \neq \pid{q} \\ \pid{p}, \pid{q} \in \{\pids{r}\} \\ \langle \mathscr{C}, \{\pids{r}\}, C \rangle \checkmark}  
{ \langle \mathscr{C}, \{\pids{r}\},\gencom \seq C \rangle \checkmark}%

\inferrule*[right={\scriptsize wf-sel}]{\pid{p} \neq \pid{q} \\ \pid{p}, \pid{q} \in \{\pids{r}\} \\ \langle \mathscr{C}, \{\pids{r}\}, C \rangle \checkmark }
{ \langle \mathscr{C}, \{\pids{r}\}, \gensel \seq C \rangle \checkmark }%

\inferrule*[right={\scriptsize wf-cond}]{ \pid{p} \in \{ \pids{r}\} \\ \langle \mathscr{C}, \{\pids{r}\}, C_1 \rangle \checkmark \\ \langle \mathscr{C}, \{\pids{r}\}, C_2 \rangle \checkmark \\ \langle \mathscr{C}, \{\pids{r}\}, C \rangle \checkmark}
{\langle \mathscr{C}, \{\pids{r}\}, \condif \pid{p}.e\condthen C_1\condelse C_2 \seq C \rangle \checkmark}%

\inferrule*[right={\scriptsize wf-choice}]{ \pid{p} \in \{\pids{r}\} \\ \langle \mathscr{C}, \{\pids{r}\}, C_1 \rangle \checkmark \\ \langle \mathscr{C}, \{\pids{r}\}, C_2 \rangle \checkmark \\ \langle \mathscr{C}, \{\pids{r}\}, C \rangle \checkmark}
{\langle \mathscr{C}, \{\pids{r}\}, C_1 \choice_\pid{p} C_2 \seq C \rangle \checkmark}

\inferrule*[right={\scriptsize wf-call}]{ \distinct(\pids{p}) \\ \{\pids{p}\} \subseteq \{\pids{r}\} \\ X(\pids{q}) = C' \in \mathscr{C} \\ |\pids{p}| = |\pids{q}| \\ \langle \mathscr{C}, \{\pids{r}\}, C \rangle \checkmark }
{\langle \mathscr{C}, \{\pids{r}\}, X(\pids{p}) \seq C  \rangle \checkmark}

\inferrule*[right={\scriptsize wf-rt-call}]{ \distinct(\pids{p}) \\ \pid{q} \in \{\pids{r}\} \\ \pid{q} \in \{\pids{p}\} \\ X(\pids{s}) = C'' \in \mathscr{C} \\ |\pids{p}| = |\pids{s}| \\ \langle \mathscr{C}, \{\pids{r}\}, C \rangle \checkmark}
{ \langle \mathscr{C}, \{\pids{r}\}, \pid{q} : X(\pids{p}).C' \seq C \rangle \checkmark }
\end{mathpar}
\caption{Well-Formedness of Choreographies}
\label{chor:wf:fig}
\end{figure}

\section{Algebraic Properties of Operators}
\label{app:algebraic-properties}
\begin{proposition} [Sequential Composition of Choreographies Forms a Monoid]\label{app:chor:seqcomp:prop}
\begin{align*}
    \cnil \fatsemi C &\triangleq C \quad\quad C \fatsemi \cnil \triangleq C &\text{(Identity element)}\\
    (C_1 \fatsemi C_2) \fatsemi C_3 &\triangleq C_1 \fatsemi (C_2 \fatsemi C_3) &\text{(Associativity)}
\end{align*}
\end{proposition}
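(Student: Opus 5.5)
Since $\fatsemi$ on choreographies is defined mutually with $\fatsemi$ on instructions (\cref{chor:semantics:seqcomp}), every law will be proved by \emph{mutual structural induction} on $C$ and $I$. For each choreographic law there is a companion instruction-level statement. The instruction case is needed only for runtime terms $\pid{q} : X(\pids{p}).C'$, since every other instruction is left unchanged by $I \fatsemi C$. The left identity $\cnil \fatsemi C = C$ holds by the first defining clause, so no induction is needed there.

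\textbf{Right identity.} We prove $C \fatsemi \cnil = C$ together with $I \fatsemi \cnil = I$.
For $C = \cnil$ the claim is immediate.
For $C = I \seq C'$ we have $(I \seq C') \fatsemi \cnil = (I \fatsemi \cnil) \seq (C' \fatsemi \cnil)$, and both components close by the induction hypotheses.
For instructions, the only nontrivial case is a runtime term: $(\pid{q} : X(\pids{p}).C') \fatsemi \cnil = \pid{q} : X(\pids{p}).(C' \fatsemi \cnil)$, which equals $\pid{q} : X(\pids{p}).C'$ by the hypothesis for the strictly smaller $C'$. All other instructions are returned unchanged.

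\textbf{Associativity.} We prove $(C_1 \fatsemi C_2) \fatsemi C_3 = C_1 \fatsemi (C_2 \fatsemi C_3)$ together with $(I \fatsemi C_2) \fatsemi C_3 = I \fatsemi (C_2 \fatsemi C_3)$, inducting on $C_1$ (resp. $I$) with $C_2, C_3$ universally quantified.
For $C_1 = \cnil$ both sides reduce to $C_2 \fatsemi C_3$.
For $C_1 = I \seq C'$, unfolding gives $((I \fatsemi C_2) \fatsemi C_3) \seq ((C' \fatsemi C_2) \fatsemi C_3)$ on the left and $(I \fatsemi (C_2 \fatsemi C_3)) \seq (C' \fatsemi (C_2 \fatsemi C_3))$ on the right, so the two induction hypotheses close the goal.
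For instructions, a runtime term yields $\pid{q} : X(\pids{p}).((C' \fatsemi C_2) \fatsemi C_3)$ against $\pid{q} : X(\pids{p}).(C' \fatsemi (C_2 \fatsemi C_3))$, which again follows by induction on $C'$.
For a non-runtime instruction $I$, both sides equal $I$. Here we must check that $I \fatsemi C_2 = I$ is still not a runtime term, so that the "otherwise" clause applies a second time. This holds because $\fatsemi$ preserves the head constructor of an instruction.

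\textbf{Main obstacle.} The mathematics is routine; the difficulty is in the mechanisation. The nested choreography inside a runtime term makes the mutual recursion nonstructural in a naive encoding, so Lean must accept both the definition and the mutual induction. I would first define $\fatsemi$ by mutual structural recursion on the mutual inductive types $\setof{Chor}$ and $\setof{Instr}$, then prove the lemmas with \code{mutual} theorem blocks (or the generated mutual recursor). The case split on whether $I$ is a runtime term, which mirrors the case-defined clause, is the step that needs the most care.
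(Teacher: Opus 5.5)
Your proposal is correct and takes the same approach as the paper, which proves the three laws by structural induction on the choreography. Because $\fatsemi$ is defined mutually on choreographies and instructions, that amounts to your mutual induction: left identity is just the first defining clause, and the only non-trivial instruction case is the runtime term $\pid{q} : X(\pids{p}).C'$, closed by the hypothesis for $C'$.
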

\begin{proof}
By structure induction on the choreography.
\end{proof}

\begin{proposition} [Parallel Composition of Networks Forms a Monoid]\label{app:proc:parcomp:prop}
\begin{align*}
    \nnil \parcomp N &\triangleq N \quad\quad N \parcomp \nnil \triangleq N &\text{(Identity element)}\\
    N_1 \parcomp (N_2 \parcomp N_3) &\triangleq (N_1 \parcomp N_2) \parcomp N_3 &\text{(Associativity)}
\end{align*}
\end{proposition}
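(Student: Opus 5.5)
Since networks are compared by extensional equality, I will prove each equation pointwise: for an arbitrary process $\pid{p}$, compute both sides via the definitions in \cref{proc:network:nil:atomic:fig}. The only ingredients are the definition of $\nnil$ (constantly $\pnil$), the definition of $N \parcomp M$ by case split on $\pid{p} \in \support(N)$, and the fact that $\support(N) = \{\pid{p} \mid N(\pid{p}) \neq \pnil\}$ because $\pnil$ is the zero of $\setof{Process}$.

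For the left identity, $\support(\nnil) = \emptyset$, so $(\nnil \parcomp N)(\pid{p})$ always falls into the ``otherwise'' branch and equals $N(\pid{p})$. For the right identity, split on whether $\pid{p} \in \support(N)$. If it is, the result is $N(\pid{p})$ directly. If not, the result is $\nnil(\pid{p}) = \pnil$, and $N(\pid{p}) = \pnil$ by the definition of support, so both sides agree.

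For associativity, first establish the auxiliary lemma
\[ \support(N_1 \parcomp N_2) = \support(N_1) \cup \support(N_2). \]
This holds pointwise: if $\pid{p} \in \support(N_1)$ the value is $N_1(\pid{p}) \neq \pnil$; otherwise the value is $N_2(\pid{p})$, which is non-$\pnil$ exactly when $\pid{p} \in \support(N_2)$. With this lemma, fix $\pid{p}$ and split into three exhaustive cases.
\begin{itemize}
\item If $\pid{p} \in \support(N_1)$, both sides evaluate to $N_1(\pid{p})$. On the right-hand side this uses $\pid{p} \in \support(N_1 \parcomp N_2)$, which follows from the lemma.
\item If $\pid{p} \notin \support(N_1)$ but $\pid{p} \in \support(N_2)$, both sides give $N_2(\pid{p})$.
\item If $\pid{p}$ lies in neither support, both sides reduce to $N_3(\pid{p})$.
\end{itemize}

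The only mild obstacle is the support lemma together with the decidability of membership in the support. In Lean, this membership test amounts to checking whether a value is $\pnil$, and requires \code{DecidableEq} on process terms. Once the lemma and decidability are in place, every case closes by rewriting with the definitions, which is in effect a structural case analysis on the defining conditions rather than an induction.
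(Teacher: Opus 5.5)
Your proof is correct and is essentially the paper's own argument. The paper proves this ``by unfolding the definition of parallel composition and function extensionality,'' and your pointwise case analysis on support membership is exactly that unfolding written out, including the fact $\support(N_1 \parcomp N_2) = \support(N_1) \cup \support(N_2)$ that you use for associativity.
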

\begin{proof}
By unfolding the definition of parallel composition and function extensionality.
\end{proof}

\begin{proposition} [Sequential Composition of Process Terms Forms a Monoid]\label{app:proc:seqcomp:prop}
\begin{align*}
    \pnil \fatsemi P &\triangleq P \quad\quad P \fatsemi \pnil \triangleq P &\text{(Identity element)}\\
    (P_1 \fatsemi P_2) \fatsemi P_3 &\triangleq P_1 \fatsemi (P_2 \fatsemi P_3) &\text{(Associativity)}
\end{align*}
\end{proposition}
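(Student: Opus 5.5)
The statement to prove is that process terms with $\fatsemi$ form a monoid. I would prove its three laws separately, each by structural induction on the process term. Because process terms $P$ and instructions $I$ are mutually defined, the induction is formally mutual. However, the definition of $\fatsemi$ in \cref{proc:semantics:seqcomp} never descends into instructions: $(I\seq P')\fatsemi P \triangleq I\seq(P'\fatsemi P)$ leaves $I$ untouched. So I only need induction hypotheses for the continuation $P'$ and for the branches $P_i$ of a branching term, and no statement about instructions at all.

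The left identity $\pnil\fatsemi P = P$ is immediate from the first defining clause. For the right identity $P\fatsemi\pnil = P$ I would argue by cases on $P$.
\begin{itemize}
\item For $P=\pnil$ the first clause applies.
\item For $P = I\seq P'$, unfolding gives $I\seq(P'\fatsemi\pnil)$, which equals $I\seq P'$ by the induction hypothesis on $P'$.
\item For $P=\precvl p$, unfolding gives $\pid{p}\,\&\,\{\albl_i : P_i\fatsemi\pnil\}_{i\in\idx{I}}$, and the induction hypothesis on each $P_i$ closes the case.
\end{itemize}

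Associativity $(P_1\fatsemi P_2)\fatsemi P_3 = P_1\fatsemi(P_2\fatsemi P_3)$ I would prove by induction on $P_1$, with $P_2$ and $P_3$ universally quantified.
\begin{itemize}
\item For $P_1=\pnil$, both sides reduce to $P_2\fatsemi P_3$.
\item For $P_1=I\seq P'$, the left side unfolds to $I\seq((P'\fatsemi P_2)\fatsemi P_3)$ and the right side to $I\seq(P'\fatsemi(P_2\fatsemi P_3))$, so the induction hypothesis on $P'$ suffices.
\item For a branching term, both sides become branching terms on the same process, with the same label set and continuations $(P_i\fatsemi P_2)\fatsemi P_3$ versus $P_i\fatsemi(P_2\fatsemi P_3)$; these agree pointwise by the induction hypothesis on each $P_i$.
\end{itemize}

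The main obstacle is the branching case in the mechanisation, where branches are stored as sorted lists of label--term pairs. Nested induction over such a list is not automatically accepted by Lean's structural recursion. I would handle it with an auxiliary lemma: mapping $\cdot\fatsemi P$ over the list of pairs preserves the labels, and hence sortedness. The branch equality is then proved by induction on the list, using membership-based induction hypotheses (well-founded on term size). The extensionality step for the finite map follows from equality of the underlying lists.
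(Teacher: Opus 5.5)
Your proposal is correct and follows the paper's proof, which is a structural induction on the process term. Your observation that $\fatsemi$ never descends into instructions, so the instruction half of the mutual induction is trivial, and your handling of branching terms stored as sorted lists are details the paper leaves implicit.
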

\begin{proof}
By structure induction on the process term.
\end{proof}

\begin{proposition} [Sequential Composition of Networks Forms a Monoid]\label{app:net:seqcomp:prop}
\begin{align*}
    \nnil \fatsemi N &\triangleq N \quad\quad N \fatsemi \nnil \triangleq N &\text{(Identity element)}\\
    (N_1 \fatsemi N_2) \fatsemi N_3 &\triangleq N_1 \fatsemi (N_2 \fatsemi N_3) &\text{(Associativity)}
\end{align*}
\end{proposition}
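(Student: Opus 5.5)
The plan is to reduce everything to the process-term monoid of Proposition~\ref{app:proc:seqcomp:prop}. By Definition~\ref{proc:network:def}, networks are functions $\setof{PName} \to \setof{Process}$ and equality of networks is extensional. By Figure~\ref{proc:semantics:seqcomp}, network sequential composition is defined pointwise: $(N \fatsemi M)(\pid{p}) = N(\pid{p}) \fatsemi M(\pid{p})$. So for each of the three laws, I will fix an arbitrary process name $\pid{p}$, unfold the pointwise definition, and discharge the resulting equation on process terms with the corresponding law for processes.

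For the identity laws, recall that the terminated network satisfies $\nnil(\pid{p}) = \pnil$ (Figure~\ref{proc:network:nil:atomic:fig}). Then $(\nnil \fatsemi N)(\pid{p}) = \pnil \fatsemi N(\pid{p}) = N(\pid{p})$. The first step is by definition of $\fatsemi$ on networks, and the second is the left identity law of Proposition~\ref{app:proc:seqcomp:prop}, which also holds definitionally since $\pnil \fatsemi P \triangleq P$. Symmetrically, $(N \fatsemi \nnil)(\pid{p}) = N(\pid{p}) \fatsemi \pnil = N(\pid{p})$ by the right identity law for process terms. Function extensionality then gives $\nnil \fatsemi N = N$ and $N \fatsemi \nnil = N$.

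For associativity, unfolding twice gives
\[((N_1 \fatsemi N_2) \fatsemi N_3)(\pid{p}) = (N_1(\pid{p}) \fatsemi N_2(\pid{p})) \fatsemi N_3(\pid{p}).\]
Associativity of process-term composition (Proposition~\ref{app:proc:seqcomp:prop}) rewrites this to $N_1(\pid{p}) \fatsemi (N_2(\pid{p}) \fatsemi N_3(\pid{p}))$. Folding the definition back gives $(N_1 \fatsemi (N_2 \fatsemi N_3))(\pid{p})$, and extensionality closes the goal.

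There is no real obstacle, since all of the structural induction is already done in the process-level proposition. The only point that could need care in Lean is finiteness. If networks were represented as finite functions carrying a support proof, I would first check that the pointwise composite still has finite support. That holds because $P \fatsemi Q = \pnil$ forces $P = Q = \pnil$: composition never produces $\pnil$ from a non-$\pnil$ argument, so the support of $N \fatsemi M$ is contained in $\support(N) \cup \support(M)$. Equality would then be reduced to equality of the underlying functions. Since Definition~\ref{proc:network:def} uses plain functions, this step is not needed here.
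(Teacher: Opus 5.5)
Your proposal is correct and matches the paper's proof, which is just ``by pointwise lifting from Proposition~\ref{app:proc:seqcomp:prop}'': unfold $(N \fatsemi M)(\pid{p}) = N(\pid{p}) \fatsemi M(\pid{p})$ and apply each process-level monoid law at every $\pid{p}$, then use extensionality. Your side remark on finite support is also correct: $P \fatsemi Q = \pnil$ holds iff $P = Q = \pnil$, so the support of $N \fatsemi M$ is exactly $\support(N) \cup \support(M)$. It is not needed for the statement as given.
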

\begin{proof}
By pointwise lifting from proposition~\ref{app:proc:seqcomp:prop}.
\end{proof}

\begin{proposition} [Join-Semilattice Properties of $(\opt{Process}, \sqcup)$]\label{app:merge:semilattice:prop}
\begin{align*}
    \opterm{P} \sqcup \opterm{P} &\triangleq \opterm{P} \quad&\text{(Idempotency)}\\
    \opterm{P} \sqcup \opterm{Q} &\triangleq \opterm{Q} \sqcup \opterm{P} \quad&\text{(Commutativity)}\\
    (\opterm{P} \sqcup \opterm{Q}) \sqcup \opterm{R} &\triangleq \opterm{P} \sqcup (\opterm{Q} \sqcup \opterm{R}) \quad&\text{(Associativity)}
\end{align*}
\end{proposition}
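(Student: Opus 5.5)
We prove the three laws by mutual structural induction on process terms and process instructions, following the mutual definition of $\sqcup$ in \cref{merge:proc:fig,merge:instr:fig}. First we dispose of the $\none$ cases. By the catch-all clauses, $\none \sqcup \opterm{Q} = \none = \opterm{Q} \sqcup \none$. This settles idempotency for $\none$, commutativity whenever an argument is $\none$, and associativity whenever any of the three arguments is $\none$, since both sides then reduce to $\none$. It remains to treat arguments of the form $\some{P}$, and we proceed by induction on $P$ (simultaneously on instructions $I$).

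\emph{Idempotency.} $\some{\pnil}\sqcup\some{\pnil}=\some{\pnil}$ holds by definition. For $I \seq P$, the induction hypotheses give $\some{I}\sqcup\some{I}=\some{I}$ and $\some{P}\sqcup\some{P}=\some{P}$. For a branching term, $\idx{I}\cap\idx{I}=\idx{I}$, the two difference sets are empty, and every $\some{P_k}\sqcup\some{P_k}=\some{P_k}$ by the hypothesis. For instructions, conditionals and choices use the hypotheses on their branches, and every other instruction is covered by the clause $\some{I}\sqcup\some{I}=\some{I}$.

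\emph{Commutativity.} We case on the pair of head constructors. If the heads differ, both orders reach the catch-all clause and give $\none$. If they match, we apply the induction hypotheses componentwise. Equality conditions such as $e_1=e_2$ and $\pid{p_1}=\pid{p_2}$ are symmetric. For branching terms, swapping the arguments swaps $\idx{I}\setminus\idx{J}$ with $\idx{J}\setminus\idx{I}$, and the union of label maps does not depend on order because labels are distinct.

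\emph{Associativity.} We induct on $P$ and case on the heads of $Q$ and $R$. If the three heads do not all agree, we show both sides are $\none$. Intuitively, any successful merge preserves the head constructor, the guard expression and the branching process name, so a mismatch propagates; a small lemma stating this is useful here. For $I\seq P$ and for conditionals and choices, the result follows componentwise from the induction hypotheses. One subtlety is that a componentwise $\none$ must make the whole merge $\none$; we handle this by rewriting both sides as a ``bind'' over the component results.

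\emph{Main obstacle: branching terms.} In the Lean development these are sorted association lists merged by a merge-sort-style procedure. Associativity then needs a list-level lemma: if merging sorted label lists uses a combining function $f$ that is associative (including on $\none$), then the list merge is associative. We prove this by induction on the total length of the three lists, comparing head labels in each case (strictly less, equal, greater). The $\none$-propagation, where one shared label's continuations fail to merge, must be carried through every case. The inner associativity of $f$ on continuations comes from the outer induction hypothesis, so the list lemma must be stated for sublists of a fixed term. Finally, sortedness of the result must be maintained as an invariant, so that equality of the resulting lists coincides with equality of the finite maps.
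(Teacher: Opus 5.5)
Your proposal is correct and takes essentially the same route as the paper. The paper proves the three laws by structural induction on process terms, done mutually with instructions as it notes for all merge properties. Your handling of the absorbing none cases, the head-preservation lemma for associativity, and the sorted-list merge lemma whose associativity hypothesis is restricted to the continuations of the first branching term fill in details the paper leaves implicit.
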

\begin{proof}
By structure induction on the process term.
\end{proof}

\begin{proposition} [The Relation $\mbrs$ is a Partial Order]\label{app:merge:mbrs:prop}
\begin{align*}
    \opterm{P} \mbrs \opterm{P} &\quad \quad&\text{(Reflexivity)}\\
    \opterm{P} \mbrs \opterm{Q} \wedge \opterm{Q} \mbrs \opterm{P} &\Rightarrow  \opterm{P} = \opterm{Q} \quad&\text{(Antisymmetry)}\\
    \opterm{P} \mbrs \opterm{Q} \wedge \opterm{Q} \mbrs \opterm{R} &\Rightarrow  \opterm{P} \mbrs \opterm{R} \quad&\text{(Transitivity)}
\end{align*}
\end{proposition}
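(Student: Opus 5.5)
The three properties follow from the join-semilattice structure of $(\opt{Process}, \sqcup)$ in proposition~\ref{app:merge:semilattice:prop}, together with definition~\ref{merge:mbrs:def}, which says $\opterm{P} \mbrs \opterm{Q}$ iff $\opterm{P} \sqcup \opterm{Q} = \opterm{P}$. No induction on process terms is needed; all the structural work is already contained in idempotency, commutativity and associativity of merge. In the mechanisation, once $(\opt{Process}, \sqcup)$ is registered as a \code{SemilattceSup}-style instance, the induced order is the standard partial order of a join-semilattice and Mathlib supplies all three facts. I still spell out the algebraic argument so that it does not depend on the library.

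\emph{Reflexivity.} I must show $\opterm{P} \sqcup \opterm{P} = \opterm{P}$, which is exactly idempotency.

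\emph{Antisymmetry.} Assume $\opterm{P} \sqcup \opterm{Q} = \opterm{P}$ and $\opterm{Q} \sqcup \opterm{P} = \opterm{Q}$. Commutativity gives $\opterm{P} = \opterm{P} \sqcup \opterm{Q} = \opterm{Q} \sqcup \opterm{P} = \opterm{Q}$.

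\emph{Transitivity.} Assume $\opterm{P} \sqcup \opterm{Q} = \opterm{P}$ and $\opterm{Q} \sqcup \opterm{R} = \opterm{Q}$. Then
\[
\opterm{P} \sqcup \opterm{R} = (\opterm{P} \sqcup \opterm{Q}) \sqcup \opterm{R} = \opterm{P} \sqcup (\opterm{Q} \sqcup \opterm{R}) = \opterm{P} \sqcup \opterm{Q} = \opterm{P},
\]
using associativity in the middle step.

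The one subtle point is the convention in definition~\ref{merge:mbrs:def}. There the \emph{larger} element, the one with more branches, sits on the left, and $\none$ acts as the top element, since $\none \sqcup \opterm{Q} = \none$. Mathlib's order derived from $\sqcup$ is $a \le b \iff a \sqcup b = b$, so $\mbrs$ is its dual ($\ge$) rather than $\le$ itself. If I reuse Mathlib, I will therefore prove $\opterm{P} \mbrs \opterm{Q} \iff \opterm{Q} \le \opterm{P}$ and transport reflexivity, antisymmetry and transitivity through this equivalence. Otherwise the three direct equational proofs above suffice.
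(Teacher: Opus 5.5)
Your proposal is correct and follows the paper's argument: the paper proves the three properties directly from the definition of $\mbrs$ and the join-semilattice laws of merge (idempotency, commutativity, associativity), and in Lean it gets them from Mathlib's \texttt{SemilatticeSup} instance. Your observation that $\mbrs$ is the dual ($\ge$) of Mathlib's induced order $a \le b \iff a \sqcup b = b$ is also correct, and it is the orientation you would need to keep track of when reusing the library.
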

\begin{proof}
    By the definition of the relation $\mbrs$ and join-semilattice properties of $(\opt{Process}, \sqcup)$.
\end{proof}

\section{A Property of the Semantics of Choreographies}
\label{app:chor:lts:props}
If a choreography can perform a transition with label $\mu$ to another choreography, then choreography must contains all the processes mentioned in the transition label $\mu$, which is shown in theorem~\ref{chor:semantics:involved-processes}.
\begin{theorem}
\label{chor:semantics:involved-processes}
For any choreographic configuration $\langle C, \cstore, \mathscr{C} \rangle$ and transition label $\mu$, if $\langle C, \cstore, \mathscr{C} \rangle \xrightarrow[]{\mu} \langle C', \cstore', \mathscr{C} \rangle$, then \emph{$\pn(\mu) \subseteq \pn(C)$}.
\end{theorem}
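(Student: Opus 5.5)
We prove the statement by structural induction on the derivation of $\langle C, \cstore, \mathscr{C} \rangle \xrightarrow[]{\mu} \langle C', \cstore', \mathscr{C} \rangle$, with one case per rule in \cref{chor:semantics:lts:fig}. Two auxiliary facts about labels are needed first. The first is $\pn(\mu.\mathit{addChoice}\;\pid{p}\;\mathscr{l}) \subseteq \pn(\mu) \cup \{\pid{p}\}$, which holds because adding a choice label only changes the choice map at $\pid{p}$. The second is the trivial bound $\pn(\rho,\emptyset) = \pn(\rho)$ for labels with an empty choice map. Both follow by unfolding \cref{chor:semantics:pntl} and the finite-function API for $\zeta$: updating at $\pid{p}$ enlarges the support by at most $\pid{p}$, and here the new value is nonempty anyway.

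\textbf{Base cases.} These are \textsc{assign}, \textsc{com}, \textsc{sel}, \textsc{cond-then}, \textsc{cond-else}, \textsc{skip-l}/\textsc{skip-r}, \textsc{call-first} and \textsc{call-enter}. In each, the label's process names are those of the leading instruction:
\begin{itemize}
\item $\{\pid{p}\}$ for assignment, conditionals and skips;
\item $\{\pid{p},\pid{q}\}$ for communication and selection;
\item $\{\pid{p}_i\}$ with $\pid{p}_i \in \{\pids{p}\}$ for \textsc{call-first};
\item $\{\pid{q}\}$ for \textsc{call-enter}.
\end{itemize}
By \cref{chor:syntax:pn} this set is contained in $\pn(I)\subseteq \pn(I\seq C)$. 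Each case is therefore immediate.

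\textbf{Inductive cases.} For \textsc{choice-l}, the IH gives $\pn(\mu)\subseteq \pn(C_1)$. The first auxiliary fact then yields $\pn(\mu.\mathit{addChoice}\;\pid{p}\;\mathit{left}) \subseteq \pn(C_1)\cup\{\pid{p}\} \subseteq \pn(C_1\choice_\pid{p} C_2)$. \textsc{choice-r} is symmetric. For \textsc{delay}, the IH gives $\pn(\mu)\subseteq\pn(C)\subseteq\pn(I\seq C)$. For \textsc{delay-cond} and \textsc{delay-choice}, the IH applied to either premise, say the $C_1$ one, gives $\pn(\mu)\subseteq\pn(C_1)$, which lies inside the $\pn$ of the conditional or choice instruction. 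The side conditions $\pn(I)\disj\pn(\mu)$ and $\pid{p}\notin\pn(\mu)$ are not needed.

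\textbf{Main obstacle.} The only mildly delicate point is the choice-map bookkeeping in \textsc{choice-l}/\textsc{choice-r}. One must show that $\textbf{dom}(\zeta[\pid{p}\mapsto \mathscr{l}\cons\zeta(\pid{p})]) \subseteq \textbf{dom}(\zeta)\cup\{\pid{p}\}$ for \code{FinFun}. I would prove this as a general support-of-update lemma for finite functions and use it there. No dependence on well-formedness, stores or $\mathscr{C}$ arises, so the statement holds for arbitrary configurations as claimed.
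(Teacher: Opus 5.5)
Your proof is correct and takes essentially the same approach as the paper, which establishes the result by structural induction on the derivation of the transition. Your case analysis is exactly what that induction requires: the base cases read off from the definition of $\pn$, the induction hypothesis for the delay rules, and the support-of-update fact for the choice map in \textsc{choice-l}/\textsc{choice-r}.
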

\begin{proof}
    By structural induction on the derivation of the transition.
\end{proof}

\section{Properties of the Semantics of Networks}
\label{app:net:lts:props}
First of all, similar to the property of the semantics of choreographies described in theorem~\ref{chor:semantics:involved-processes}, we have the property that the processes involved in a network transition are exactly those that are mentioned in the transition label, shown in theorem~\ref{net:semantics:involved-processes}.
\begin{theorem}\label{net:semantics:involved-processes}
    For any network configuration $\langle N, \cstore, \mathscr{P} \rangle$ and transition label $\mu$, if $\langle N, \cstore, \mathscr{P} \rangle \xrightarrow[]{\mu} \langle N', \cstore', \mathscr{P} \rangle$, then \emph{$\pn(\mu)\subseteq \support(N)$}.
\end{theorem}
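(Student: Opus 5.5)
We argue by structural induction on the derivation of $\langle N, \cstore, \mathscr{P} \rangle \xrightarrow[]{\mu} \langle N', \cstore', \mathscr{P} \rangle$, using the rules of \cref{proc:net:semantics:fig}. Throughout we need two facts. First, the atomic network $\pid{p}[P]$ has support $\{\pid{p}\}$ whenever $P \neq \pnil$. Second, $\support(N \parcomp M) = \support(N) \cup \support(M)$. Both follow by unfolding the definitions in \cref{proc:network:nil:atomic:fig}.

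\textbf{Axiom cases.} For \textsc{assign}, \textsc{cond-then}, \textsc{cond-else}, \textsc{choice-skip} and \textsc{call}, the label has empty choice map and $\pn(\mu) = \{\pid{p}\}$. The source network is $\pid{p}[I \seq P]$ with a non-$\pnil$ term, so $\pid{p}$ lies in its support. For \textsc{com} and \textsc{sel}, $\pn(\mu) = \{\pid{p},\pid{q}\}$. The source is $\pid{p}[I \seq P] \parcomp \pid{q}[Q']$, where $Q'$ is either $I' \seq Q$ or a branching term, hence non-$\pnil$. Both names are therefore in the support by the union fact. In each of these cases $\textbf{dom}(\zeta) = \emptyset$, so nothing further is needed.

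\textbf{Inductive cases.} For \textsc{par}, the induction hypothesis gives $\pn(\mu) \subseteq \support(N) \subseteq \support(N \parcomp M)$.

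For \textsc{choice-l}, we must bound $\pn(\mu.\mathit{addChoice}\;\pid{p}\;\mathit{left})$. This equals $\pn(\rho) \cup \textbf{dom}(\zeta[\pid{p} \mapsto \mathit{left} \cons \zeta(\pid{p})])$, which is $\pn(\mu) \cup \{\pid{p}\}$, since the updated list is non-empty. The side condition $\support(N) \cup \{\pid{p}\} = \pn(\mu)$ gives the bound directly, without even using the induction hypothesis. The target support $\support(\pid{p}[P \choice Q \seq R] \parcomp N) = \{\pid{p}\} \cup \support(N)$ contains $\pn(\mu) \cup \{\pid{p}\}$. The case \textsc{choice-r} is symmetric.

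\textbf{Main obstacle.} I expect the difficulty to be bookkeeping rather than mathematics. We need a lemma computing $\textbf{dom}$ of the choice map after $\mathit{addChoice}$ in the \code{FinFun} API: the domain gains exactly $\pid{p}$, because a cons is never the zero element. We also need the support lemma for left-biased parallel composition under function extensionality. Once these two lemmas are in place, each case closes by set reasoning.
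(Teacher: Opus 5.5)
Your proposal is correct and takes the same route as the paper, which proves the statement by structural induction on the derivation of the network transition. Your cases fill in the details the paper omits: the support facts for atomic networks and left-biased $\parcomp$ handle the axioms, the induction hypothesis handles \textsc{par}, and for \textsc{choice-l}/\textsc{choice-r} the side condition $\support(N) \cup \{\pid{p}\} = \pn(\mu)$ with $\pid{p} \in \pn(\mu)$ suffices without the induction hypothesis, since $\mathit{addChoice}$ only adds $\pid{p}$ to the label's process names.
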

\begin{proof}
    By structural induction on the derivation of the transition.
\end{proof}
Based on this property, we establish two further properties of network transitions involving network restriction and process removal.
For network restriction, if a network can perform a transition, the its subnetwork containing only processed mentioned in the transition label can perform the same transition, shown in theorem~\ref{net:semantics:restriction}.
\begin{theorem}\label{net:semantics:restriction}
    For any network configuration $\langle N, \cstore, \mathscr{P} \rangle$ and transition label $\mu$, if $\langle N, \cstore, \mathscr{P} \rangle \xrightarrow[]{\mu} \langle N', \cstore', \mathscr{P} \rangle$, then \emph{$\langle N\restrict \pn(\mu), \cstore, \mathscr{P} \rangle \xrightarrow[]{\mu} \langle N'\restrict \pn(\mu), \cstore', \mathscr{P} \rangle$}.
\end{theorem}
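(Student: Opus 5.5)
We proceed by structural induction on the derivation of $\langle N, \cstore, \mathscr{P} \rangle \xrightarrow[]{\mu} \langle N', \cstore', \mathscr{P} \rangle$. Throughout we write $S = \pn(\mu)$ and use \cref{net:semantics:involved-processes}, which gives $S \subseteq \support(N)$. A basic fact we use repeatedly is that restriction is pointwise: $(N \restrict S)(\pid{p}) = N(\pid{p})$ for $\pid{p} \in S$ and $\pnil$ otherwise. It therefore commutes with the atomic-network and parallel-composition constructors whenever the support is respected. For instance, $(\pid{p}[P] \parcomp \pid{q}[Q]) \restrict S = \pid{p}[P] \parcomp \pid{q}[Q]$ when $\pid{p},\pid{q} \in S$. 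This follows by function extensionality.

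\emph{Axiom cases.} These are \textsc{assign}, \textsc{com}, \textsc{sel}, \textsc{cond-then}, \textsc{cond-else}, \textsc{choice-skip} and \textsc{call}. In each, $N$ and $N'$ are atomic networks or a parallel composition of two atomic networks on exactly the processes named in $\mu$. Hence $N \restrict S = N$ and $N' \restrict S = N'$ by the pointwise fact, and the same rule instance gives the required transition. One subtlety is that $N'$ may map a process to $\pnil$, e.g.\ $\pid{p}[\pnil]$. Since restriction and atomic networks both agree pointwise, the equality still holds.

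\emph{Case \textsc{par}.} Here $N = N_1 \parcomp M$, $N' = N_1' \parcomp M$, $N_1 \disjoint M$, and $N_1 \xrightarrow{\mu} N_1'$. The induction hypothesis gives $N_1 \restrict S \xrightarrow{\mu} N_1' \restrict S$. By \cref{net:semantics:involved-processes} applied to $N_1$, we have $S \subseteq \support(N_1)$. Since $N_1 \disjoint M$, the set $S$ avoids $\support(M)$, so the left-biased composition agrees with $N_1$ on $S$. We conclude $(N_1 \parcomp M)\restrict S = N_1 \restrict S$, and similarly for the target. Showing $(N_1' \parcomp M)\restrict S = N_1' \restrict S$ needs more care: if $N_1'(\pid{p}) = \pnil$ for some $\pid{p}\in S$, then $\pid{p} \notin \support(M)$ still gives $M(\pid{p}) = \pnil$. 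So we argue pointwise via $S \cap \support(M) = \emptyset$ rather than through $\support(N_1')$.

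\emph{Cases \textsc{choice-l} and \textsc{choice-r}.} This is where I expect the main obstacle. The premise is $\pid{p}[P] \parcomp N_0 \xrightarrow{\mu_0} \pid{p}[P'] \parcomp N_0'$ with $\support(N_0) \cup \{\pid{p}\} = \pn(\mu_0)$, and the conclusion label is $\mu = \mu_0.\mathit{addChoice}\;\pid{p}\;\mathit{left}$. The first step is to show $\pn(\mu) = \pn(\mu_0)$. This holds because $\mathit{addChoice}$ only adds $\pid{p}$ to the domain of $\zeta$, and $\pid{p} \in \pn(\mu_0)$ already. Consequently $S = \support(N_0) \cup \{\pid{p}\}$, so the source restricts to itself: $(\pid{p}[P \choice Q \seq R] \parcomp N_0) \restrict S = \pid{p}[P \choice Q \seq R] \parcomp N_0$.

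For the target we must show $(\pid{p}[P' \fatsemi R] \parcomp N_0')\restrict S = \pid{p}[P' \fatsemi R] \parcomp N_0'$, which requires $\support(N_0') \subseteq S$. I would try the induction hypothesis on the premise first. It yields $(\pid{p}[P']\parcomp N_0')\restrict S$ as the target of a $\mu_0$-step from the unchanged source. Next, I would prove a small auxiliary lemma: network transitions never enlarge the support, i.e.\ $\support(N') \subseteq \support(N)$, by a routine induction on the derivation. Combined with $\pid{p} \notin \support(N_0')$, this gives $\support(N_0') \subseteq \support(N_0) \subseteq S$, so the target is also fixed by restriction. The original rule instance then serves directly as the witness.
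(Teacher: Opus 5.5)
Your proof is correct and follows the paper's route: structural induction on the derivation of the network transition. Theorem~\ref{net:semantics:involved-processes} does the work in the \textsc{par} case, and restriction is the identity on the axiom instances and on both source and target in the \textsc{choice-l}/\textsc{choice-r} cases. The support-monotonicity lemma you add for the choice cases is sound and routine, but not needed: the induction hypothesis you already invoked, plus re-applying \textsc{choice-l} with $N_0'$ replaced by its restriction to the label's processes, gives the same conclusion.
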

\begin{proof}
    By structural induction on the derivation of the transition.
\end{proof}
Similarly, for process removal, if a network can perform a transition, then the resulting network after removing processes that are not mentioned in the transition label can also perform the same transition, shown in theorem~\ref{net:semantics:proc-removal}.
\begin{theorem}\label{net:semantics:proc-removal}
    For any network configuration $\langle N, \cstore, \mathscr{P} \rangle$ and transition label $\mu$, if $\langle N, \cstore, \mathscr{P} \rangle \xrightarrow[]{\mu} \langle N', \cstore', \mathscr{P} \rangle$, then for any finite set of processes  such that \emph{$\{\pids{p}\} \disjoint \pn(\mu)$}, $\langle N\setminus \{\pids{p}\} , \cstore, \mathscr{P} \rangle \xrightarrow[]{\mu} \langle N'\setminus \{\pids{p}\}, \cstore', \mathscr{P} \rangle$.
\end{theorem}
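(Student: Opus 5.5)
The proof goes by structural induction on the derivation of $\langle N, \cstore, \mathscr{P} \rangle \xrightarrow[]{\mu} \langle N', \cstore', \mathscr{P} \rangle$, generalising over $\{\pids{p}\}$. The basic algebraic fact used throughout is that removal commutes with the network constructors away from the removed names. If $\{\pids{p}\} \disjoint \support(A)$, then $(A \parcomp B)\setminus\{\pids{p}\} = A \parcomp (B\setminus\{\pids{p}\})$, and in particular $A\setminus\{\pids{p}\} = A$. Both equalities follow by function extensionality, unfolding \cref{proc:network:nil:atomic:fig,proc:network:removal-restriction:def}. Removal also preserves disjointness: if $A \disjoint B$ then $A \disjoint (B\setminus\{\pids{p}\})$, since $\support(B\setminus\{\pids{p}\}) \subseteq \support(B)$.

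For the axiom-like rules (\textsc{assign}, \textsc{com}, \textsc{sel}, \textsc{cond-then}, \textsc{cond-else}, \textsc{choice-skip}, \textsc{call}), the source network is a composition of atomic networks $\pid{r}[\cdot]$ whose names all lie in $\pn(\mu)$. The target uses the same names. Since $\{\pids{p}\} \disjoint \pn(\mu)$, removal leaves both networks unchanged, and the same rule instance applies.

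For \textsc{choice-l} and \textsc{choice-r}, the side condition $\support(N)\cup\{\pid{p}\} = \pn(\mu)$ gives $\support(N) \subseteq \pn(\mu)$. Here $\mu$ is the premise label, and $\pn$ of the conclusion label $\mu.\mathit{addChoice}\;\pid{p}\;\mathscr{l}$ is $\pn(\mu)\cup\{\pid{p}\} = \pn(\mu)$. So the whole source network $\pid{p}[P\choice Q\seq R]\parcomp N$ is again fixed by removal. For the target we need $\support(N') \subseteq \pn(\mu)$. This follows from \cref{net:semantics:involved-processes} applied to the premise together with the fact that transitions do not enlarge support beyond the participating processes; I expect to prove the latter as a small auxiliary lemma by the same induction, namely that $\support(N') \subseteq \support(N)$ for every transition. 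Alternatively, we can invoke the induction hypothesis on the premise with the same $\{\pids{p}\}$ and then rewrite.

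The main case is \textsc{par}: $N = A \parcomp M$ with $A \disjoint M$, $A \xrightarrow{\mu} A'$, and $N' = A' \parcomp M$. By \cref{net:semantics:involved-processes}, $\pn(\mu) \subseteq \support(A)$, but $\{\pids{p}\}$ may still meet $\support(A)$. In that case I split removal pointwise: $(A \parcomp M)\setminus\{\pids{p}\} = (A\setminus\{\pids{p}\}) \parcomp (M\setminus\{\pids{p}\})$. This holds because $\parcomp$ is left-biased and $\support(A\setminus\{\pids{p}\}) = \support(A)\setminus\{\pids{p}\}$, so the two sides agree on every name. The induction hypothesis gives $A\setminus\{\pids{p}\} \xrightarrow{\mu} A'\setminus\{\pids{p}\}$. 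Disjointness is preserved, so \textsc{par} yields the transition to $(A'\setminus\{\pids{p}\})\parcomp(M\setminus\{\pids{p}\})$. Applying the same splitting identity once more rewrites this as $(A'\parcomp M)\setminus\{\pids{p}\}$; that identity holds unconditionally, so it does not depend on $A' \disjoint M$.

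I expect the obstacle to be bookkeeping rather than ideas. Rule shapes such as $\pid{p}[P]\parcomp\pid{q}[Q]$ must match syntactically up to extensional equality, so the proof has to rewrite networks into the exact forms the rules require. The distributivity lemma for removal over $\parcomp$ is the reusable core of that rewriting.
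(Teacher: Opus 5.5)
Your proof is correct and takes the same approach as the paper, which proves this by structural induction on the derivation of the network transition. Your case analysis supplies the details the paper leaves implicit: removal fixes both sides of the axiom rules and of \textsc{choice-l}/\textsc{choice-r}, either via the support-monotonicity lemma or by applying the induction hypothesis to the premise. For \textsc{par}, the unconditional distributivity of removal over parallel composition carries the argument.
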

\begin{proof}
    By structural induction on the derivation of the transition.
\end{proof}
We also have an important properties stating that the mergeability of networks is preserved by network transitions. Specifically, if two networks $N$ and $M$ can both perform a transition with label $\mu$ to $N'$ and $M'$, respectively, and $N$ and $M$ are mergeable on all processes in $\mu$, then $N'$ and $M'$ remain mergeable.
\begin{theorem}[Preservation of Mergeability]\label{net:semantics:transition-mergeable}
    For any networks $N$, $M$, $N'$, and $M'$ such that $N \restrict {\emph{\pn}(\mu)} \sqcup M \restrict {\emph{\pn}(\mu)} \eqr \some{\hat{W}}$,
    if $\langle N, \cstore, \mathscr{P} \rangle \xrightarrow[]{\mu} \langle N', \cstore', \mathscr{P} \rangle$ and $\langle M, \cstore, \mathscr{P} \rangle \xrightarrow[]{\mu} \langle M', \cstore', \mathscr{P} \rangle$,
    then there exists finite networks $\hat{W'}$ such that $N'\restrict {\emph{\pn}(\mu)} \sqcup M'\restrict {\emph{\pn}(\mu)} \eqr \some{\hat{W'}}$.
\end{theorem}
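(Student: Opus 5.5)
We prove the statement by induction on the derivation of $\langle N, \cstore, \mathscr{P} \rangle \xrightarrow[]{\mu} \langle N', \cstore', \mathscr{P} \rangle$, generalising over $M$ and $M'$. In each case we invert the second transition $\langle M, \cstore, \mathscr{P} \rangle \xrightarrow[]{\mu} \langle M', \cstore', \mathscr{P} \rangle$ on the shape of $\mu$.

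Two preliminary reductions come first. By \cref{net:semantics:restriction}, both transitions may be replaced by transitions of $N \restrict \pn(\mu)$ and $M \restrict \pn(\mu)$, so we may assume both networks have support $\pn(\mu)$. \Cref{net:semantics:involved-processes} gives $\pn(\mu) \subseteq \support(N)$ and $\pn(\mu) \subseteq \support(M)$. Since $\fname{mapBin}$ requires equal supports, it then suffices to show two things for each $\pid{r} \in \pn(\mu)$:
\begin{itemize}
\item that $\some{N'(\pid{r})} \sqcup \some{M'(\pid{r})}$ is a $\some{\cdot}$ term;
\item that the supports of $N' \restrict \pn(\mu)$ and $M' \restrict \pn(\mu)$ agree, i.e.\ a process becomes $\pnil$ in $N'$ iff it does in $M'$.
\end{itemize}
For the second point we use a small lemma: if $\some{P} \sqcup \some{Q} = \some{R}$, then $P = \pnil$ iff $Q = \pnil$, which follows directly from \cref{merge:proc:fig}.

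The base cases rely on one inversion fact. If $\some{I_1 \seq P_1} \sqcup \some{I_2 \seq P_2}$ is defined, then $I_1, I_2$ merge and $P_1, P_2$ merge. For basic instructions (assignment, send, receive, selection, call), merging forces $I_1 = I_2$. Hence the residuals $P, Q$ at $\pid{p}$ and $\pid{q}$ are mergeable in cases \textsc{assign}, \textsc{com} and \textsc{choice-skip}, since they are the continuations of a merged sequence.

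The remaining base cases are handled as follows.
\begin{itemize}
\item \textsc{sel}: the receiver's branching terms $\{\albl_i : P_i\}$ and $\{\albl_j : Q_j\}$ merge only if, for the common label $\albl_j$, $P_j \sqcup Q_j$ is defined. Both transitions use the same label, since it is recorded in $\mu$, so $\albl_j$ is common and the residuals merge.
\item \textsc{cond-then}/\textsc{cond-else}: merged conditionals share the guard $e$ and have pairwise mergeable branches and continuations. The branch taken depends on the evaluation result, which is fixed by the label ($\tlthen{p}$ vs.\ $\tlelse{p}$). We then need that $\fatsemi$ preserves mergeability: if $P_1 \sqcup Q_1$ and $P_2 \sqcup Q_2$ are defined, so is $(P_1 \fatsemi P_2) \sqcup (Q_1 \fatsemi Q_2)$. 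This follows from \cref{merge:seqcomp:mbrs}, since its left-hand side is a $\some{\cdot}$ term and $\mbrs$ relates a $\some{\cdot}$ term only to a $\some{\cdot}$ term. (If $\none \sqcup x$ is $\none$ and $\some{S} \sqcup \none = \none \ne \some{S}$, then $\some{S} \mbrs X$ forces $X \neq \none$.)
\item \textsc{call}: the same procedure is called with the same arguments, so the unfolded bodies coincide and only the continuations differ; these merge, and we apply the same $\fatsemi$ lemma.
\end{itemize}

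The inductive cases are \textsc{par}, \textsc{choice-l} and \textsc{choice-r}. Rule \textsc{par} is immediate: after restricting to $\pn(\mu)$, the untouched component disappears, because \cref{net:semantics:proc-removal} lets us discard processes outside $\pn(\mu)$. We then apply the induction hypothesis to the inner transition.

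The main obstacle is the choice rules, where $\pid{p}$ in $N$ and in $M$ may have been derived by different rules. The choice map in $\mu$ records, for $\pid{p}$, the head label $\mathit{left}$ or $\mathit{right}$, and the \textsc{choice-skip} rule produces an empty map. Comparing these pins down that both derivations resolve $\pid{p}$'s choice the same way. A non-choice rule cannot have produced this label, because merged instructions at $\pid{p}$ must both be choices $P_1 \choice P_2 \seq R$ and $Q_1 \choice Q_2 \seq R'$ with $P_k \sqcup Q_k$ and $R \sqcup R'$ defined. We then apply the induction hypothesis to the premise transitions $\pid{p}[P_1] \parcomp N_0$ and $\pid{p}[Q_1] \parcomp M_0$. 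Their label is $\mu$ with the head choice of $\pid{p}$ removed; this label has the same $\pn$, because $\pid{p} \in \pn(\mu)$ is a side condition. Its hypothesis holds because the side condition $\support(N_0) \cup \{\pid{p}\} = \pn(\mu)$ identifies $N_0$ and $M_0$ with the restrictions of $N$ and $M$, with $\pid{p}$ replaced by $P_1$ and $Q_1$ respectively. The induction hypothesis yields that $P_1' \sqcup Q_1'$ and $N_0', M_0'$ are mergeable, and the $\fatsemi$ lemma closes the case for $P_1' \fatsemi R$.
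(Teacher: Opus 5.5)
Your base cases, the lemma that a defined merge forces $P=\pnil$ iff $Q=\pnil$, the derivation of ``$\fatsemi$ preserves mergeability'' from Proposition~\ref{merge:seqcomp:mbrs}, and the \textsc{par} case are fine. The gap is in the \textsc{choice-l}/\textsc{choice-r} case.

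There you invert $M$'s derivation and assume that, below the \textsc{par} layers, it ends with a choice rule at the same process $\pid{p}$ that $N$'s derivation resolves last. That is what gives you a premise $\pid{p}[Q_1]\parcomp M_0 \xrightarrow{\mu_0} \pid{p}[Q_1']\parcomp M_0'$ with the same label $\mu_0$ as $N$'s premise. Nothing forces this. Choice rules at distinct processes can be stacked in either order and produce the same label, because $\mathit{addChoice}$ at distinct keys commutes. Example~\ref{ex:net-nested-choice} is exactly such a transition: it can be derived with \textsc{choice-l} at $\pid{p}$ outermost or with \textsc{choice-r} at $\pid{q}$ outermost. So $M$'s last rule may resolve some $\pid{q}\neq\pid{p}$, and its premise then carries $\mu$ with $\pid{q}$'s head removed, not $\mu_0$. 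The head of $\zeta(\pid{p})$ tells you which branch $\pid{p}$ takes in $M$, but not that $\pid{p}$ is resolved last. Your induction hypothesis, which concerns $N$'s premise at label $\mu_0$, then has no matching $M$-side transition to be applied to.

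To close the gap you need an extraction (commutation) lemma, proved by its own induction on derivations while re-checking the support side conditions of the choice rules. Suppose $\langle M,\cstore,\mathscr{P}\rangle\xrightarrow{\mu}\langle M',\cstore',\mathscr{P}\rangle$ with $\zeta(\pid{p}) = \mathit{left}{::}\ell s$. The lemma should give:
\begin{itemize}
\item $M(\pid{p}) = Q_1 \choice Q_2 \seq R'$;
\item a transition $\pid{p}[Q_1]\parcomp M_0 \xrightarrow{\mu_0} \pid{p}[Q_1']\parcomp M_0'$, where $M_0$ is $M\restrict\pn(\mu)$ with $\pid{p}$ removed and $\mu_0$ is $\mu$ with $\zeta(\pid{p})$ replaced by $\ell s$;
\item $M'(\pid{p}) = Q_1'\fatsemi R'$, with $M_0'$ agreeing with $M'$ on $\pn(\mu)\setminus\{\pid{p}\}$;
\item the symmetric statement for $\mathit{right}$.
\end{itemize}
With this lemma your induction on $N$'s derivation goes through. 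The paper instead inducts on the label $\mu$, a measure that does not depend on which process a particular derivation happens to resolve last.

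A smaller point: in the \textsc{call} case, ``the unfolded bodies coincide'' presumes $\mathscr{P}$ contains at most one definition per procedure name. Otherwise the two derivations could unfold different bodies.
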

\begin{proof}
    By induction on the transition label $\mu$.
\end{proof}

\section{Coherence}
\label{app:epp:coherence}
We present the formal definition of coherence between choreographies and their corresponding sets of choreographic procedures in figure~\ref{epp:coherence:fig}.
\begin{figure}
\begin{mathpar}
\inferrule*[right={\scriptsize coh-end}]{ }{\mathscr{C} \propto \cnil}

\inferrule*[right={\scriptsize coh-local}]{\mathscr{C} \propto C}{\mathscr{C} \propto \pid{p}.x \metaDef e \seq C}

\inferrule*[right={\scriptsize coh-com}]{\mathscr{C} \propto C}{\mathscr{C} \propto \gencom \seq C}

\inferrule*[right={\scriptsize coh-sel}]{\mathscr{C} \propto C}{\mathscr{C} \propto \gensel \seq C}

\inferrule*[right={\scriptsize coh-cond}]{\mathscr{C} \propto C_1 \\ \mathscr{C} \propto C_2\\ \mathscr{C} \propto C}{\mathscr{C} \propto \condif \pid{p}.e\condthen C_1\condelse C_2 \seq C}

\inferrule*[right={\scriptsize coh-choice}]{\mathscr{C} \propto C_1 \\ \mathscr{C} \propto C_2 \\ \mathscr{C} \propto C}{\mathscr{C} \propto  C_1 \choice_\pid{p} C_2 \seq C}

\inferrule*[right={\scriptsize coh-call}]{\mathscr{C} \propto C}{\mathscr{C} \propto X(\pids{p}) \seq C}

\inferrule*[right={\scriptsize coh-rt-call}]{\mathscr{C} \propto C \\ \mathscr{C} \propto C' \\ X(\pids{s}) = C'' \in \mathscr{C} \\ |\pids{p}| = |\pids{s}| \\ \proj{C''[\pids{p}/\pids{s}] \fatsemi C'}{q} \eqr\some{P} \\ \proj{C}{q} \eqr\some{Q} \\ \some{P} \mbrsw \some{Q}}
{\mathscr{C} \propto \pid{q} : X(\pids{p}).C' \seq C}
\end{mathpar}
\caption{Coherence of Choreographies}
\label{epp:coherence:fig}
\end{figure}

\section{Multi-step Transitions for Choreographies and Networks}
\label{app:multi}
The formal definitions of multi-step transitions for both choreographies and networks are shown in \cref{chor:semantics:multi-step} and \cref{net:semantics:multi-step}, respectively. 
\begin{figure}
\begin{subfigure}{1.0\textwidth}
\begin{mathpar}
    \inferrule*[right={\scriptsize refl}]{ }{\langle C, \cstore, \mathscr{C} \rangle \xtwoheadrightarrow{\,\epsilon\,} \langle C, \cstore, \mathscr{C} \rangle}

    \inferrule*[right={\scriptsize step-l}] {\langle C, \cstore, \mathscr{C} \rangle \xrightarrow[]{\mu} \langle C', \cstore', \mathscr{C} \rangle \\ \langle C', \cstore', \mathscr{C} \rangle\xtwoheadrightarrow{\,\tls\,}  \langle C'', \cstore'', \mathscr{C} \rangle }{\langle C, \cstore, \mathscr{C} \rangle \xtwoheadrightarrow{\,\mu \,\cons\, \tls\,} \langle C'', \cstore'', \mathscr{C} \rangle}
\end{mathpar}
\caption{Multi-Step Transitions for Choreographies}
\label{chor:semantics:multi-step}
\end{subfigure}
\begin{subfigure}{1.0\textwidth}
\begin{mathpar}
    \inferrule*[right={\scriptsize refl}]{ }{\langle N, \cstore, \mathscr{P} \rangle \xtwoheadrightarrow{\,\epsilon\,} \langle N, \cstore, \mathscr{P} \rangle}

    \inferrule*[right={\scriptsize step-l}] {\langle N, \cstore, \mathscr{P} \rangle \xrightarrow[]{\mu} \langle N', \cstore', \mathscr{P} \rangle \\ \langle N', \cstore', \mathscr{P} \rangle\xtwoheadrightarrow{\,\tls\,}  \langle N'', \cstore'', \mathscr{P} \rangle }{\langle N, \cstore, \mathscr{P} \rangle \xtwoheadrightarrow{\,\mu \,\cons\, \tls\,} \langle N'', \cstore'', \mathscr{P} \rangle}
\end{mathpar}
\caption{Multi-Step Transitions for Networks}
\label{net:semantics:multi-step}
\end{subfigure}
\caption{Multi-Step Transitions}
\label{app:semantics:multi-step}
\end{figure}
In both multi-step transitions, there are two rules. The rule \textsc{refl} is a reflexivity rule stating that any configuration can perform a multi-step transition with an empty transition label to itself. The rule \textsc{step-l} states that if a configuration can perform a single-step transition with a transition label $\mu$ to another configuration, and such a resulting configuration can perform a multi-step transition with a transition label $\tls$ to a third configuration, then the original configuration can perform a multi-step transition with a transition label obtained by concatenating $\mu$ and $\tls$ to the third configuration. In our Lean 4 mechanisation, we instantiate the multi-step transitions for choreographies and networks to the multi-step transition relation \code{MTr} in CSLib.

\section{An important property for proving communication safety}
As a consequence of EPP, all networks resulting from EPP are at communication safe states.
\begin{theorem}\label{epp:comm-safe-state}
    Given a choreography $C$ such that $\langle C, \pids{r}, \mathscr{C} \rangle \checkmark$ and $C \propto \mathscr{C}$, if $\epp{C} \eqr \some{N}$, the network $N$ is at a communication safe state.
\end{theorem}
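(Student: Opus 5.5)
We prove the statement by structural induction on the well-formedness derivation $\langle \mathscr{C}, \{\pids{r}\}, C\rangle\checkmark$, or equivalently by induction on $C$ with the instruction cases handled mutually. First we reduce to a pointwise statement. Fix distinct processes $\pid{p}$ and $\pid{q}$ with $N(\pid{q})\octopusop\pid{p}\neq\emptyset$. Since $N(\pid{r})$ is $\proj{C}{r}$ for $\pid{r}\in\pn(C)$ and $\pnil$ otherwise, we only need a relation between $\proj{C}{p}\octopusop\pid{q}$ and $\proj{C}{q}\octopusop\pid{p}$. The invariant we prove is: if $\proj{C}{p}\eqr\some{P}$ and $\proj{C}{q}\eqr\some{Q}$ with $Q\octopusop\pid{p}\neq\emptyset$, then the duality condition of definition~\ref{def:comm-safety-state} holds for $P\octopusop\pid{q}$ and $Q\octopusop\pid{p}$. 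For $\pid{p}\notin\pn(C)$ this is vacuous, because $\proj{C}{q}$ can then never start with an action towards $\pid{p}$. We establish this by a small lemma: a process name occurring in an enabled action of $\proj{C}{q}$ belongs to $\pn(C)$.

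The base and communication cases go as follows.
\begin{itemize}
\item For $\cnil$, everything is empty.
\item For $\gencom\seq C$, the projections onto sender and receiver start with $\psend{q}{e}$ and $\precv{p}{x}$, giving $\{!\}$ and $\{?\}$, which are dual.
\item If exactly one of $\pid{p},\pid{q}$ participates, well-formedness ($\pid{p}\neq\pid{q}$) gives that the projection onto the participant has no enabled action towards the other. So $P\octopusop\pid{q}=\emptyset$, or $Q\octopusop\pid{p}=\emptyset$, and the condition holds vacuously. For the asymmetric condition with $Q\octopusop\pid{p}\neq\emptyset$ and $P\octopusop\pid{q}=\emptyset$, the condition also holds, since it quantifies over $\alpha\in P\octopusop\pid{q}$.
\item If neither participates, both projections equal those of $C$ and we apply the induction hypothesis.
\item Selection is analogous. $\{\oplus_{\albl}\}$ is matched by $\&_{\albl}$, which lies in the singleton branching term $\pid{p}\,\&\,\{\albl:P\}$.
\end{itemize}
Assignments, procedure calls and runtime terms either yield empty enabled sets (the $X_i$ call case) or reduce to the continuation $C$. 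For the runtime term, the projection onto $\pid{q}$ is a process call, so its enabled set is empty. For other processes the projection is that of $C$, and the induction hypothesis applies through the \textsc{coh-rt-call} premise $\mathscr{C}\propto C$.

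The conditional and choice cases carry the real difficulty. For the guard or choice process, the projection starts with a conditional, whose enabled set is empty, or with $P_1+P_2$, whose enabled set is $P_1\octopusop\pid{q}\cup P_2\octopusop\pid{q}$. The other process gets $Q'\fatsemi Q''$, where $\some{Q'}=\proj{C_1}{q}\sqcup\proj{C_2}{q}$. We need two auxiliary lemmas about enabled actions.

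The first lemma handles sequential composition: if $Q'\neq\pnil$ then $(Q'\fatsemi Q'')\octopusop\pid{p}=Q'\octopusop\pid{p}$, up to continuations inside branchings, which do not affect labels. If $Q'=\pnil$, the enabled set is that of $Q''$, and we use the induction hypothesis on $C$ together with the observation that a $\pnil$ merge forces both branch projections onto $\pid{q}$ to be $\pnil$.

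The second lemma handles merge: if $\some{P_1}\sqcup\some{P_2}=\some{R}$, then for non-branching actions $R\octopusop\pid{p}=P_1\octopusop\pid{p}=P_2\octopusop\pid{p}$. For branchings, $R\octopusop\pid{p}=P_1\octopusop\pid{p}\cup P_2\octopusop\pid{p}$, because merge unions the label sets. This is why definition~\ref{def:comm-safety-state} only asks that \emph{some} branching action be matched.

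With these two lemmas, we combine the induction hypotheses for $C_1$ and $C_2$. For the choice process, an action $\alpha$ from $P_i$ is matched by the dual in $\proj{C_i}{q}\octopusop\pid{p}\subseteq R\octopusop\pid{p}$. When neither process is the guard, both sides are merges, and the lemmas combine the matches componentwise. For $\&$-actions, we pick any $\alpha'$ coming from the branch $C_i$ whose hypothesis supplies a dual.

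We expect the main obstacle to be the mixed situation in the choice case. There, $\pid{p}$ is the choice process with $P_1+P_2$, $\pid{q}$ has a merged term, and one branch projects $\pid{q}$ to something not starting with an action towards $\pid{p}$. The emptiness hypothesis $Q\octopusop\pid{p}\neq\emptyset$ is then only available for the merged term, not for each branch. Our first attempt is the merge lemma's consequence that mergeable non-branching heads are identical: if one branch's head is non-empty towards $\pid{p}$, so is the other's, unless both are branchings from $\pid{p}$, in which case the union property suffices. The difficulty lies in handling the $\pnil$ and $\fatsemi$ boundary, where the head may come from the continuation $C$ rather than from the branches.
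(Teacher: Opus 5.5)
Your overall strategy is the same as the paper's one-line proof: structural induction on $C$, with a pointwise invariant relating $\proj{C}{p}\octopusop\pid{p}$-style enabled sets and lemmas on how $\octopusop$ interacts with $\fatsemi$ and $\sqcup$. Your base, communication, selection, assignment, call and runtime-term cases are fine. But there is a genuine hole exactly where you stop, and the ``$\pnil$/$\fatsemi$ boundary'' is not an exotic corner: it arises in every conditional and choice case. Write $R_p$ for $\proj{C_1}{p}\sqcup\proj{C_2}{p}$ (when $\pid{p}$ is neither the guard nor the choice process), and similarly $R_q$. The uncovered situations are these:
\begin{enumerate}
\item neither process is the guard, and exactly one of $R_p,R_q$ is $\pnil$;
\item $\pid{p}$ is the choice process and $R_q=\pnil$;
\item $\pid{q}$ is the choice process and $R_p=\pnil$.
\end{enumerate}
In each, one side's enabled actions come from the continuation $C$ and the other side's from the branches, so neither the hypothesis for $C$ nor those for $C_1,C_2$ applies. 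In particular, your handling of $Q'=\pnil$ (``use the induction hypothesis on $C$'') works only if $\pid{p}$'s merged term is also $\pnil$. If $R_p\neq\pnil$, then $P\octopusop\pid{q}=R_p\octopusop\pid{q}$, and the hypothesis for $C$ says nothing about it. Observing that $\proj{C_i}{q}=\some{\pnil}$ is the right first step, but none of your lemmas turns it into $\proj{C_i}{p}\octopusop\pid{q}=\emptyset$.

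The missing lemma is the following. If $\proj{C}{p}=\some{P}$, $\proj{C}{q}=\some{Q}$ and $P\octopusop\pid{q}\neq\emptyset$, then $Q\neq\pnil$. It needs its own induction on $C$, using that a defined merge with a non-$\pnil$ argument is non-$\pnil$, and that $R\fatsemi S=\pnil$ only if $R=S=\pnil$. It needs no well-formedness. With it, every mixed case is vacuous:
\begin{itemize}
\item If $R_q=\pnil$, then $\proj{C_i}{q}=\some{\pnil}$ forces $\proj{C_i}{p}\octopusop\pid{q}=\emptyset$. Hence $P\octopusop\pid{q}=\emptyset$, whether $P$ is a choice $P_1\choice P_2\seq\cdots$ or $R_p\fatsemi\cdots$.
\item If $R_p=\pnil$ while $Q\octopusop\pid{p}\neq\emptyset$ comes from the branches, then some $\proj{C_i}{q}\octopusop\pid{p}\neq\emptyset$. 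This forces $\proj{C_i}{p}\neq\some{\pnil}$, contradicting $R_p=\pnil$.
\end{itemize}
Your $\pn$-lemma does not give this by itself. You would also need $\pid{q}\in\pn(C_i)\Rightarrow\proj{C_i}{q}\neq\some{\pnil}$. That fails for a runtime term $\pid{q}:X(\pids{p}).C'$ with $\pid{q}\notin\pids{p}$, which projects onto $\pid{q}$ as its continuation, so that route must thread well-formedness.

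Two smaller points. First, when $\pid{q}$ rather than $\pid{p}$ is the choice process, matching is not componentwise. Take the choice at $\pid{q}$ between $C_1=\pid{q}.x\metaDef 1\seq\ccom{p}{1}{q}{y}$ and $C_2=\ccom{p}{1}{q}{y}$. The hypothesis for $C_1$ is vacuous, so the send in $\proj{C_1}{p}$ must be matched through the hypothesis for $C_2$. You first use your merge lemma to see that the action (or, for branchings, some branching action) also lies in $\proj{C_2}{p}\octopusop\pid{q}$. Second, your ``both-or-neither'' argument for mergeable terms relies on projected branching terms always carrying at least one label. That invariant should be stated and proved.
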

\begin{proof}
    By structural induction on the choreography $C$.
\end{proof}

\end{document}